\definecolor{deepblue}{rgb}{0,0,0.5}
\definecolor{deepred}{rgb}{0.6,0,0}
\definecolor{deepgreen}{rgb}{0,0.5,0}
\definecolor{capabilitycolor}{RGB}{39, 121, 156}
\definecolor{capabilitybg}{RGB}{225, 239, 245}
\newenvironment{btHighlight}[1][]
{\begingroup\tikzset{bt@Highlight@par/.style={#1}}\begin{lrbox}{\@tempboxa}}
{\end{lrbox}\bt@HL@box[bt@Highlight@par]{\@tempboxa}\endgroup}
\newcommand\btHL[1][]{%
  \begin{btHighlight}[#1]\bgroup\aftergroup\bt@HL@endenv%
}
\def\bt@HL@endenv{%
  \end{btHighlight}%
  \egroup
}
\newcommand{\bt@HL@box}[2][]{%
  \tikz[#1]{%
    \pgfpathrectangle{\pgfpoint{1pt}{0pt}}{\pgfpoint{\wd #2}{\ht #2}}%
    \pgfusepath{use as bounding box}%
    \node[anchor=base west, fill=light-gray,outer sep=0pt,inner xsep=1pt, inner ysep=0pt, rounded corners=3pt, minimum height=\ht\strutbox+1pt,#1]{\raisebox{1pt}{\strut}\strut\usebox{#2}};
  }%
}
\lstdefinelanguage{scalaish}{
  basicstyle=\smaller\ttfamily,
  keywords={erased, val, var, if, then, in, handle,
    return, def, match, case, new, type, trait,
     package, object, given, eff,
     pretype, class, extends, extension, infix, else,
     box, unbox, try, catch, import, throw, throws, using},
  keywordstyle=\bfseries,
  sensitive=false,
  comment=[l]{//},
  commentstyle=\color{gray},
  stringstyle=\color{gray}, 
  morestring=[b]',
  morestring=[b]",
  moredelim=**[is][\btHL]{`}{`},
}
\newcommand{\note}[2][red]{}
\newcommand{\todo}[1]{\note{TODO: #1}}
\renewcommand{\comment}[1]{}
\newcommand{\sysfsub}{$\textsf{System F}_{<:}~$}
\newcommand{\fsub}{$\textsf{F}_{<:}~$}
\newcommand\CC{$\textsf{CC}_{<:\Box}~$}
\newcounter{RuleRef}
\newcommand{\ruledef}[1]{%
  \refstepcounter{RuleRef}\textsc{#1}\label{rule:#1}}
\newcommand{\ruleref}[1]{\textsc{(\hyperref[rule:#1]{#1})}}
\newcommand{\rulerefN}[2]{\textsc{(\hyperref[rule:#1]{#2})}}
\newcommand{\judgement}[2]{{\bf\textsf{#1}} \hfill #2}
\newcommand\kw[1]{\operatorname{\textbf{\textsf{#1}}}}
\definecolor{light-gray}{gray}{0.92}
\newcommand{\greyed}[1]{\mbox{\colorbox{light-gray}{$#1$}}}
\newcommand{\rcap}{\mbox{$*$}}
\newcommand\cref[1]{\id{\textbf{ref}}_{#1}}
\renewcommand\Top{\top}
\newcommand\Unit{\id{Unit}}
\newcommand\tlam[3]{\Lambda\left[#1 <: #2\right]\!.\; #3}
\newcommand\LAM[1]{\lambda\left(#1\right)\;}
\newcommand\TLAM[1]{\Lambda\left[#1\right]\;}
\newcommand\ARR[1]{\forall(#1)}
\newcommand\TARR[1]{\forall[#1]}
\newcommand\Boxed[1]{\Box~#1}
\newcommand\Unboxed[2]{#1 \multimapinv #2}
\newcommand\Let[3]{\kw{let} #1 = #2 \kw{in} #3}
\newcommand\Ttlam[3]{\forall\left[#1 <: #2\right] \rightarrow #3}
\newcommand\dom[1]{\operatorname{dom} #1}
\newcommand\cset[1]{\{#1\}}
\newcommand\UC{\cset{\rcap}}
\newcommand\CAPT[2]{\cset{#1} \; #2}
\newcommand\CS[2]{#1 \; #2}
\newcommand\Capt[2]{#1 \; #2}
\newcommand{\OR}{\mathop{\ \ \ |\ \ \ }}
\newcommand{\ts}{\,\vdash\,}
\newcommand\sub{<:}
\newcommand\typ{:}
\newcommand\fv{\operatorname{fv}}
\newcommand\cv{\operatorname{cv}}
\newcommand\id[1]{\operatorname{\mathsf{#1}}}
\newcommand{\IF}{\kw{if}}
\newcommand\wf{\;\mbox{$\kw{wf}$}}
\newcommand{\reduces}{\longrightarrow}
\newcommand\storeof[1]{\sigma[\ #1\ ]}
\newcommand\evalof[1]{e[\ #1\ ]}
\newcommand{\gap}{\quad\quad}
\newcommand{\bbox}[1]{\mbox{\textbf{\textsf{#1}}}}
\begin{document}

\title{Scoped Capabilities for Polymorphic Effects}         


\author{Martin Odersky}
\affiliation{\institution{EPFL}}
\email{martin.odersky@epfl.ch}

\author{Aleksander Boruch-Gruszecki}
\affiliation{\institution{EPFL}}
\email{aleksander.boruch-gruszecki@epfl.ch}

\author{Edward Lee}
\affiliation{\institution{University of Waterloo}}
\email{e45lee@uwaterloo.ca}

\author{Jonathan Brachthäuser}
\affiliation{\institution{Eberhard Karls University of Tübingen}}
\email{jonathan.brachthaeuser@uni-tuebingen.de}

\author{Ondřej Lhoták}
\affiliation{\institution{University of Waterloo}}
\email{olhotak@uwaterloo.ca}


\begin{abstract}
  Type systems usually characterize the shape of values but not their
  free variables. However, many desirable safety properties could be
  guaranteed if one knew the free variables captured by values.
  We describe \CC, a calculus where such captured variables
  are succinctly represented in types, and show it can be used
  to safely implement effects and effect polymorphism via {\em scoped capabilities}.
  We discuss how the decision to track captured variables guides
  key aspects of the calculus, and show that \CC admits simple
  and intuitive types for common data structures and their typical
  usage patterns. We demonstrate how these ideas can be used to guide the
  implementation of capture checking in a practical programming language.


\end{abstract}

\begin{CCSXML}
<ccs2012>
<concept>
<concept_id>10011007.10011006.10011008</concept_id>
<concept_desc>Software and its engineering~General programming languages</concept_desc>
<concept_significance>500</concept_significance>
</concept>
<concept>
<concept_id>10003456.10003457.10003521.10003525</concept_id>
<concept_desc>Social and professional topics~History of programming languages</concept_desc>
<concept_significance>300</concept_significance>
</concept>
</ccs2012>
\end{CCSXML}




\maketitle

\renewcommand{\shortauthors}{Odersky, Boruch-Gruszecki, Lee, Brachthäuser, and Lhot\'ak}

\section{Introduction}\label{sec:introduction}

{\em Effects} are aspects of computation that go beyond describing shapes
of values and that we still want to track in types.
What exactly is modeled as an effect is a
question of language or library design. Some possibilities are:
{\it reading} or {\it writing} to mutable state outside a function,
{\it throwing an exception} to signal abnormal termination of a function,
{\it I/O} including file operations, network access, or user interaction,
{\it non-terminating} computations,
{\it suspending} a computation e.g., waiting for an event, or
{\it using a continuation} for control operations.

Despite hundreds of published papers
there is comparatively little adoption of static effect checking in programming languages.
The few designs that {\it are} widely implemented (for instance Java's checked exceptions or
monadic effects in some functional languages) are often critiqued for being both too verbose and too rigid.
The problem is not lack of expressiveness -- systems have been proposed and implemented for many
quite exotic kinds of effects. Rather, the problem is simple lack of usability
and flexibility, with particular difficulties in describing polymorphism. This leads either to overly complex definitions, or to the
requirement to duplicate large bodies of code.

Classical type-systematic approaches
fail since effects are inherently transitive along the edges of the dynamic call-graph:
A function's effects include the effects of all the functions it calls, transitively.
Traditional type and effect systems have no lightweight mechanism to describe this behavior.
The standard approach is either manual specialization along specific effect classes,
which means large-scale code duplication, or quantifiers on all definitions
along possible call graph edges to account for the possibility that some call target has an
effect, which means large amounts of boilerplate code. Arguably, it is this problem
more than any other that has so far hindered wide scale application of effect systems.

A promising alternative that circumvents this problem is to model effects
via capabilities \linebreak
\cite{marino09generic,gordon2020designing,liu2016,brachthaeuser2020effects,osvald2016gentrification}.
Capabilities exist in many forms, but we will restrict the meaning here to simple object
capabilities represented as regular program variables.
For instance, consider the following four formulations of a method in Scala\footnote{Scala 3.1 with
language import {\tt saferExceptions} enabled.} which are all morally equivalent:

\begin{lstlisting}[language=scalaish]
  def f(): T throws E
  def f(): throws[T, E]
  def f(): CanThrow[E] ?=> T
  def f()(using ct: CanThrow[E]): T
\end{lstlisting}

\noindent
The first version looks like it describes an effect: Function {\tt f} returns a {\tt T}, or it might throw exception {\tt E}.
The second version expresses this information in a type and the third version expands the type
to a kind of function type. These transitions are enabled by the following definition of type {\tt throws}:

\begin{lstlisting}[language=scalaish]
  infix type throws[T, E <: Exception] = CanThrow[E] ?=> T
\end{lstlisting}

\noindent
The type {\tt CanThrow[E] ?=> T} is a {\em context function type}. It represents functions
from {\tt CanThrow[E]} to {\tt T} that are applied implicitly to arguments synthesized by
the compiler. The definition is morally equivalent to the fourth version above, which
employs a {\tt using} clause to declare an implicit parameter {\tt ct}. The mechanics of context function types and their
expansion to implicit parameters was described by \citet{odersky2017simplicitly}.

In the fourth version, the parameter {\tt ct} acts as an (object) capability. A capability of
type {\tt CanThrow[E]} is required to perform the effect of throwing exception {\tt E}.

An important benefit of this switch from effects to capabilities is that it gives us
polymorphism for free. For instance, consider the {\tt map} function in
class {\tt List[A]}. If we wanted to take effects into account, it would look like this:

\begin{lstlisting}[language=scalaish]
  def map[B, E](f: A -> B eff E): List[B] eff E
\end{lstlisting}

\noindent
Here, {\tt A -> B eff E} is hypothetical syntax for the type of functions from {\tt A} to {\tt B}
that can have effect {\tt E}. While looking reasonable in the small, this
scheme quickly becomes unmanageable, if we consider that
every higher-order function has to be expanded that way and, furthermore, that
in an object-oriented language almost every method is a higher-order function
\cite{cook09understanding}. Indeed, many designers of programming languages with support for effect systems agree that
programmers should ideally not be confronted with explicit effect quantifiers
\citep{brachthaeuser2020effects,leijen2017type,lindleyBeBe2017}.

On the other hand, here is the type of {\tt map} if we represent effects with capabilities.
\begin{lstlisting}[language=scalaish]
  def map(f: A => B): List[B]
\end{lstlisting}
Interestingly, this is exactly the same as the type of {\tt map} in current Scala,
which does not track effects! In fact, compared to effect systems, we now decompose the space of possible effects differently:
{\tt map} is classified as pure since it does not produce any effects in its own code, but when analyzing an
application of {\tt map} to some argument, the capabilities required by the argument
are also capabilities required by the whole expression. In that sense, we get effect polymorphism for free.

The reason this works is that in an effects-as-capabilities discipline,
the type {\tt A => B} represents the type of {\em impure} function values
that can {\em close over} arbitrary effect capabilities. (Alongside, we also define a
type of pure functions {\tt A -> B} that are not allowed to close over
capabilities.)

This seems almost too good to be true, and indeed there is a catch: It now becomes
necessary to reason about capabilities captured in closures, ideally by
representing such knowledge in a type system.

\begin{figure}[h]

\begin{minipage}[t]{0.48\textwidth}\footnotesize
\begin{lstlisting}[language=scalaish]
    class TooLarge extends Exception

    def f(x: Int): Int throws TooLarge =
      if x < limit then x * x
      else throw TooLarge()

    val xs: List[Int]
    try xs.map(x => f(x))
    catch case TooLarge => Nil
\end{lstlisting}
\end{minipage}\begin{minipage}[t]{0.52\textwidth}\footnotesize
\begin{lstlisting}[language=scalaish]
      def f(x: Int)
            (using CanThrow[TooLarge]): Int =
        if x < limit then x * x
        else throw TooLarge())

      val xs: List[Int]
      try xs.map(x =>
            f(x)(using new CanThrow[TooLarge]))
      catch case TooLarge => Nil
\end{lstlisting}
\end{minipage}
\caption{Exception handling: source (left) and compiler-generated code (right)}\label{fig:try}
\end{figure}

\noindent
To see why, consider that effect capabilities are often scoped and therefore have a limited lifetime.
For instance a {\tt CanThrow[E]} capability would be
generated by a {\tt try} expression that catches {\tt E}. It is valid only as
long as the {\tt try} is executing. Figure~\ref{fig:try} shows an example of
capabilities for checked exceptions, both as source syntax on the left and with
compiler-generated implicit capability arguments on the right.
The following slight variation of this program would throw an unhandled exception since
the function {\tt f} is now evaluated only when the iterator's {\tt next} method is called,
which is after the {\tt\bf try} handling the exception has exited.

\begin{lstlisting}[language=scalaish]
    val it =
      try xs.iterator.map(f)
      catch case TooLarge => Iterator.empty
    it.next()
\end{lstlisting}

\noindent
A question answered in this paper is how to rule out {\tt Iterator}'s lazy {\tt map} statically while still allowing
{\tt List}'s strict {\tt map}. A large body of research exists that could address this problem
by restricting reference access patterns. Relevant techniques include
linear types \cite{DBLP:conf/ifip2/Wadler90},
rank 2 quantification \cite{launchbury-sabry:icfp97},
regions \cite{tofte1997region, grossman2002regions},
uniqueness types \cite{barendsenUniquenessTypingFunctional1996},
ownership types \cite{clarkeOwnershipTypesFlexible1998a,nobleFlexibleAliasProtection1998},
and second class values \cite{osvald2016gentrification}.
A possible issue with many of these approaches is their relatively high notational overhead, in particular
when dealing with polymorphism.

The approach we pursue here is different. Instead of restricting certain access patterns a priori, we focus
on describing what capabilities are possibly captured by values of a type.
At its core there are the following two interlinked concepts:

\newtheorem*{definition*}{Definition}

\begin{itemize}
  \item A {\em capturing type} is of the form $\{c_1, \ldots, c_n\}\;T$ where $T$ is a type
  and $\{c_1, \ldots, c_n\}$ is a {\em capture set} of capabilities.
  \item A {\em capability} is a parameter or local variable that has as
  type a capturing type with non-empty capture set. We call such
  capabilities {\em tracked} variables.
\end{itemize}

\noindent
Every capability gets its authority from some other, more sweeping capabilities which it captures.
The most sweeping capability, from which ultimately all others are derived, is ``$*$'', the {\em universal capability}.

As an example how capabilities are defined and used, consider a typical
{\em try-with-resources} pattern:

\begin{lstlisting}[language=scalaish]
  def usingFile[T](name: String, op: ({*} OutputStream) => T): T =
    val f = new FileOutputStream(name)
    val result = op(f)
    f.close()
    result

  val xs: List[Int] = ...
  def good = usingFile("out", f => xs.foreach(x => f.write(x)))
  def fail =
    val later = usingFile("out",
          f => (y: Int) => xs.foreach(x => f.write(x + y)))
    later(1)
\end{lstlisting}

\noindent
The {\tt usingFile} method runs a given operation {\tt op} on a freshly created
file, closes the file, and returns the operation's result. The method
enables an effect (writing to a file) and limits its validity (to until the file is closed).
Function {\tt good}
invokes {\tt usingFile} with an operation that writes each element of a given
list {\tt xs} to the file. By contrast, function {\tt fail} represents an illegal usage:
It invokes {\tt usingFile} with an operation that returns a function that, when invoked,
will write list elements to the file. The problem is that the writing happens
in the application {\tt later(1)} when the file has already been closed.

We can accept the first usage and reject the second by marking the
output stream passed to {\tt op} as a capability. This is done by prefixing its type
with {\tt \{*\}}. Our prototype implementation of capture checking will then
reject the second usage with an error message.

\noindent
This example used a capability parameter that was directly derived from
{\tt *}. But capabilities can also be derived from other non-universal
capabilities. For instance:
{\small
\begin{lstlisting}[language=scalaish]
  def usingLogFile[T](f: {*} OutputStream, op: ({f} Logger) => T): T =
    op(Logger(f))
\end{lstlisting}}
\noindent
The {\tt usingLogFile} method
takes an output stream (which is a capability) and an operation, which
gets passed a {\tt Logger}. The {\tt Logger} capability is
derived from the output stream capability, as can be seen from
its type {\tt \{f\} Logger}.

This paper develops a {\em capture calculus}, \CC, as a foundational type system that
allows reasoning about scoped capabilities. By sketching a prototype language design based
on this calculus, we argue that it is expressive enough to support a
wide range of usage patterns with very low notational overhead.
The paper makes the following specific contributions.

\begin{itemize}
\item We define a simple yet expressive type system for tracking captured capabilities in types. The calculus extends \fsub with capture sets of capabilities.
\item We prove progress and preservation of types relative to a small-step evaluation semantics. We also prove a capture prediction lemma that states that capture sets in types over-approximate
 captured variables in runtime values.
\item We report on a prototype language design in the form of an extension of Scala 3 and demonstrate
 its practicality in a number of examples that discuss both the embedding in fundamental datatypes and effect checking.
\end{itemize}
\noindent
The presented design is at the same time simple in theory and concise and
flexible in its practical application. We demonstrate that the following elements
are essential for achieving good usability:

\begin{itemize}
  \item Use reference-dependent typing, where a formal function parameter stands for the potential references captured by its argument \cite{oderskyEA2021safer, brachthaeuser2022effects}. This avoids the need to introduce separate binders for capabilities or effects. Technically, this means that references (but not general terms) can form part of types as
  members of capture sets. A similar approach is taken in the path-dependent typing discipline of DOT
  \cite{aminEssenceDependentObject2016,DBLP:conf/oopsla/RompfA16} and by reachability types for alias checking \cite{bao2021reachability}.
  \item Employ a subtyping discipline that mirrors subsetting of capabilities and that allows capabilities
   to be refined or abstracted. Subtyping of capturing types relies on a new notion of {\em subcapturing} that
   encompasses both subsetting (smaller capability sets are more specific than larger ones) and derivation
   (a capability singleton set is more specific than the capture set of the capability's type). Both dimensions
   are essential for a flexible modelling of capability domains.
\item Limit propagation of capabilities in instances of generic types where they cannot be accessed directly.
   This is achieved by boxing types when they enter a generic context and unboxing on every use site \cite{brachthaeuser2022effects}.
\end{itemize}
\noindent
The version of \CC presented here is similar to a system that was originally
proposed to make exception checking safe \cite{oderskyEA2021safer}.
Their system conjectured progress and preservation properties, but
did not have proofs. Compared to their system, the version in this paper
presents a fully worked-out meta theory with proofs of type soundness as
well as a semantic characterization of capabilities.
There are also some minor differences in the operational semantics, which
were necessary to make a progress theorem go through. Finally, we present
several use cases outside of exception handling, demonstrating the broad
applicability of the calculus.

Whereas many of our motivating examples describe applications in effect checking, the formal treatment
presented here does not mention effects. In fact, the effect domains are intentionally kept open
since they are orthogonal to the aim of the paper. Effects could be exceptions, file operations or
region allocations, but also algebraic effects, IO, or any sort of monadic effects. To express more
advanced control effects, one usually needs to add continuations to the operational semantics,
or use an implicit translation to the continuation monad. In short, capabilities can delimit
what effects can be performed at any point in the program but they by themselves don’t
perform an effect  \cite{marino09generic,gordon2020designing,liu2016,brachthaeuser2020effects,osvald2016gentrification}. For that, one needs a library or a runtime system that would be added
as an extension of \CC. Since \CC is intended to work with all such effect extensions,
we refrain from adding a specific extension to its operational semantics.

The rest of this paper is organized as follows.
Section~\ref{sec:informal} explains and motivates the core elements
of our calculus.
Section~\ref{sec:cc-calc} presents \CC. Section~\ref{sec:meta} lays out its meta-theory. Section~\ref{sec:examples}
illustrates the expressiveness of typing disciplines based on the calculus in examples. Section~\ref{sec:related} discusses
related work and Section~\ref{sec:conc} concludes.



%
%
%

\section{Informal Discussion}
\label{sec:informal}

This section motivates and discusses some of the key aspects of capture checking.
All examples are written in an experimental language extension of Scala 3 and
were compiled with our prototype implementation of a capture checker \cite{odersky2022cc-experiment}.

\subsection{Capability Hierarchy}

We have seen in the intro that every capability except $*$ is created from some
other capabilities which it retains in the capture set of its type. Here is an
example that demonstrates this principle:
\begin{lstlisting}[language=scalaish]
    class FileSystem

    class Logger(fs: {*} FileSystem):
      def log(s: String): Unit = ... // Write to a log file, using `fs`

    def test(fs: {*} FileSystem): {fs} LazyList[Int] =
      val l: {fs} Logger = new Logger(fs)
      l.log("hello world!")
      val xs: {l} LazyList[Int] =
        LazyList.from(1)
          .map { i =>
            l.log(s"computing elem # $i")
            i * i
          }
      xs
\end{lstlisting}
Here, the {\tt test} method takes a {\tt FileSystem} as a parameter. {\tt fs} is a capability since its type has a non-empty capture set.
The capability is passed to the {\tt Logger} constructor
and retained as a field in class {\tt Logger}. Hence, the local variable {\tt l} has type
{\tt \{fs\} Logger}: it is a {\tt Logger} which retains the {\tt fs} capability.

The second variable defined in {\tt test} is {\tt xs}, a lazy list that is obtained from
{\tt LazyList.from(1)} by logging and mapping consecutive numbers. Since the list is lazy,
it needs to retain the reference to the logger {\tt l} for its computations. Hence, the
type of the list is {\tt \{l\} LazyList[Int]}. On the other hand, since {\tt xs} only logs but does
not do other file operations, it retains the {\tt fs} capability only indirectly. That's why
{\tt fs} does not show up in the capture set of {\tt xs}.

Capturing types come with a subtype relation where types with ``smaller'' capture sets are subtypes of types with larger sets
(the {\em subcapturing} relation is defined in more detail below).
If a type {\tt T} does not have a capture set, it is called {\em pure}, and is a subtype of
any capturing type that adds a capture set to {\tt T}.

\subsection{Function Types}

The function type {\tt A => B} stands for a function that can capture arbitrary capabilities. We call such functions
{\em impure}. By contrast, the new single arrow function type {\tt A -> B} stands for a function that
cannot capture any capabilities, or otherwise said, is {\em pure}.
One can add a capture set in front of an otherwise pure function.
For instance, {\tt \{c, d\} A -> B} would be a function that can capture
capabilities {\tt c} and {\tt d}, but no others.

The impure function type {\tt A => B} is treated as an alias for {\tt \{*\} A -> B}.
That is, impure functions are functions that can capture anything.

Function types and captures both associate to the right, so
\lstinline[language=scalaish]!{c} A -> {d} B -> C!
is the same as
\lstinline[language=scalaish]!{c} (A -> {d} (B -> C))!.

Contrast with
\lstinline[language=scalaish]!({c} A) -> ({d} B) -> C!
which is a curried pure function over argument types that can capture {\tt c} and {\tt d}, respectively.

\paragraph{Note} The distinctions between pure vs impure function types do not apply to methods
(which are defined using \lstinline[language=scalaish]!def!). In fact, since methods are not values in Scala, they never
capture anything directly. References to capabilities in a method are instead counted
in the capture set of the enclosing object.

\subsection{Capture Checking of Closures}

If a closure refers to capabilities in its body, it captures these capabilities in its type. For instance, consider:
\begin{lstlisting}[language=scalaish]
      def test(fs: FileSystem): {fs} String -> Unit =
        (x: String) => Logger(fs).log(x)
\end{lstlisting}
Here, the body of {\tt test} is a lambda that refers to the capability {\tt fs}, which means that {\tt fs} is retained in the lambda.
Consequently, the type of the lambda is {\tt \{fs\} String -> Unit}.

\paragraph{Note} On the term level, function values are always written with {\tt =>} (or {\tt ?=>} for context functions). There is no syntactic
distinction for pure {\em vs} impure function values. The distinction is only made in their types.

\smallskip\noindent
A closure also captures all capabilities that are captured by the functions
it calls. For instance, in

\begin{lstlisting}[language=scalaish]
    def test(fs: FileSystem) =
      def f() = (x: String) => Logger(fs).log(x)
      def g() = f()
      g
\end{lstlisting}
the result of {\tt test} has type {\tt \{fs\} String -> Unit} even though function {\tt g} itself does not refer to {\tt fs}.

\subsection{Subtyping and Subcapturing}

Capturing influences subtyping. As usual we write $T_1 <: T_2$ to express that the type
$T_1$ is a subtype of the type $T_2$, or equivalently, that $T_1$ conforms to $T_2$. An
analogous {\em subcapturing} relation applies to capture sets. If $C_1$ and $C_2$ are capture sets,
we write $C_1 <: C_2$ to express that $C_1$ {\em is covered by} $C_2$, or, swapping the operands, that $C_2$ {\em covers} $C_1$.

Subtyping extends as follows to capturing types:

\begin{itemize}
  \item Pure types are subtypes of capturing types. That is, $T <: C\,T$, for any type $T$, capturing set $C$.
  \item For capturing types, smaller capture sets produce subtypes: $C_1\,T_1 <: C_2\,T_2$ if
   $C_1 <: C_2$ and $T_1 <: T_2$.
\end{itemize}

\noindent
A subcapturing relation $C_1 <: C_2$ holds if $C_2$ {\em accounts for} every element $c$ in $C_1$.
This means one of the following two conditions must be true:

\begin{itemize}
  \item $c \in C_2$,
  \item $c$'s type has capturing set $C$ and $C_2$ accounts for every element of $C$
     (that is, $C <: C_2$).
\end{itemize}

\noindent
{\bf Example.} Given

\begin{lstlisting}[language=scalaish]
      fs: {*} FileSystem
      ct: {*} CanThrow[Exception]
      l : {fs} Logger
\end{lstlisting}
we have
{\small
\begin{verbatim}
      {l}  <: {fs}     <: {*}
      {fs} <: {fs, ct} <: {*}
      {ct} <: {fs, ct} <: {*}
\end{verbatim}
}
\noindent
The set consisting of the root capability {\tt \{*\}} covers every other capture set. This is
a consequence of the fact that, ultimately, every capability is created from {\tt *}.

\subsection{Capture Tunneling}
Next, we discuss how type-polymorphism interacts with reasoning about capture.
To this end, consider the following simple definition of a {\tt Pair} class:
\begin{lstlisting}[language=scalaish]
      class Pair[+A, +B](x: A, y: B):
        def fst: A = x
        def snd: B = y
\end{lstlisting}
What happens if we pass arguments to the constructor of {\tt Pair} that capture capabilities?
\begin{lstlisting}[language=scalaish]
      def x: {ct} Int -> String
      def y: {fs} Logger
      def p = Pair(x, y)
\end{lstlisting}
Here the arguments {\tt x} and {\tt y} close over different capabilities {\tt ct} and {\tt fs},
which are assumed to be in scope.
So what should the type of {\tt p} be? Maybe surprisingly, it will be typed as:
\begin{lstlisting}[language=scalaish]
      def p: {} Pair[{ct} Int -> String, {fs} Logger] = Pair(x, y)
\end{lstlisting}
That is, the outer capture set is empty and does neither mention {\tt ct} nor {\tt fs},
even though the value {\tt Pair(x, y)} \emph{does} capture them.
So why don't they show up in its type at the outside?

While assigning {\tt p} the capture set {\tt \{ct, fs\}} would be sound, types would quickly grow inaccurate and unbearably verbose.
To remedy this, \CC{} performs \emph{capture tunneling}. Once a type variable is instantiated to a capturing type, the
capture is not propagated beyond this point. On the other hand, if the type variable is instantiated
again on access, the capture information ``pops out'' again.

Even though {\tt p} is technically untracked because its
capture set is empty, writing {\tt p.fst} would record a reference to the captured capability {\tt ct}.
So if this access was put in a closure, the capability would again form part of the outer capture set. \emph{E.g.},
\begin{lstlisting}[language=scalaish]
      () => p.fst : {ct} () -> {ct} Int -> String
\end{lstlisting}
In other words, references to capabilities ``tunnel through'' generic instantiations---from creation to access; they do not affect the capture set of the enclosing generic data constructor applications.
As mentioned above, this principle plays an important part in making capture checking concise and practical.
To illustrate, let us take a look at the following example:
\begin{lstlisting}[language=scalaish]
      def mapFirst[A,B,C](p: Pair[A,B], f: A => C): Pair[C,B] =
        Pair(f(p.x), p.y)
\end{lstlisting}
Relying on capture tunneling, neither the types of the parameters to {\tt mapFirst}, nor its result type need to be annotated with capture sets.
Intuitively, the capture sets do not matter for {\tt mapFirst}, since parametricity forbids it from inspecting the actual values inside the pairs.
If not for capture tunneling, we would need to annotate {\tt p} as {\tt \{*\} Pair[A,B]}, since both {\tt A} and {\tt B} and through them, {\tt p} can capture arbitrary capabilities.
In turn, this means that for the same reason, without tunneling we would also have {\tt \{*\} Pair[C,B]} as the result type.
This is of course unacceptably inaccurate.

Section~\ref{sec:cc-calc} describes the foundational theory on which capture checking is based.
It makes tunneling explicit through so-called {\em box} and
{\em unbox} operations. Boxing hides a capture set and unboxing recovers it.
The capture checker inserts virtual box and unbox operations based on actual and
expected types similar to the way the type checker inserts implicit conversions.
Boxing and unboxing has no runtime effect, so the insertion of these operations
is only simulated, but not kept in the generated code.

\comment{
Capture tunneling is in motivation analogous to effect tunneling \cite{zhang2016accepting,zhang2019abstraction} but
it applies to retained capabilities instead of executed effects and
it uses a different mechanism.
In a sense, capture tunneling works on a level above effect tunneling.
We get effect tunneling already from using capabilities. But where
effect tunneling requires capabilties to be second-class values,
we allow them to be returned in data.
Capture tunneling lets us omit capture annotations for generic portions of
such types and their constructors.\todo{Move to related?}}

\subsection{Escape Checking}

Following the principle of object capabilities, the universal capability {\tt *} should conceptually only be available as a parameter to the main program. Indeed, if it was available everywhere, capability checking would be undermined since one could mint new capabilities
at will. In line with this reasoning, some capture sets are restricted and must not contain the universal capability.

Specifically, if a capturing type is an instance of a type variable, that capturing type
is not allowed to carry the universal capability {\tt \{*\}}.\footnote{This follows since
type variables range over pure types, so {\tt *} must appear under a box. But rule \ruleref{Box} in
Figure~\ref{fig:cc-rules} restricts variables in boxed capture sets to be declared in the enclosing environment,
which does not hold for {\tt *}.}
There is a connection to tunneling here.
The capture set of a type has to be present in the environment when a type is instantiated from
a type variable. But {\tt *} is not itself available as a global entity in the environment. Hence,
this should result in an error.

Using this principle, we can show why the introductory example in Section~\ref{sec:introduction} reported an error. To recall, function {\tt usingFile} was declared like this:
\begin{lstlisting}[language=scalaish]
  def usingFile[T](name: String, op: ({*} FileOutputStream) => T): T = ...
\end{lstlisting}
The capture checker rejects the illegal definition of {\tt later}
{\small}\begin{lstlisting}[language=scalaish]
  val later = usingFile("out",
          f => (y: Int) => xs.foreach(x => f.write(x + y)))
\end{lstlisting}
with the following error message
\smallskip
{\small\begin{verbatim}
 |    val later = usingFile("out", f => (y: Int) => xs.foreach(x => f.write(x + y)))
 |                ^^^^^^^^^^^^^^^^^^^^^^^^^^^^^^^^^^^^^^^^^^^^^^^^^^^^^^^^^^^^^^^^^^
 |The expression's type {*} Int -> Unit is not allowed to capture the root capability `*`
 |This usually means that a capability persists longer than its allowed lifetime.
\end{verbatim}}
\smallskip
\noindent
This error message was produced by the following reasoning steps:

\begin{itemize}
 \item Parameter {\tt f} has type {\tt \{*\} FileOutputStream}, which makes it a capability.
 \item Therefore, the type of the expression
   \begin{lstlisting}[language=scalaish]
    (y: Int) => xs.foreach(x => f.write(x + y))
   \end{lstlisting}
   is {\tt \{f\} Int -> Unit}.
 \item Consequently, we assign the whole closure passed to {\tt usingFile} the
  dependent function type
   {\tt (f: \{*\} FileOutputStream) -> \{f\} () -> Unit}.
 \item The expected type of the closure is a simple, parametric, impure function type\newline
   {\tt (\{*\} FileOutputStream) => T},
   for some instantiation of the type variable {\tt T}.
 \item We cannot instantiate {\tt T} with {\tt\{f\} () -> Unit} since the expected
   function type is non-dependent. The smallest supertype that matches the expected type is thus\newline    {\tt (\{*\} FileOutputStream) => \{*\} Int -> Unit}.
 \item Hence, the type variable {\tt T} is instantiated to {\tt \{*\} () -> Unit},
   which is not allowed and causes the error.
\end{itemize}
\noindent

\subsection{Escape Checking of Mutable Variables}

Another way one could try to undermine capture checking would be to
assign a closure with a local capability to a global variable. For instance
like this\footnote{Mutable variables are not covered by the formal treatment of \CC. We
include the discussion anyway to show that escape checking can be generalized
to scope extrusions separate from result values.}:
\begin{lstlisting}[language=scalaish]
      var loophole: {*} () -> Unit = () => ()
      usingFile("tryEscape", f =>
        loophole = () => f.write(0)
      }
      loophole()
\end{lstlisting}
\noindent
We prevent such scope extrusions by imposing the restriction that mutable variables
cannot have types with universal capture sets.

One also needs to prevent returning or assigning a closure with a local capability in an argument of a parametric type. For instance, here is a
slightly more refined attack:
\begin{lstlisting}[language=scalaish]
      val sneaky = usingLogFile { f => Pair(() => f.write(0), 1) }
      sneaky.fst()
\end{lstlisting}
At the point where the {\tt Pair} is created, the capture set of the first argument is {\tt \{f\}}, which
is OK. But at the point of use, it is {\tt \{*\}}: since {\tt f} is no longer in scope
we need to widen the type to a supertype that does not mention it ({\em c.f.} the
explanation of avoidance in Section~\ref{sec:typing-rules}).
This causes an error, again, as the universal capability is not permitted to be
in the unboxed form of the return type ({\em c.f.} the precondition of \ruleref{Unbox} in
Figure~\ref{fig:cc-rules}).

\begin{figure*}[h]
  \vspace{0.3em}
  \begin{center}\rule{0.9\textwidth}{0.4pt}\end{center}
  \vspace{0.3em}

  $\begin{array}[t]{llll@{\hspace{8mm}}l}
    \bbox{Variable} & \multicolumn{2}{l}{x, y, z} \\
    \bbox{Type Variable} & \multicolumn{2}{l}{X, Y, Z} \\[1em]
    \bbox{Value} & v, w & ::= & \lambda(x : T)t
        \OR \lambda[X \sub S]t
        \OR \mbox{\colorbox{light-gray}{$\Boxed{x}$}} \\

    \bbox{Answer} & a & ::= & v \OR x \\

    \bbox{Term} &s, t& ::= & a
          \OR  x\,y
          \OR  x\,[S]
          \OR  \Let x s t
          \OR  \greyed{\Unboxed{C}{x}}\\

    \bbox{Shape Type} & S & ::= & X
          \OR  \Top
          \OR  \forall(x : U) T
          \OR  \forall[X \sub S] T
          \OR  \greyed{\Boxed{T}} \\

    \bbox{Type} & T, U & ::= & S \OR \greyed{\Capt{C}{S}} \\

    \bbox{Capture Set} & \greyed{C} & ::= & \greyed{\{x_1, \ldots, x_n\}}
    \end{array}$

\caption{\label{fig:syntax} Syntax of System \CC}
\vspace{0.3em}
\begin{center}\rule{0.9\textwidth}{0.4pt}\end{center}
\vspace{0.3em}
\end{figure*}

\section{The \CC calculus}
\label{sec:cc-calc}

The syntax of \CC is given in Figure~\ref{fig:syntax}. In short, it describes
a dependently typed variant of System \fsub in monadic normal form (MNF) with capturing types and boxes.

\paragraph{\bf Dependently typed:} Types may refer to term variables in their capture sets, which
introduces a simple form of (variable-)dependent typing. As a consequence, a function's
result type may now refer to the parameter in its capture set. To be able to express this,
the general form of a function type $\forall(x : U) T$ explicitly names the parameter $x$.
We retain the non-dependent syntax
$U \rightarrow T$ for function types as an abbreviation if the parameter is not mentioned in the result
type $T$.

Dependent typing is attractive since it means that we can refer to object capabilities directly
in types, instead of having to go through auxiliary region or effect variables. We thus avoid
clutter related to quantification of such auxiliary variables.

\paragraph{\bf Monadic normal form:} The term structure of \CC requires operands of applications to be
variables. This does not constitute a loss of expressiveness, since a general application
$t_1\,t_2$ can be expressed as $\Let{x_1}{t_1}{\Let{x_2}{t_2}{x_1\,x_2}}$.
This syntactic convention has advantages for variable-dependent typing. In particular, typing function application in such a calculus
requires substituting actual arguments for formal parameters. If arguments are
restricted to be variables, these substitutions are just variable/variable renamings,
which keep the general structure of a type. If arguments were arbitrary terms,
such a substitution would in general map a type to something that
was not syntactically a type. Monadic normal form \cite{hatcliff-danvy:popl94} is a slight generalization of
the better-known A-normal form (ANF) \cite{Sabry93reasoningabout} to allow arbitrary nesting
of let expressions. We use a here a variant of MNF where applications are over variables instead
of values.

A similar restriction to MNF was employed in DOT \cite{aminEssenceDependentObject2016},
the foundation of Scala's object model, for the same reasons. The restriction is
invisible to source programs, which can still be in direct style. For instance, the Scala
compiler selectively translates a source expression in direct style to MNF if a non-variable
argument is passed to a dependent function. Type checking then takes place on the translated version.

\paragraph{\bf Capturing Types:} The types in \CC are stratified as {\em shape types} $S$
and regular types $T$. Regular types can be shape types or capturing types $\Capt{\{x_1, \ldots, x_n\}}{S}$.
Shape types are made up from the usual type constructors in \fsub plus boxes. We freely use
shape types in place of types, assuming the equivalence $\{\}\,S \equiv S$.

\paragraph{\bf Boxes:}
Type variables $X$ can be bounded or instantiated only with shape types, not with regular types.
To make up for this restriction, a regular type $T$ can be encapsulated in a shape type by
prefixing it with a box operator $\Boxed{T}$. On the term level,
$\Box~x$ injects a variable into a boxed type.
A variable of boxed type is unboxed using the syntax $\Unboxed{C}{x}$ where $C$
is the capture set of the underlying type of $x$. We have seen in Section~\ref{sec:informal}
that boxing and unboxing allow a kind of capability tunneling
by omitting capabilities when values of parametric types are constructed and charging these capabilities instead
at use sites.

\paragraph{\bf System \fsub:} We base \CC on a standard type system that supports
the two principal forms of polymorphism, subtyping and universal.

Subtyping
comes naturally with capabilities in capture sets. First, a type capturing
fewer capabilities is naturally a subtype of a type capturing more capabilities,
and pure types are naturally subtypes of capturing types. Second, if capability
$x$ is derived from capability $y$, then a type capturing $x$ can be seen
as a subtype of the same type but capturing $y$.

Universal polymorphism poses
specific challenges when capture sets are introduced which are addressed in \CC
by the stratification into shape types and regular types and the box/unbox
operations that map between them.

Note that the only form of term dependencies in \CC relate to capture sets in types. If
we omit capture sets and boxes, the calculus is equivalent to standard \fsub, despite
the different syntax. We highlight in the figures the essential additions wrt \fsub with a grey background.

\CC is intentionally meant to be a small and canonical core calculus that does not cover
higher-level features such as records, modules, objects, or classes. While these features
are certainly important, their specific details are also somewhat more varied
and arbitrary than the core that's covered. Many different systems
can be built on \CC, extending it with various constructs to organize code and
data on higher-levels.

\paragraph{\bf Capture Sets}
Capture sets $C$ are finite sets of variables of the form $\cset{x_1, \hdots,
x_n}$. We understand $\rcap$ to be a special variable that can appear in
capture sets, but cannot be bound in $\Gamma$. We write $C \setminus x$
as a shorthand for $C \setminus \cset{x}$.

Capture sets of closures are determined using a function $\cv$ over terms.

\begin{definition*}[Captured Variables]
{\em The captured variables $\cv(t)$ of a term $t$ are given as follows.}
\[\begin{array}{lcl}
    \cv(\lambda(x : T)t) &=& \cv(t) \backslash x \\
    \cv(\lambda[X \sub S]t) &=& \cv(t) \\
    \cv(x) &=& \{x\} \\
    \cv(\Let{x}{v}{t}) &=& \cv(t) \gap\gap\gap \kw{if} x \notin \cv(t) \\
    \cv(\Let{x}{s}{t}) &=& \cv(s) \cup \cv(t)  \backslash x\\
    \cv(x\,y) &=& \{x, y\} \\
    \cv(x[S]) &=& \{x\} \\
    \cv(\Boxed{x}) &=& \{\} \\
    \cv(\Unboxed{C}{x}) &=& C \cup \{x\}
\end{array}\]
\end{definition*}

\noindent
The definitions of captured and free variables of a term are very similar, with the following three differences:

\begin{enumerate}
\item Boxing a term $\Boxed{x}$ obscures $x$ as a captured variable.
\item Dually, unboxing a term $\Unboxed{C}{x}$ counts the variables in $C$ as
  captured.
\item In an evaluated let binding $\Let{x}{v}{t}$, the captured variables of $v$
  are counted only if $x$ is a captured variable of $t$.
\end{enumerate}
The first two rules encapsulate the essence of box-unbox pairs:
Boxing a term obscures its captured variable and makes it necessary
to unbox the term before its value can be accessed;
unboxing a term presents variables that were obscured when boxing.


\begin{figure*}
\small

  \judgement{Subcapturing}{\fbox{${\Gamma \ts C \sub C}$}}

  \vspace{0.4em}

  \begin{minipage}{0.3\textwidth}
    \infrule[\ruledef{sc-elem}]{\greyed{x \in C}}{\greyed{\Gamma \ts \cset{x} \sub C}}
  \end{minipage}
  \begin{minipage}{0.5\textwidth}
  \infrule[\ruledef{sc-set}]{%
    \greyed{\Gamma \ts \cset{x_1} \sub C \ \ldots\ \Gamma \ts \cset{x_n} \sub C}
  }{%
    \greyed{\Gamma \ts \cset{x_1, \ldots, x_n} \sub C}}
  \end{minipage}

  \vspace{0.4em}

  \begin{minipage}{0.35\textwidth}
  \infrule[\ruledef{sc-var}]{%
    \greyed{x : C'\;S \in \Gamma \quad
    \Gamma \ts C' \sub C}
   }{%
     \greyed{\Gamma \ts \cset{x} \sub C}}
  \end{minipage}


\vspace{0.3em}
\begin{center}\rule{0.9\textwidth}{0.4pt}\end{center}
\vspace{0.3em}

\judgement{Subtyping}{\fbox{$\Gamma \ts T \sub T$}}
\vspace{0.8em}

  \begin{minipage}{0.5\textwidth}
  \infax[\ruledef{refl}]{%
    \Gamma \ts T \sub T
  }

  \vspace{0.8em}
  \infrule[\ruledef{tvar}]{%
    X \sub S \in \Gamma
  }{%
    \Gamma \ts X \sub S}

  \vspace{0.8em}
  \infrule[\ruledef{fun}]{%
    \Gamma \ts U_2 \sub U_1 \gap
    \Gamma, x: U_2 \ts T_1 \sub T_2
  }{%
    \Gamma \ts \forall(x: U_1)T_1 \sub \forall(x: U_2)T_2}

  \vspace{0.8em}
  \infrule[\ruledef{capt}]{%
    \greyed{\Gamma \ts C_1 \sub C_2 \gap
    \Gamma \ts S_1 \sub S_2}
  }{%
    \greyed{\Gamma \ts C_1\;S_1 \sub C_2\;S_2}}
  \end{minipage}\begin{minipage}{0.5\textwidth}

  \infrule[\ruledef{trans}]{%
    \Gamma \ts T_1 \sub T_2 \gap \Gamma \ts T_2 \sub T_3
  }{%
    \Gamma \ts T_1 \sub T_3
  }
  \vspace{0.8em}

  \infax[\ruledef{top}]{%
    \Gamma \ts S \sub \top}

  \vspace{0.8em}
  \infrule[\ruledef{tfun}]{%
    \Gamma \ts S_2 \sub S_1 \gap
    \Gamma, x: S_2 \ts T_1 \sub T_2
  }{%
    \Gamma \ts \forall[X \sub S_1]T_1 \sub \forall[X \sub S_2]T_2}

  \vspace{0.8em}
  \infrule[\ruledef{boxed}]{%
    \greyed{\Gamma \ts T_1 \sub T_2}
  }{%
    \greyed{\Gamma \ts \Boxed{T_1} \sub \Boxed{T_2}}}

  \end{minipage}


\vspace{0.3em}
\begin{center}\rule{0.9\textwidth}{0.4pt}\end{center}
\vspace{0.3em}

  \judgement{\textsf{Typing}}{\fbox{$\Gamma \ts t \typ T$}}

  \vspace{0.8em}
  \begin{minipage}{0.5\textwidth}
  \infrule[\ruledef{var}]{%
    x: {\greyed C}{S} \in \Gamma
  }{%
    {\Gamma \ts x \typ \greyed{\{x\}} S}}

  \vspace{0.8em}
  \infrule[\ruledef{abs}]{%
    \Gamma, x: U \ts t: T
    \gap
    \Gamma \ts U \wf
  }{%
    \Gamma \ts \lambda(x: U)t \typ \greyed{\cv(t)\backslash x}{\forall(x: U)T}}

  \vspace{0.8em}
  \infrule[\ruledef{app}]{%
    \Gamma \ts x \typ \greyed C \forall(z: U)T \gap
    \Gamma \ts y \typ U
  }{%
    \Gamma \ts x\,y \typ [z := y]T}

  \vspace{0.8em}
  \infrule[\ruledef{box}]{%
    \greyed{\Gamma \ts x \typ C\;S \gap
    C \subseteq \dom(\Gamma)}
  }{%
    \greyed{\Gamma \ts \Box~x \typ \Box~C\;S}}

  \end{minipage}\begin{minipage}{0.5\textwidth}
  \infrule[\ruledef{sub}]{%
    \Gamma \ts t \typ T
    \gap
    \Gamma \ts T \sub U
    \gap
    \Gamma \ts U \wf
  }{%
    \Gamma \ts t \typ U }

  \vspace{0.8em}
  \infrule[\ruledef{tabs}]{%
    \Gamma, X \sub S \ts t: T
    \gap
    \Gamma \ts S \wf
  }{%
    \Gamma \ts \lambda[X \sub S]t \typ \greyed{\cv(t)} \forall[X \sub S]T}

  \vspace{0.8em}
  \infrule[\ruledef{tapp}]{%
    \Gamma \ts x \typ \greyed C \forall[X \sub S]T
  }{%
    \Gamma \ts x\,[S] \typ [X := S]T}

  \vspace{0.8em}
  \infrule[\ruledef{unbox}]{%
    \greyed{\Gamma \ts x \typ \Box~C\;S \gap
    C \subseteq \dom(\Gamma)}
  }{%
    \greyed{\Gamma \ts \Unboxed C x \typ C\;S}}
  \end{minipage}

  \vspace{0.8em}
  \begin{minipage}{0.5\textwidth}
    \infrule[\ruledef{let}]{%
    \Gamma \ts s: T \gap \Gamma, x: T \ts t: U \gap x \notin \fv(U)
  }{%
    \Gamma \ts \Let x s t : U}
  \end{minipage}

  \vspace{0.3em}
  \begin{center}\rule{0.9\textwidth}{0.4pt}\end{center}
  \vspace{0.3em}

  \judgement{Evaluation}{\fbox{$\Gamma \ts t \reduces t'$}}
  \vspace{0.8em}
  \[
  \begin{array}{lclll}
    \storeof{\evalof{x\,y}}  &\reduces& \storeof{\evalof{[z := y]t}} & \gap \IF\ \sigma(x) = \lambda(z: T)t & \mbox{\sc (apply)} \\
    \storeof{\evalof{x\,[S]}}   &\reduces& \storeof{\evalof{[X := S]t}} & \gap \IF\ \sigma(x) = \lambda[X<:S']t & \mbox{\sc (tapply)} \\
    \storeof{\evalof{\greyed{\Unboxed{C}{x}}}} &\reduces& \storeof{\evalof{\greyed y}} & \gap \IF\ \sigma(x) = \greyed{\Boxed{y}} & \mbox{\sc (open)} \\
    \storeof{\evalof{\Let x y t}} &\reduces& \storeof{\evalof{[x := y]t}} && \mbox{\sc (rename)} \\
    \storeof{\evalof{\Let x v t}} &\reduces& \storeof{\Let x v {\evalof{t}}} & \gap \IF\ e \neq [\,] & \mbox{\sc (lift)} \\
  \end{array}
  \]

  \begin{minipage}{\textwidth}
  {\bf \ \ \ where}
  $\begin{array}[t]{llll@{\hspace{8mm}}l}
  \bbox{Store context} & \sigma& ::= & \multicolumn{2}{l}{[\,] \OR \Let x v \sigma} \\
  \bbox{Eval context}  & e &::=& \multicolumn{2}{l}{[\,] \OR \Let x e t}
  \end{array}$
  \end{minipage}

\caption{\label{fig:cc-rules} Typing and Evaluation Rules of System \CC}
\end{figure*}

\medskip\noindent
Figure~\ref{fig:cc-rules} presents typing and evaluation rules for \CC. There are
four main sections on subcapturing, subtyping, typing, and evaluation. These
are explained in the following.

\subsection{Subcapturing}

Subcapturing establishes a preorder relation on capture sets that gets propagated
to types. Smaller capture sets with respect to subcapturing lead to smaller types
with respect to subtyping.

The relation is defined by three rules.
The first two rules \ruleref{sc-set} and \ruleref{sc-elem} establish that subsets imply subcaptures.
That is, smaller capture sets subcapture larger ones. The last rule \ruleref{sc-var}
is the most interesting since it reflects an essential property of object capabilities.
It states that a variable $x$ of capturing type $C\;S$ generates a capture set $\{x\}$ that
subcaptures the capabilities $C$ with which the variable was declared. In a sense,
\ruleref{sc-var} states a monotonicity property: a capability
refines the capabilities from which it is created. In particular, capabilities
cannot be created from nothing. Every capability needs to be derived from some
more sweeping capabilities which it captures in its type.

The rule also validates our definition of capabilities as variables with non-empty capture
sets in their types. Indeed, if a variable is defined as $x: \{\}\;S$, then by
\ruleref{sc-var} we have $\{x\} <: \{\}$. This means that the variable can be disregarded
in the formation of $\cv$, for instance. Even if $x$ occurs in a term, a capture set
with $x$ in it is equivalent (with respect to mutual subcapturing) to a capture set without.
Hence, $x$ can safely be dropped without affecting subtyping or typing.

Rules \ruleref{sc-set} and \ruleref{sc-elem} mean that if set $C$ is a subset of
$C'$, we also have $C \sub C'$. But the reverse is not true.
For instance, with \ruleref{sc-var} we can derive the following relationship assuming
lambda-bound variables $x$ and $y$:
$$
  x:\CS{\UC}\Top, y:\CAPT{x}\Top \ts \cset{y} \sub \cset{x}
$$

\noindent
Intuitively this makes sense, as $y$ can capture no more than $x$.
However, we {\em cannot} derive $\cset{x} \sub \cset{y}$, since arguments passed
for $y$ may in fact capture {\it less} than $x$, e.g. they could be pure.

While there are no subcapturing rules for top or bottom capture sets, we can still establish:

\begin{proposition} If $C$ is well-formed in $\Gamma$, then $\Gamma |- \{\} <: C <: \UC$.
\end{proposition}

A proof is enclosed in the appendix.

\begin{proposition} The subcapturing relation $\Gamma \ts \_ <: \_$ is a preorder.
\end{proposition}

\begin{proof} We can show that transitivity and reflexivity are admissible.
\end{proof}

\subsection{Subtyping}

The subtyping rules of \CC are very similar to those of \sysfsub, with
the only significant addition being the rules for capturing and boxed types.
Note that as $S \equiv \{\}\;S$, both transitivity and reflexivity apply to shape
types as well.
\ruleref{capt} allows comparing types that have capture sets, where smaller
capture sets lead to smaller types. \ruleref{boxed} propagates subtyping relations
between types to their boxed versions.

\subsection{Typing}\label{sec:typing-rules}

Typing rules are again close to \sysfsub, with differences to account for
capture sets.

Rule \ruleref{var} is the basis for capability refinements. If $x$ is declared with type $C\;S$, then
the type of $x$ has $\{x\}$ as its capture set instead of $C$.
The capturing set $\{x\}$ is more specific
than $C$, in the subcapturing sense. Therefore, we can recover the capture set $C$ through subsumption.

Rules \ruleref{abs} and \ruleref{tabs} augment the abstraction's type with a
capture set that contains the captured variables of the term. Through subsumption and rule
\ruleref{sc-var}, untracked variables can immediately be removed from this set.

The \ruleref{app} rule substitutes references to the function parameter with the
argument to the function. This is possible since arguments are guaranteed to be variables.
The function's capture set $C$ is disregarded, reflecting the principle that
the function closure is consumed by the application.
Rule \ruleref{tapp} is analogous.

{\em Aside:}
A more conventional version of \ruleref{tapp} would be

\infrule[\ruledef{tapp'}]{%
  \Gamma \ts x \typ \Capt{C}{\forall[X \sub S']T} \gap
  \Gamma \ts S \sub S'
}{%
  \Gamma \ts x[S] \typ [X := S]T}

That formulation is equivalent to \ruleref{tapp} in the sense that either rule
is derivable from the other, using subsumption and contravariance of type
bounds.

Rules \ruleref{box} and \ruleref{unbox} map between boxed and unboxed types.
They require all members of the capture set under the box to be bound in the environment
$\Gamma$. Consequently, while one can create a boxed type with $\UC$ as its capture set
through subsumption, one cannot unbox values of this type.
This property is fundamental for ensuring scoping of capabilities.

\paragraph{\bf Avoidance}
As is usual in dependent type systems, Rule \ruleref{let} has as a side condition that the bound variable $x$ does not appear
free in the result type $U$. This so called {\em avoidance} property is usually attained
through subsumption. For instance consider an enclosing capability $c: T_1$ and the term
\[
  \Let x {\lambda(y: T_2).c} \lambda(z: \{x\}\,T_3).z
\]
The most specific type of $x$ is $\{c\}\,(T_2 \rightarrow T_1)$ and the most specific type
of the body of the let is $\forall(z: \{x\}\,T_3).\{z\}\,T_3$. We need to find a supertype
of the latter type that does not mention $x$. It turns out the most specific such type
is $T_3 \rightarrow \{c\} T_3$, so that is a possible type of the let, and it should
be the inferred type.

In general there is always a most specific avoiding type for a \ruleref{let}:

\begin{proposition}\label{lemma:avoiding-let-types-exist}
Consider a term $\Let x s t$ in an environment $\Gamma$ such that $\Gamma \ts s : T_1$
and $\Gamma, x: T_1 \ts t: T_2$. Then there exists a minimal (wrt $<:$) type
$T_3$ such that $T_2 <: T_3$ and $x \notin \fv(T_3)$
\end{proposition}

\begin{proof}
  Without loss of generality, assume that $\Gamma \ts s : \Capt{C_s}{U}$
  is the most specific typing for $s$ in $\Gamma$
  and $\Gamma, x : \Capt{C_s}{U} \ts t : T$ is the most specific
  typing for $t$ in the context of the body of the let, namely $\Gamma, x : \Capt {C_s}{U}$.  Let $T'$
  be constructed from $T$ by replacing $x$ with $C_s$ in covariant capture set positions and by replacing $x$
  with the empty set in contravariant capture set positions.  Then for every type $V$ avoiding $x$
  such that $\Gamma, x : \Capt{C_s}{S} \ts T \sub V$, $\Gamma \ts T' \sub V$.
  This is shown by a straightforward structual induction, which we give in the
  appendix in Section \ref{lemma:avoiding-let-types-exist:proof}.
\end{proof}

\subsection{Well-formedness}

Well-formedness $\Gamma \ts T \wf$ is equivalent to well-formedness in \sysfsub in that free variables
in types and terms must be defined in the environment, except that capturing types
may mention the universal capability $\rcap$ in their capture sets:

\infrule[\ruledef{capt-wf}]{%
  \Gamma \ts S \wf \gap C \subseteq dom(\Gamma) \cup \{\rcap\}
}{%
  \Gamma \ts \Capt{C}{S} \wf}

\subsection{Evaluation}

Evaluation is defined by a small-step reduction relation. This relation is quite different
from usual reduction via term substitution. Substituting values for variables would break the
monadic normal form of a program. Instead, we reduce the right hand sides of let-bound variables in
place and lookup the bindings in the environment of a redex.

Every redex is embedded in an outer {\em store context} and an inner {\em evaluation context}.
These represent orthogonal decompositions of let bindings. An evaluation context $e$ always
puts the focus $[]$ on the right-hand side $t_1$ of a let binding $\Let x {t_1} {t_2}$. By contrast,
a store context $\sigma$ puts the focus on the following term $t_2$ and requires that $t_1$ is evaluated.

The first three rules --- \ruleref{apply}, \ruleref{tapply}, \ruleref{open} --- rewrite
simple redexes: applications, type applications and unboxings. Each of these rules
looks up a variable in the enclosing store and proceeds based on the value that was found.

The last two rules are administrative in nature. They both deal with evaluated
{\bf let}s in redex position. If the right hand side of the {\bf let} is a variable, the {\bf let}
gets expanded out by renaming the bound variable using \ruleref{rename}. If it is a value, the {\bf let} gets
lifted out into the store context using \ruleref{lift}.

\begin{proposition}[] Evaluation is deterministic. For any term $t_1$ there is at most one
  term $t_2$ such that $t_1 \reduces t_2$.
\end{proposition}
\proof
  By a straightforward inspection of the reduction rules and definitions of contexts.

\section{Metatheory}\label{sec:meta}

We prove that \CC is sound through the standard progress and preservation
theorems.
The proofs for all the lemmas and theorems stated in this section are provided
in the appendix.

In order to prove both progress and preservation, we need technical lemmas that
allow manipulation of typing judgments for terms under store and evaluation
contexts. To state these lemmas, we first need to define what it means for
typing and store contexts to match, which we do in Figure \ref{fig:matching-env}.

\begin{figure}[H]
  \begin{minipage}{0.5\textwidth}
    \infrule{%
      \Gamma,x:T \ts \sigma \sim \Delta \\
      \Gamma \ts v \typ T \qquad x \not\in \fv(T)
    }{%
      \Gamma \ts \Let{x}{v}{\sigma} \sim x:T,\Delta
    }
  \end{minipage}%
  \begin{minipage}{0.5\textwidth}
    \infax{%
      \Gamma \ts [\,] \sim \cdot
    }
  \end{minipage}
  \caption{Matching environment {\fbox{$\Gamma \ts \sigma \sim \Delta$}}}
  \label{fig:matching-env}
\end{figure}

\noindent
Having $\Gamma \ts \sigma \sim \Delta$ lets us know that $\sigma$ is well-typed in $\Gamma$ if we use $\Delta$ as the types of the bindings.
Using this definition, we can state the following four lemmas, which also illustrate how the store and evaluation contexts interact with typing:

\providecommand\forcenl{\,\\}

\begin{definition}[Evaluation context typing ($\Gamma |- e : U => T$)]
  We say that $\Gamma$ types an evaluation context $e$ as $U => T$ if $\Gamma,x:U |- \evalof{x} : T$.
\end{definition}

\begin{lemma}[Evaluation context typing inversion]\forcenl
  \indent
  $\Gamma \ts \evalof{s} : T$ implies that for some $U$ we have $\Gamma |- e : U => T$ and $\Gamma |- s : U$.
\end{lemma}

\begin{lemma}[Evaluation context reification]\forcenl
  \indent
  If both $\Gamma \ts e : U => T$ and $\Gamma \ts s : U$, then $\Gamma \ts \evalof{s} : T$.
\end{lemma}

\begin{lemma}[Store context typing inversion]\forcenl
  \indent
  $\Gamma \ts \storeof{t} \typ T$ implies that for some $\Delta$ we have $\Gamma \ts \sigma \sim \Delta$ and $\Gamma,\Delta \ts t \typ T$.
\end{lemma}

\begin{lemma}[Store context reification]\forcenl
  \indent
  If both $\Gamma,\Delta \ts t \typ T$ and $\Gamma \ts \sigma \sim \Delta$, then also $\Gamma \ts \storeof{t} \typ T$.
\end{lemma}

We can now proceed to our main theorems; their statements differ slightly from \sysfsub,
as we need to account for our monadic normal form.
Our preservation theorem captures that the important type to preserve is the one assigned to the term under the store.
It is stated as follows:

\begin{theorem}[Preservation]
  If we have $\Gamma \ts \sigma \sim \Delta$ and $\Gamma,\Delta \ts t : T$,
  then $\storeof{t} \reduces \storeof{t'}$
  implies that $\Gamma,\Delta \ts t' : T$.
\end{theorem}

To neatly state the progress theorem, we first need to define canonical store-plug
splits, which simply formally express that the store context wrapping a term is as
large as possible:

\begin{definition}[Canonical store-plug split]
  We say that a term of the form $\storeof{t}$ is a {\it canonical split} (of
  the entire term into store context $\sigma$ and the plug $t$) if $t$ is not of
  the form $\Let{x: T}{v}{t'}$.
\end{definition}

\noindent With this definition, we can now state the progress theorem:

\begin{theorem}[Progress]
  If $\ts \storeof{\evalof{t}} \typ T$
  and $\storeof{\evalof{t}}$ is a canonical store-plug split,
  then either $\evalof{t} = a$,
  or there exists $\storeof{t'}$ such that $\storeof{\evalof{t}} \reduces \storeof{t'}$.
\end{theorem}

\noindent The lemmas needed to prove progress and preservation are for the most
part standard. As our calculus is term-dependent, we also need to account for
term substitution affecting both environments and types, not only terms. For
instance, the lemma stating that term substitution preserves typing is expressed
as follows:

\begin{lemma}[Term substitution preserves typing]\forcenl
  \indent If $\Gamma,x:U,\Delta \ts t \typ T$
  and $\Gamma \ts y \typ U$,
  then $\Gamma,[x \mapsto y]\Delta \ts [x \mapsto y]t \typ [x \mapsto y]T$.
\end{lemma}

In this statement, we can also see that we only consider substituting one term
variable for another, due to MNF. Using MNF affects other parts of the proof as
well -- in addition to typical canonical forms lemmas, we also need to show that
looking up the value bound to a variable in a store preserves the types we can
assign to the variable:

\begin{lemma}[Variable lookup inversion]\forcenl
  \indent
  If we have both $\Gamma \ts \sigma \sim \Delta$ and $x : C\,R \in \Gamma,\Delta$,
  then $\sigma(x) = v$ implies that $\Gamma,\Delta \ts v \typ \CS{C}R$.
\end{lemma}

\subsubsection*{Capture sets and captured variables}

Our typing rules use $\cv$ to calculate the capture set that should be assigned
to terms. With that in mind, we can ask the question: what is the exact
relationship between captured variables and capture sets we use to type the terms?

Because of subcapturing, this relationship is not as obvious as it might seem.
For fully evaluated terms (of the form $\sigma[a]$), their captured variables
are the most precise capture set they can be assigned. The following lemma
states this formally:

\begin{lemma}[Capture prediction for answers]\label{lemma:capt-predict-a}
  If $\Gamma \ts \sigma[a] \typ \CS{C}S$, then $\Gamma \ts \cv(\sigma[a]) \sub C$.
\end{lemma}



If we start with an unreduced term $\sigma[t]$, then the situation becomes more
complex. It can mention and use capabilities that will not be reflected in the
capture set at all -- for instance, if $t = x\,y$, the capture set of $x$ is
irrelevant to the type assigned to $t$ by \ruleref{app}.
However, if $\sigma[t]$ reduces fully to a term of the form
$\sigma[\sigma'[a]]$, the captured variables of
$\sigma'[a]$ will correspond to capture sets we could assign to $t$.

In other words, the capture sets we assign to unevaluated terms under a store
context predict variables that will be captured by the answer those terms reduce
to. Formally we can express this as follows:

\begin{lemma}[Capture prediction for terms]\label{lemma:capt-predict-t}\forcenl
  \indent Let $\ts \sigma \sim \Delta$ and $\Delta \ts t \typ \CS{C}R$.
  Then $\sigma[t] -->^{*} \sigma[\sigma'[a]]$
  implies that $\Delta \ts \cv(\sigma'[a]) \sub C$.
\end{lemma}

\subsection{Correctness of boxing}
Boxing capabilities allows temporarily hiding them from capture sets.
Given that we can do so, one may be inclined to ask: what is the correctness criterion for our typing rules for box and unbox forms?

Intuitively, we should not be able to ``smuggle in'' a capability by using boxes: all the capabilities of a term should be somehow accounted for.
A basic criterion our typing rules should satisfy is that a term that boxes and immediately unboxes a capability should have a cv at least as wide than that of the capability,
i.e.\ a cv that accounts for the capability. Formally we can state this property as follows:

\begin{proposition}
  Let $|- \sigma \ts \Delta$ and let $t = \kw{let} y = \Boxed{x} \kw{in} \Unboxed{C}{y}$
  such that for some $e,T$, we have $\Delta |- \storeof{\evalof{t}} : T$ . Then we also have:
  $$
    \Delta |- \{x\} <: C = \cv(t)
  $$
\end{proposition}

It is straightforward to verify that this holds based on the progress theorem and by induction on the typing derivation.
Ideally, our correctness criterion would speak about arbitrary terms, by characterising their captured variables alongside reduction paths.
Since this clearly doesn't make sense for closed terms, we will need one additional definition: a way of introducing well-behaved capabilities.
A \emph{platform} $\Psi$ is an outer portion of the store which formally represents primitive capabilities. It is defined as follows:
$$\begin{array}{lcl}
  \Psi &::=& \Let x v {\Psi}    \gap\mbox{if}\ \fv(v) = \{\} \\
      &|& [\,]
\end{array}$$

\noindent
Since a platform $\Psi$ is meant to introduce primitive capabilities, it doesn't make sense to type them with any other capture set than $\UC$.
The following definition captures this idea:

\begin{definition*}[Well-typed program]
  We say that a term $\Psi[t]$ is a well-typed program if for some typing context $\Delta$ such that $|- \Psi \sim \Delta$
  we have $\Delta |- t : T$, and for all $x \in \dom \Psi$ we have $x : \UC\,S \in \Delta$.
\end{definition*}

The capabilities introduced by the platform of a well-typed program are well-behaved in the sense that their values cannot reference any other variable, and their capture sets are $\UC$.
Given $\cv$ and platform, we can state our desired lemma as follows:

\begin{lemma}[Program authority preservation]
  Let $\Psi[t]$ be a well-typed program such that $\Psi[t] --> \Psi[t']$.
  Then $\cv(t')$ is a subset of $\cv(t)$.
\end{lemma}

The name of this lemma hints at the property we want to demonstrate next, which offers another perspective at what it means for boxes to be correctly typed.
This property will capture the connection between the $\cv$ of a term and the capabilities it may use during evaluation.
We make intuitive usage of this connection when we say, for instance, that a function $f$ typed as $\{\} \Unit -> \Unit$ cannot perform any effects when called.
Formally, it cannot do so because when we reduce a term of the form $f\,x$, based on \ruleref{abs} we know it will be replaced by some term such that $\cv(t) = \{\}$.
Here is where we make an intuitive jump: since $\cv(t)$ is empty and we do not pass anything to $t$, it should not use any capabilities during evaluation.

\newcommand\used{\ensuremath{\mathrm{used}}}

To state this formally, first we need to define which capabilities are used during reduction.
All formal capabilities are abstractions, so a natural definition is to say that a capability $x$ was used if it was applied, i.e. if a term of the form $x\,y$ was reduced.
To speak about this, we will define $\used(t --> s)$ as function from reduction derivations to sets of variables, as follows:
$$\begin{array}{lll}
  \used( t_1 --> t_2 --> \cdots --> t_n ) &=& \used(t_1 --> t_2) \cup \used(t_2 --> \cdots --> t_n) \\
  \used( \sigma[e[x\,y]] --> \sigma[t] ) &=& \{x\} \\
  \used( \sigma[e[x\,[T]]] --> \sigma[t] ) &=& \{x\} \\
  \used(t_1 --> t_2) &=& \{\}   \gap\gap\mbox{(otherwise)}
\end{array}$$
The last case applies to rules (OPEN), (RENAME), (LIFT).

Now, let us consider how exactly the $\cv$ of a term is connected to the capabilities it will use during evaluation.
If we disregard boxes, then clearly a term $t$ can only use a capability $x$ if it is free in $t$, which also means that it is in $\cv(t)$.
However, boxing a capability hides it from the $\cv$.
If we disregard unbox forms, this is again not a problem: a capability under a box cannot be used unless the box is opened first.
Now, if we consider unbox forms $\Unboxed C x$, we encounter the essence of the problem: $\cv(\Unboxed C x) = C$.
On the term level, we have no guarantee that the capture sets of $C$ and $x$ are connected.
The \ruleref{box} and \ruleref{unbox} typing rules ensure that if a term gains access to a capability by opening a box,
then the ``key'' used to open the box (the $C$ in $\Unboxed C x$) must account for the boxed capability.
Thanks to them, we can say that $\cv(t)$ is the \emph{authority} of the term $t$: during reduction, $t$ can only use capabilities contained in $\cv(t)$.
We formally state this property as follows:

\begin{lemma}[Used capability prediction]
  Let $\Psi[t]$ be a well-typed program that reduces in zero or more steps to $\Psi[t']$.
  Then the primitive capabilities used during the reduction are a subset of the authority of $t$.
\end{lemma}

\section{Examples}\label{sec:examples}

We have implemented a type checker for \CC as an extension of the Scala 3 compiler
to enable experimentation with larger code examples. Notably, our extension
infers which types must be boxed, and automatically generates boxing and
unboxing operations when values are passed to and returned from instantiated
generic datatypes, so none of these technical details appear in the actual user-written
Scala code. In this section, we present examples that demonstrate the usability of the language.
In Section~\ref{sec:eg:list}, we remain close to the core calculus by
encoding lists using only functions; here, we still show the boxed types
and boxing and unboxing operations that the compiler infers in gray,
though they are not in the source code. In Sections~\ref{sec:eg:stackalloc} and~\ref{sec:eg:collections},
we use additional Scala features in larger examples,
to implement stack allocation and polymorphic data structures.
For these examples, we present the source code without cluttering it
with the boxing operations inferred by the compiler.

\subsection{Church-Encoded Lists}
\label{sec:eg:list}

Using the Scala prototype implementation of \CC, the B{\"o}hm-Berarducci encoding~\cite{bohmAutomaticSynthesisTyped1985}
of a linked list data structure can be implemented and typed as follows.  Here, a list
is represented by its right fold function:

\newcommand{\infer}[1]{\textcolor{gray}{#1}}
\newcommand{\topbnd}{\infer{<: $\Box$ \cset{*} Any}}
\newcommand{\infbox}{\infer{$\Box$}}
\newcommand{\infunbox}[1]{\infer{\cset{#1} $\circ\!-$}}

\begin{lstlisting}[language=scalaish, escapeinside={(*@}{@*)}]
    type Op[T (*@\topbnd@*), C (*@\topbnd @*)] =
      (v: T) => (s: C) => C

    type List[T (*@\topbnd@*)] =
      [C (*@\topbnd@*)] -> (op: Op[T, C]) -> {op} (s: C) -> C

    def nil[T (*@\topbnd@*)]: List[T] =
      [C (*@\topbnd@*)] => (op: Op[T, C]) => (s: C) => s

    def cons[T (*@\topbnd@*)](hd: T, tl: List[T]): List[T] =
      [C (*@\topbnd@*)] => (op: Op[T, C]) => (s: C) => op(hd)(tl(*@\infer{[C]}@*)(op)(s))
\end{lstlisting}


\noindent
A list inherently captures any capabilities that
may be captured by its elements. Therefore, naively,
one may expect the capture set of the list to include the
capture set of the type \lstinline{T} of its elements.
However, boxing and unboxing
enables us to elide the capture
set of the elements from the capture  set of the containing list.
When constructing a list using {\tt cons}, the elements must be boxed:
\begin{lstlisting}[language=scalaish, escapeinside={(*@}{@*)}]
    cons((*@\infbox@*) 1, cons((*@\infbox@*) 2, cons((*@\infbox@*) 3, nil)))
\end{lstlisting}

\noindent
A \lstinline{map} function over the list can be implemented and typed
as follows:
\begin{lstlisting}[language=scalaish, escapeinside={(*@}{@*)}]
  def map[A (*@\topbnd@*), B (*@\topbnd@*)](xs: List[A])(f: {*} A -> B): List[B] =
    (*@\infunbox{}@*) xs[(*@\infbox@*) List[B]]((hd: A) => (tl: List[B]) => cons(f(hd), tl))(nil)
\end{lstlisting}

\noindent
The mapped function \lstinline{f} may capture any capabilities,
as documented by the capture set \lstinline+{*}+ in its type.
However, this does not affect the type of \lstinline{map} or its
result type \lstinline{List[B]}, since the mapping is strict,
so the resulting list does not capture any capabilities captured
by \lstinline{f}. If a value returned by the function \lstinline{f}
were to capture capabilities, this would be reflected in its type,
the concrete type substituted for the type variable \lstinline{B},
and would therefore be reflected in the concrete instantiation of the
result type \lstinline{List[B]} of \lstinline{map}.

%

\subsection{Stack Allocation\label{sec:eg:stackalloc}}
Automatic memory management using a garbage collector is convenient and prevents
many errors, but it can impose significant performance overheads
in programs that need to allocate large numbers of short-lived objects.
If we can bound the lifetimes of some objects to coincide with
a static scope, it is much cheaper to allocate those objects on a stack as follows:
\footnote{For simplicity, this example is neither thread nor exception safe.}
\begin{lstlisting}[language=scalaish]
    class Pooled

    val stack = mutable.ArrayBuffer[Pooled]()
    var nextFree = 0

    def withFreshPooled[T](op: Pooled => T): T =
      if nextFree >= stack.size then stack.append(new Pooled)
      val pooled = stack(nextFree)
      nextFree = nextFree + 1
      val ret = op(pooled)
      nextFree = nextFree - 1
      ret
\end{lstlisting}

The {\tt withFreshPooled} method calls the provided function {\tt op}
with a freshly stack-allocated instance of class {\tt Pooled}.
It works as follows.
The {\tt stack} maintains a pool of already allocated instances of
{\tt Pooled}. The {\tt nextFree} variable records the offset of the first
element of {\tt stack} that is available to reuse; elements before it are
in use. The {\tt withFreshPooled} method first checks whether
the {\tt stack} has any available instances; if not, it adds one to the stack.
Then it increments {\tt nextFree} to mark the first available instance as used,
calls {\tt op} with the instance, and decrements {\tt nextFree} to mark the
instance as freed. In the fast path, allocating and freeing an instance of
{\tt Pooled} is reduced to just incrementing and decrementing
the integer {\tt nextFree}.

However, this mechanism fails if the instance of {\tt Pooled} outlives
the execution of {\tt op}, if {\tt op} captures it in its result. Then
the captured instance may still be accessed while at the same time also
being reused by later executions of {\tt op}. For example, the following
invocation of {\tt withFreshPooled} returns a closure that accesses the
{\tt Pooled} instance when it is invoked on the second line, after the
{\tt Pooled} instance has been freed:
\begin{lstlisting}[language=scalaish]
    val pooledClosure = withFreshPooled(pooled => () => pooled.toString)
    pooledClosure()
\end{lstlisting}
\noindent
Using capture sets, we can prevent such captures and ensure the safety of stack allocation just by marking the {\tt Pooled} instance as tracked:
\begin{lstlisting}[language=scalaish]
    def withFreshPooled[T](op: (pooled: {*} Pooled) => T): T =
\end{lstlisting}
Now the {\tt pooled} instance can be captured only in values
whose capture set accounts for {\tt \{pooled\}}. The type
variable {\tt T} cannot be instantiated with such a capture set
because {\tt pooled} is not in scope outside of {\tt withFreshPooled},
so only {\tt \{*\}} would account for {\tt \{pooled\}}, but we disallowed
instantiating a type variable with {\tt \{*\}}.
With this declaration of {\tt withFreshPooled}, the above {\tt pooledClosure}
example is correctly rejected, while the following safe example is allowed:
\begin{lstlisting}[language=scalaish]
    withFreshPooled(pooled => pooled.toString)
\end{lstlisting}

\subsection{Collections}
\label{sec:eg:collections}

In the following examples we show that a typing discipline based on \CC can be lightweight enough
to make capture checking of operations on standard collection types practical. This is important,
since such operations are the backbone of many programs. All examples compile with our current
capture checking prototype \cite{odersky2022cc-experiment}.

We contrast the APIs of common operations on Scala's standard collection types {\tt List} and
{\tt Iterator} when capture sets are taken into account.
Both APIs are expressed as Scala 3 extension methods \cite{dotty-extension-methods}
over their first parameter. Here is the {\tt List} API:

\begin{lstlisting}[language=scalaish]
    extension [A](xs: List[A])
      def apply(n: Int): A
      def foldLeft[B](z: B)(op: (B, A) => B): B
      def foldRight[B](z: B)(op: (A, B) => B): B
      def foreach(f: A => Unit): Unit
      def iterator: Iterator[A]
      def drop(n: Int): List[A]
      def map[B](f: A => B): List[B]
      def flatMap[B](f: A => `{*}` IterableOnce[B]): List[B]
      def ++[B >: A](xs: `{*}` IterableOnce[B]): List[B]
\end{lstlisting}
\noindent
Notably, these methods have almost exactly the same signatures as their versions
in the standard Scala collections library. The only differences concern the
arguments to {\tt flatMap}  and {\tt ++} which now admit an {\tt IterableOnce}
argument with an arbitrary capture set. Of course, we could have left out the
{\tt \{*\}} but this would have needlessly restricted the argument to
non-capturing collections.

Contrast this with some of the same methods for iterators:

\begin{lstlisting}[language=scalaish]
  extension [A](it: Iterator[A])
    def apply(n: Int): A
    def foldLeft[B](z: B)(op: (B, A) => B): B
    def foldRight[B](z: B)(op: (A, B) => B): B
    def foreach(f: A => Unit): Unit

    def drop(n: Int): `{it}` Iterator[A]
    def map[B](f: A => B): `{it, f}` Iterator[B]
    def flatMap[B](f: A => `{*}` IterableOnce[B]): `{it, f}` Iterator[B]
    def ++[B >: A](xs: `{*}` IterableOnce[B]): `{it, xs}` Iterator[B]
\end{lstlisting}
\noindent
Here, methods {\tt apply}, {\tt foldLeft}, {\tt foldRight}, {\tt foreach} again have
the same signatures as in the current Scala standard library. But the remaining
four operations need additional capture annotations. Method {\tt drop} on iterators
returns the given iterator {\tt it} after skipping {\tt n} elements. Consequently,
its result has {\tt \{it\}} as capture set. Methods {\tt map} and {\tt flatMap} lazily map
elements of the current iterator as the result is traversed. Consequently they
retain both {\tt it} and {\tt f} in their result capture set. Method {\tt ++} concatenates
two iterators and therefore retains both of them in its result capture set.

The examples attest to the practicality of capture checking. Method signatures are generally
concise. Higher-order methods over strict collections by and large keep the same types
as before. Capture annotations are only needed for capabilities that are retained in closures
and are executed on demand later. Where such annotations are needed they match the
developer's intuitive understanding of reference patterns and signal
information that looks relevant in this context.

\comment{
The examples are intended to illustrate two observations that we made when working
with the prototype. First, capture annotations are quite sparse and simple to explain.
Second, the overhead of capture annotations tends to be lowest for ``functional'' definitions.
This is in contrast to a borrow checker like the one in Rust, where
higher-order functional definitions are more difficult to express, unless one switches
to reference counting.

Staying with lists, here is the type of a standalone {\tt map} implementation.
\begin{lstlisting}[language=scalaish]
    def map[A,B](xs: List[A], f: A=>B): List[B] = xs match
      case x :: xs1 => f(x) :: map(xs1, f)
      case Nil => Nil
\end{lstlisting}
Note that this is exactly the same as in current Scala without capture checking. The
only relevant change is the function type {\tt A => B} which now represents functions
that can capture capabilities, whereas {\tt A -> B} is reserved for functions that don't.
Now define a curried closure that calls {\tt map}:
\begin{lstlisting}[language=scalaish]
    def strictMap[A, B]: List[A) -> (A => B) -> List[B] =
      xs => f => map(xs, f)
\end{lstlisting}
No capture annotations are necessary. {\tt strictMap} takes a possibly capturing function as its last
argument, but the argument is not retained in the result.
If we swap {\tt strictMap}'s parameters, we do get a capture:
\begin{lstlisting}[language=scalaish]
    def strictMap2[A, B]: (f: A => B) -> {f} List[A] -> List[B] =
      f => xs => map(xs, f)
\end{lstlisting}
In {\tt strictMap2}, the tracked function argument {\tt f} is retained in the result function
from lists to lists. (One can design rules for desugaring curried function types that would
infer the capture annotation, but we leave it explicit here for clarity.)

For comparison, consider {\tt map} over iterators, where the iterator trait is defined as
follows:
\begin{lstlisting}[language=scalaish]
    trait Iterator[T]:
      def hasNext: Boolean
      def next: T

    def mapIterator[T, U](it: {*} Iterator[T], f: {*} T => U)
        : {it, f} Iterator[U] =
      new Iterator:
        def hasNext = it.hasNext
        def next = f(it.next)
\end{lstlisting}
This function takes an iterator and a mapping function as arguments. Both arguments
can capture capabilities and both arguments appear as captured references in the
resulting iterator.

}

\comment{
\subsection{Safe Exceptions}
\label{sec:eg:try}

The next examples demonstrate that a type discipline based on \CC can enforce
scoped capabilities, both as a compiler-supported device and as an abstraction
capability in user programs. In particular, such a type system can ensure exception safety.
For illustration, consider how the Scala compiler expands the problematic
definition of {\tt escaped} from the introduction:
\begin{lstlisting}[language=scalaish, escapeinside={(*@}{@*)}]
    def escaped(xs: Double*) =
      (*@\infunbox{}@*)
        try (*@\infbox@*)
          val ctl: {*} CanThrow[LimitExceeded] = ...
          () => xs.map(x => square(x)(ctl)).sum
        catch case ex: LimitExceeded => () => -1
\end{lstlisting}
This assumes a type {\tt CanThrow[E]} of capabilities that allow to throw exception {\tt E}. The compiler generates a {\tt ctl} capability for the handled exception
{\tt LimitExceeded} in the body of the enclosing {\tt try}. The capability gets
passed (as an implicit parameter) to the function {\tt square} that may throw
that exception and therefore requires the capability.
The principal type of the result closure {\tt \{ctl\} () -> Int} captures
the capability {\tt ctl}. Since it is locally defined, that capability cannot appear
in the capture set of the enclosing {\tt box} application.
While the capability could be boxed and the type of the box could be widened,
the rule \ruleref{unbox} would prevent us from assigning a type to the
{\tt unbox} operation, as {\tt *} is not allowed in the capture set of the unboxed
value.
It follows that either the box or the unbox operation cannot be assigned a type.

The type discipline enforced by the compiler for {\tt try} can in its essence also be abstracted in a user-defined function, where boilerplate code can be
avoided using Scala 3's contextual abstractions \cite{odersky2017simplicitly}.
Here is the signature of a user-defined {\tt Try} function that admits handlers of a single exception of arbitrary type {\tt E}:

\begin{lstlisting}[language=scalaish]
    def Try[E, A]
        (body: ({*} CanThrow[E]) ?=> A)
        (handler: E => A): A =
      try body catch case ex: E => handler(e)
\end{lstlisting}
{\tt Try} takes two function arguments representing the body and the exception handler. The body is a {\em context function} that takes a {\tt CanThrow} capability
for the handled exception as an implicit parameter. The handler is a regular
function over the handled exception type {\tt E}.

The two restrictions enforced by the compiler-generated {\tt try} are
implicitly also present in the signature of {\tt Try}.
First, the capability required by {\tt body} has type {\tt \{*\} CanThrow[E]} and is therefore tracked.
Second, the result type of {\tt body} is a type variable, so any value
returned from the body must be boxed and the result of {\tt Try} needs to be
unboxed.\footnote{Boxes and implicit parameters were left out
 of the example since they would be inferred.}
\begin{lstlisting}[language=scalaish, escapeinside={(*@}{@*)}]
    Try(() => xs.map(x => square(x)).sum)(ex => () => -1)
\end{lstlisting}
The example shows that \CC can express not only specific handlers but also polymorphic abstractions over them.
\todo{This is somewhat redundant with usingFile in Section~2. Drop?}

}

%
%
%
%
%
%
%

\section{Related Work}\label{sec:related}

The results presented in this paper did not emerge in a vacuum and many of the underlying ideas
appeared individually elsewhere in similar or different form.
We follow the structure of the informal presentation in Section \ref{sec:informal}
and organize the discussion of related work according to the key ideas behind \CC{}.

\paragraph{\bf Effects as Capabilities}

Establishing effect safety by moderating access to effects via term-level capabilities is
not a new idea \cite{marino09generic}.
It has been proposed as a strategy to retrofit existing languages with means to
reason about effect safety \cite{choudhury2020recovering, liu2016, osvald2016gentrification}.
Recently, it also has been applied as the core principle behind a new programming language
featuring effect handlers \cite{brachthaeuser2020effects}.
Similar to the above prior work, we propose to use term-level capabilities to restrict
access to effect operations and other scoped resources with a limited lifetime.
Representing effects as capabilities results in a good economy of concepts:
existing language features, like term-level binders can be reused, programmers are not
confronted with a completely new concept of effects or regions.

\paragraph{\bf Making Capture Explicit}
Having a term-level representation of scoped capabilities introduces the
challenge to restrict use of such capabilities to the scope in which they are still live.
To address this issue, effect systems have been introduced \cite{zhang2019abstraction,biernacki2020binders, brachthaeuser2020effekt}
but those can result in overly verbose and difficult to understand types \cite{brachthaeuser2020effects}.
A third approach, which we follow in this paper, is to make capture explicit in the type of
functions.

\citet{hannan1998escape}
proposes a type-based escape analysis with the goal to facilitate stack allocation.
The analysis tracks variable reference using a type-and-effect system
and annotates every function type with the set of free variables it captures. The authors leave
the treatment of effect polymorphism to future work.
In a similar spirit, \citet{scherer2013tracking} present Open Closure Types to facilitate reasoning about
data flow properties such as non-interference. They present an extension of the simply typed lambda calculus
that enhances function types $[ \Gamma_0 ](\tau) \rightarrow \tau$ with the lexical environment $\Gamma_0$ that was originally
used to type the closure.

\citet{brachthaeuser2022effects} show System C, which mediates between first- and second-class values with boxes.
In their system, scoped capabilities are second-class values.
Normally, second-class values cannot be returned from any scope, but in System C they can be boxed and returned from \emph{some} scopes.
The type of a boxed second-class value tracks which scoped capabilities it has captured and accordingly, from which scopes it cannot be returned.
System C tracks second-class values with a coeffect-like environment and uses an effect-like discipline for tracking captured capabilities, which can in specific cases be more precise than $\cv$.
In comparison, \CC does not depend on a notion of second-class values and deeply integrates capture sets with subtyping.

Recently, \citet{bao2021reachability} have proposed to qualify types with
{\it reachability sets}. Their {\it reachability types} allow reasoning about
non-interference, scoping and uniqueness by tracking for each reference what
other references it may alias or (indirectly) point to.
Their system formalizes subtyping but not universal polymorphism.
However, it relates reachability sets along a different dimension
than \CC. Whereas in \CC a subtyping relationship is established between a capability $c$ and
the capabilities in the type of $c$, reachability types assume a subtyping relationship between
a variable $x$ and the variable owning the scope where $x$ is defined.
%
Reachability types track \comment{pedantic} detailed points-to and aliasing
information in a setting with mutable variables, while \CC is a more
foundational calculus for tracking references and capabilities that can be
used as a guide for an implementation in a complete programming language.
It would be interesting to explore how reachability types can be expressed in \CC.

\paragraph{\bf Capture Polymorphism}
Combining effect tracking with higher-order functions immediately gives rise to effect polymorphism,
which has been a long-studied problem.

Similar to the usual (parametric) type polymorphism, the seminal work by \citet{lucassenPolymorphicEffectSystems1988}
on type and effect systems featured (parametric) \emph{effect polymorphism} by adding language constructs for explicit
region abstraction and application. Similarly, work on region based memory management \cite{tofte1997region}
supports \emph{region polymorphism} by explicit region abstraction and application.
Recently, languages with support for algebraic effects and handlers, such as Koka \cite{leijen2017type}
and Frank \cite{lindleyBeBe2017}, feature explicit, parametric effect polymorphism.

It has been observed multiple times, for instance by \citet{osvald2016gentrification} and \citet{brachthaeuser2020effects},
that parametric effect polymorphism can become verbose and results in complicated types
and confusing error messages.
Languages sometimes attempt to \emph{hide} the complexity -- they ``simplify the types more and leave
out `obvious' polymorphism'' \cite{leijen2017type}. However, this solution is not satisfying since the
full types resurface in error messages.
In contrast, we support polymorphism by reusing existing term-level binders and support
simplifying types by means of subtyping and subcapturing.


%

The problem of how to prevent capabilities from escaping in closures is also addressed by
\emph{second-class values} that can only be passed as arguments but not be returned in results or stored
in mutable fields. \citet{siek2012effects} enforce second-class function arguments
using a classical polymorphic effect discipline whereas \citet{osvald2016gentrification} and \citet{brachthaeuser2020effects}
present a specialized type discipline for this task. Second-class values cannot be returned or closed-over by first-class functions.
On the other hand, second-class functions can
freely close over capabilities, since they are second-class themselves. This gives rise to a
convenient and light-weight form of \emph{contextual} effect polymorphism \cite{brachthaeuser2020effects}.
While this approach allows for effect polymorphism with a simple type system, it is also restrictive
because it also forbids local returns and retentions of capabilities; a problem solved by adding
boxing and unboxing \cite{brachthaeuser2022effects}.

\paragraph{\bf Foundations of Boxing}

%

Contextual modal type theory (CMTT) \cite{nanevski2008contextual} builds on
intuitionistic modal logic.
In intuitionistic modal logic, the graded propositional constructor $[ \Psi ]\;A$
(pronounced \emph{box}) witnesses that $A$ can be proven only using true propositions in $\Psi$.
Judgements in CMTT have two contexts: $\Gamma$, roughly corresponding to \CC
bindings with $\UC$ as their capture set, and a modal context $\Delta$
roughly corresponding to bindings with concrete capture sets.
Bindings in the modal context are necessarily boxed and annotated with a modality $ x :: A[\Psi] \in \Delta$.
Just like our definition of captured variables in \CC{}, the definition of free variables by \citet{nanevski2008contextual}
assigns the empty set to a boxed term (that is, $fv(\mathsf{box}(\Psi{}. M)) = \{\}$).
Similar to our unboxing construct, using a variable bound in the modal context requires that
the current context satisfies the modality $\Psi$, mediated by a substitution $\sigma$.
Different to CMTT, \CC{} does not introduce a separate modal context. It also does not annotate modalities on binders, instead these are kept in the types.
Also different to CMTT, in \CC{} unboxing is annotated with a capture set and not a substitution.

Comonadic type systems were introduced to support reasoning about \emph{purity}
in existing, impure languages \cite{choudhury2020recovering}.
Very similar to the box modality of CMTT,
a type constructor `Safe' witnesses the fact that its values
are constructed without using any impure capabilities.
The type system presented by \citet{choudhury2020recovering} only supports a binary
distinction between \emph{pure} values and \emph{impure} values, however, the authors
comment that it might be possible to generalize their system to graded modalities.

In the present paper, we use boxing as a practical tool, necessary to obtain concise types when combining
capture tracking with parametric type polymorphism.

\paragraph{\bf Coeffect Systems}

{\it Coeffect systems} also attach additional information to bindings in the
environment, leading to a typing judgment of the form $\Gamma @ \;
\mathcal{C} \ts e \typ \tau$. Such systems can be seen as similar in spirit to
\CC, where additional information is available about each variable in the
environment through the capture set of its type.
\citet{petricek2014coeffects} show a general coeffect framework that can
be instantiated to track various concepts such as bounded reuse of variables,
implicit parameters and data access.
This framework is based on simply typed lambda calculus and
its function types are always coeffect-monomorphic.
In contrast, \CC is based on \sysfsub (thus supporting type polymorphism and
subtyping) and supports capture-polymorphic functions.

\paragraph{\bf Object Capabilities}

The (object-)capability model of programming \cite{crary1999typed,boyland2001capabilities,miller2006robust},
controls security critical operations by requiring access to a capability.
Such a capability can be seen as the constructive proof that the holder
is entitled to perform the critical operation.
Reasoning about which operations a module can perform is reduced
to reasoning about which references to capabilities a module holds.

The Wyvern language \cite{melicher2017capability} implements this model
by distinguishing between stateful \emph{resource modules} and \emph{pure modules}.
Access to resource modules is restricted and only possible through capabilities.
Determining the authority granted by a module amounts to manually inspecting its type signature and all of the type signatures of its
transitive imports. To support this analysis, \citet{melicher2020controlling} extends the language with a
fine-grained effect system that tracks access of capabilities in the type of methods.

In \CC, one can reason about authority and capabilities simply by inspecting
capture sets of types.
If we model modules via function abstraction, the function's capture set directly reflects its authority.
Importantly, \CC does not include an effect system and thus tracks mention
rather than use.

\section{Conclusion}\label{sec:conc}

We introduced a new type system \CC to track captured references of values. Tracked references
are restricted to capabilities, where capabilities are references bootstrapped
from other capabilities, starting with the universal capability. Implementing this simple principle
then naturally suggests a chain of design decisions:
\begin{enumerate}
  \item Because capabilities are variables, every function must have its type annotated with its free capability variables.
  \item To manage the scoping of those free variables, function types must be dependently-typed.
  \item To prevent non-variable terms from occurring in types, the programming language is formulated in monadic normal form.
  \item Because of type dependency, the let-bindings of MNF have to satisfy the avoidance property,
        to prevent out-of-scope variables from occurring in types.
  \item To make avoidance possible, the language needs a rich notion of subtyping on the capture sets.
  \item Because the capture sets represent object capabilities, the subcapture relation cannot just be the subset relation on sets of variables -- it also has to take into account the types of the variables,
        since the variables may be bound to values which themselves capture capabilities.
  \item To keep the size of the capture sets from ballooning out of control, the paper introduces a box connective with box and unbox rules to control when free variables are counted as visible.
\end{enumerate}
\noindent
We showed that the resulting system can be used as the basis for lightweight polymorphic effect checking,
without the need for effect quantifiers. We also identified three key principles that
keep notational overhead for capture tracking low:
\begin{itemize}
\item[--] Variables are tracked only if their types have non-empty capture sets.
In practice the majority of variables are untracked and thus do not need to be mentioned at all.
\item[--] Subcapturing, subtyping and subsumption mean that more detailed capture sets
   can be subsumed by coarser ones.
\item[--] Boxed types stop propagation of capture information in enclosing types which avoids
  repetition in capture annotations to a large degree.
\end{itemize}
Our experience so far indicates that the presented calculus is simple and expressive enough to be used as a basis for more
advanced effect and resource checking systems and their practical implementations.


\begin{acks}
  Acknowledgments. We thank Jonathan Aldrich, Vikraman Choudhury and Neal Krishnawami for their input
  in discussions about this research. We thank the anonymous reviewers of previous versions of this paper for their
  comments and suggestions. In particular, we paraphrased with permission one reviewer's summary
  of our design decisions in the conclusion.
  This research was partially funded by the Natural Sciences and Engineering Research Council of Canada.
\end{acks}

\bibliography{bibliography}

\appendix









\newenvironment{proofcases}{\begin{adjustwidth}{1cm}{}}{\end{adjustwidth}}
\newcommand{\pcase}[1]{\noindent{\it Case #1.}}

\newcommand{\jona}[1]{\note[gray]{Jonathan: #1} }
\newcommand{\abg}[1]{\note[gray]{ABG: #1} }

\newtheorem{fact}{Fact}[section]

\newcommand{\eruleref}[1]{{\sc(#1)}}
\newcommand{\lref}[1]{Lemma \ref{#1}}





\section{Proofs}

\subsection{Proof devices}

\note{The paper should already define well-formedness of types.}

We extend type well-formedness to environments as $\ts \Gamma \wf$ in the
typical way -- empty environment is well-formed, and $\Gamma,x:T$ or
$\Gamma,X<:T$ is well-formed if both $\ts \Gamma \wf$ and $\Gamma \ts T \wf$.

We implicitly assume all variables in an environment are unique, for instance
whenever we write $\Gamma,x:T$ we assume that $x \not\in \dom(\Gamma)$.

In order to make proving Preservation (\autoref{th:preservation}) possible,
we need to relate the derivation of a type for a term of the form $\storeof{t}$
to the derivation of the type of the {\it plug} term $t$ inside the store $\sigma$.
To do so, we need to define what environment {\it matches} the store.

\begin{figure}[H]
  \judgement{\textsf{Matching environment}}{\fbox{$\Gamma \ts \sigma \sim \Delta$}}\\
  [1em]
  \begin{minipage}{0.5\textwidth}
    \infrule{%
      \Gamma,x:T \ts \sigma \sim \Delta \quad
      \Gamma \ts v \typ T \quad
      x \not\in \fv(T)
    }{%
      \Gamma \ts \Let{x}{v}{\sigma} \sim x:T,\Delta
    }
  \end{minipage}%
  \begin{minipage}{0.5\textwidth}
    \infax{%
      \Gamma \ts [\,] \sim \cdot
    }
  \end{minipage}
\end{figure}

\begin{definition}[Evaluation context typing ($\Gamma |- e : U => T$)]
  We say that $\Gamma$ types an evaluation context $e$ as $U => T$ if $\Gamma,x:U |- \evalof{x} : T$.
\end{definition}

\FloatBarrier

\begin{fact}
  If $\storeof{t}$ is a well-typed term in $\Gamma$,
  then there exists a $\Delta$ matching $\sigma$
  (i.e. such that $\Gamma \ts \sigma \sim \Delta$),
  finding it is decidable,
  and $\Gamma,\Delta$ is well-formed.
\end{fact}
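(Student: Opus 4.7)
The plan is to proceed by induction on the structure of the store $\sigma$, simultaneously exhibiting the algorithm that computes $\Delta$. In the base case $\sigma = [\,]$, the empty context $\Delta = \cdot$ trivially matches by the axiom rule, and $\Gamma, \cdot = \Gamma$ is well-formed because the hypothesis that $\storeof{t}$ is well-typed in $\Gamma$ presumes $\ts \Gamma \wf$.

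For the inductive step, $\sigma = \Let{x}{v}{\sigma'}$, so $\storeof{t} = \Let{x}{v}{\sigma'\{t\}}$. I would invert the typing derivation of this let-term: any such derivation must assign $v$ some type $T$ in $\Gamma$, and must type the body $\sigma'\{t\}$ in the extended environment $\Gamma, x:T$, with $x \not\in \fv(T)$ coming from the scoping side-condition on let (and guaranteed by $x \not\in \dom(\Gamma)$ together with $\Gamma \ts T \wf$). Applying the induction hypothesis to the smaller store $\sigma'$ in context $\Gamma, x:T$ yields a $\Delta'$ with $\Gamma, x:T \ts \sigma' \sim \Delta'$ and $\ts \Gamma, x:T, \Delta' \wf$. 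Then $\Delta := x:T, \Delta'$ satisfies $\Gamma \ts \sigma \sim \Delta$ by one application of the non-axiom rule, and $\ts \Gamma, \Delta \wf$ is exactly the well-formedness just obtained.

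The decidability claim then reduces to decidability of the inversion step: given that $\storeof{t}$ is known well-typed, we need to algorithmically extract a type $T$ for each bound $v$. The subtle point is that in the presence of subtyping $T$ is not unique, so I would pin down the choice by reading $T$ off a canonical derivation -- for instance, the most specific type computed by a type-synthesis procedure on $v$ in $\Gamma$, which exists because $v$ is a value and value typing is typically algorithmic. The main obstacle I expect is precisely this: ensuring that the $T$ we pick is one for which $x \not\in \fv(T)$ and for which the body $\sigma'\{t\}$ remains typeable in $\Gamma, x:T$, so that the induction hypothesis applies. Both properties should follow from the fact that such a $T$ must already appear (perhaps narrowed) in the derivation we started from, but making this precise will likely rely on a standard narrowing or type-preservation-under-subtyping lemma for the calculus.
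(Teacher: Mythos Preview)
Your proposal is correct. The paper states this as a \emph{Fact} without proof, treating it as routine; your induction on $\sigma$ with inversion of the \ruleref{let} typing is the natural way to establish existence, and your identification of narrowing as the lemma needed to make the decidability argument go through (so that the principal type of each bound $v$ can be chosen while keeping the body typeable) is exactly the right observation. One small clarification: the side-condition $x \not\in \fv(T)$ in the matching rule is not the scoping side-condition of \ruleref{let} itself (that one constrains the \emph{result} type of the let), but, as you also note parenthetically, it follows directly from $\Gamma \ts T \wf$ and $x \not\in \dom(\Gamma)$.
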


\begin{fact}
  The analogous holds for $\evalof{t}$.
\end{fact}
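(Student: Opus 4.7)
The plan is to mirror the inductive proof of the preceding fact, replacing induction on the store $\sigma$ with induction on the evaluation context $e$. The target, guided by the \emph{Evaluation context typing} definition above, is: whenever $\Gamma \ts \evalof{t} \typ T$, there exists a type $U$ such that $\Gamma, x:U \ts \evalof{x} \typ T$; this $U$ is effectively computable from $\Gamma$, $e$, and the given derivation; and $\Gamma, x:U$ is well-formed.

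For the base case $e = [\,]$ we have $\evalof{t} = t$ and $\Gamma \ts t \typ T$. Take $U := T$: the variable rule immediately gives $\Gamma, x:T \ts x \typ T$, and the well-formedness of $T$ in $\Gamma$ (hence of $\Gamma, x:T$) follows from the standard lemma that a well-typed term has a well-formed type.

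For the inductive step, I would walk through each frame constructor that the grammar of evaluation contexts admits --- applications with a hole, projections, let-bindings with the hole in the subject, and so on. In every case, inversion on the typing derivation of $\evalof{t}$ exposes an immediate subterm of the form $e'[t]$, for a strictly smaller context $e'$, typed with some intermediate type $T'$ in an environment $\Gamma'$ that extends $\Gamma$ by whatever bindings the surrounding frame introduces. The induction hypothesis, applied to $\Gamma' \ts e'[t] \typ T'$, furnishes a type $U$ together with a derivation $\Gamma', x:U \ts e'[x] \typ T'$; re-applying the outer frame's typing rule on top of that derivation rebuilds $\Gamma, x:U \ts \evalof{x} \typ T$. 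Decidability is built into the same inversion steps, since the syntax of $e$ dictates which typing rule to invert at each layer and the base case pins $U$ uniquely.

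The main obstacle I expect is the let-frame $\Let{y}{[\,]}{s}$ (or any analogous binding frame), where inversion has to commit to a type for the freshly bound variable $y$ over the continuation $s$. In the presence of subtyping this type is not uniquely determined from the outer derivation, so one has to lean on whatever minimal-typing or inversion lemma the paper already establishes for values (the same lemma that justifies the $\Gamma \ts v \typ T$ premise in the store matching rule) in order to pick a specific $U$, and to verify an avoidance side condition $y \notin \fv(U)$ analogous to the one in the \textsf{Matching environment} judgement. Once that choice is fixed, the frame-by-frame reconstruction is routine and delivers existence, decidability, and well-formedness simultaneously.
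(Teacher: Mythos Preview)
Your plan is sound, but it takes a different inductive route from the paper. The paper states this as a Fact without proof; the existence claim is discharged a few lines later by the Lemma ``Evaluation context typing inversion'', which proceeds by induction on the \emph{typing derivation} of $\evalof{s}$ rather than on the structure of $e$. That choice makes \ruleref{sub} a direct inductive case instead of something to be absorbed into a let-inversion lemma. Both strategies succeed; the paper's is marginally leaner. Note also that in this calculus the only non-trivial evaluation-context frame is $\Let{y}{e'}{s}$, so your walk through application and projection frames is moot here.

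Your anticipated obstacle is largely a non-issue. Because the hole sits in the \emph{bound} position of the let (not in the body), inverting $\Gamma \ts \Let{y}{e'[t]}{s} : T$ already hands you a concrete intermediate type $T'$ from the premises of \ruleref{let} (after peeling off any \ruleref{sub} steps); there is no need for a minimal-typing lemma on values --- that concern belongs to the store case, where the let binds a value whose type must be chosen independently. Here you simply take the $T'$ present in the given derivation, apply the IH on $e'$ to obtain $U$, weaken and permute $\Gamma, y:T' \ts s : T$ into $\Gamma, x:U, y:T'$ (licit since $U$ is well-formed in $\Gamma$, so $y \notin \fv(U)$), and re-apply \ruleref{let}. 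One small correction to your base case: \ruleref{var} types $x$ at $\{x\}\,R$, not directly at $T = C\,R$; a single use of \ruleref{sub} via $\{x\} <: C$ (by \ruleref{sc-var} and reflexivity) is needed to close the gap.
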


\subsection{Properties of Evaluation Contexts and Stores}

In the proof, we will be using $D$ as another metavariable for capture sets.

We will also be writing $\cv(T)$ to extract the capture set of a type (which is
defined simply as $\cv(C\;R) = C$).

We assume that all environments in the premises of our lemmas and theorems are
well-formed.

\begin{lemma}[Inversion of typing under an evaluation context]
  \label{lm:ectx-plug-typ}
  If $\Gamma \ts \evalof{u} : T$, then $u$ is well-typed in $\Gamma$.
\end{lemma}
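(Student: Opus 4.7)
The plan is to proceed by induction on the structure of the evaluation context $e$. Evaluation contexts have a fixed grammar -- typically either a hole $[\,]$ or a let-binding of the form $\Let{x}{e'}{t}$ with $e'$ itself a (smaller) evaluation context -- so the induction has only one base case and one inductive case.

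\textbf{Base case.} When $e = [\,]$, the term $\evalof{u}$ is just $u$ itself, so the hypothesis $\Gamma \ts u : T$ is exactly the conclusion.

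\textbf{Inductive case.} When $e = \Let{x}{e'}{t}$, the term $\evalof{u}$ unfolds to $\Let{x}{e'[u]}{t}$, where $e'[u]$ denotes plugging $u$ into the subcontext $e'$. An inversion argument on the typing of this let-term yields an intermediate type $U$ with $\Gamma \ts e'[u] : U$, together with a typing of the body $\Gamma, x : U \ts t : T_0$ for some $T_0$ related to $T$. Applying the induction hypothesis to the strictly smaller $e'$ then gives that $u$ is well-typed in $\Gamma$, which is the conclusion.

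The main obstacle is that the last rule in the derivation of $\Gamma \ts \Let{x}{e'[u]}{t} : T$ need not be the let typing rule itself -- subsumption, capture widening, or similar rules that do not inspect the top-level shape of the term may have been applied last. To absorb these uniformly, I would first prove a generation (inversion) lemma for let: any derivation of $\Gamma \ts \Let{x}{s}{t} : T$ can be normalized to yield $\Gamma \ts s : U$ and $\Gamma, x : U \ts t : T_0$, with $T_0$ a (capture-)subtype of $T$ and $x \not\in \fv(T)$. Once this generation lemma is in hand, the inductive step is routine and the whole proof is a straightforward structural recursion on $e$.
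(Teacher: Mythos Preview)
Your proposal is correct and follows essentially the same approach as the paper: induction on the structure of $e$, with the base case trivial and the inductive case handled by inverting the typing of the \textsf{let} to extract a typing for $e'[u]$, then applying the IH. The paper's proof is terser and simply writes ``by inverting the typing derivation'' where you spell out the need for a generation lemma to absorb uses of \ruleref{sub}; that elaboration is sound and is exactly what the paper's phrase is implicitly invoking.
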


\begin{proof}
  Proceed by induction on $e$.  If $e$ is empty then the result directly holds,
  so consider when $s = \Let{x}{e'[u]}{s}$.  By inverting the typing derivation
  we know that $\Gamma \ts e'[u] : T'$, and hence by induction we conclude that
  $\Gamma \ts s : U$ for some type $U$.
\end{proof}

\begin{lemma}[Evaluation context typing inversion]\forcenl
  \indent
  $\Gamma \ts \evalof{s} : T$ implies that for some $U$ we have $\Gamma |- e : U => T$ and $\Gamma |- s : U$.
\end{lemma}
\begin{proof}
  First, we show that for some $U$, we have $\Gamma,x:U |- \storeof{e} : T$.
  This can be done by a straightforward induction on the derivation of typing for $\evalof{s}$.

  By definition, this then gives us $\Gamma |- e : U => T$.
\end{proof}

\begin{lemma}[Evaluation context reification]\forcenl
  \label{lm:tm-rectx-eval}
  \indent
  If both $\Gamma \ts e : U => T$ and $\Gamma \ts s : U$, then $\Gamma \ts \evalof{s} : T$.
\end{lemma}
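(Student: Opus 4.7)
My plan is to proceed by induction on the structure of the evaluation context $e$, mirroring the case split used in the proof of \lref{lm:ectx-plug-typ}. The definition of $\Gamma \ts e : U => T$ unfolds the hypothesis to $\Gamma, x:U \ts \evalof{x} : T$ for a fresh $x$, and this is the derivation I will analyze in each case.

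In the base case $e = [\,]$, we have $\evalof{x} = x$ and $\evalof{s} = s$, so the hypothesis becomes $\Gamma, x:U \ts x : T$. Inverting the variable rule (followed by some subsumption steps) and using that $T$ is well-formed in $\Gamma$ alone, so that $x \not\in \fv(T)$, I would derive that $U <: T$ in $\Gamma$, from which $\Gamma \ts s : U$ plus subsumption yields $\Gamma \ts s : T$ as required.

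In the inductive case $e = \Let{y}{e'[\cdot]}{s'}$, we have $\evalof{x} = \Let{y}{e'[x]}{s'}$ and $\evalof{s} = \Let{y}{e'[s]}{s'}$. Inverting the let-rule on the hypothesis yields $\Gamma, x:U \ts e'[x] : U'$ for some $U'$ together with $\Gamma, x:U, y:U' \ts s' : T$, plus the usual capture-set side conditions. The first conjunct is by definition $\Gamma \ts e' : U => U'$, so applying the IH to $e'$ with $\Gamma \ts s : U$ delivers $\Gamma \ts e'[s] : U'$. Since $x$ was introduced only to stand for the hole and hence is fresh for $s'$, I can strengthen the typing of $s'$ to $\Gamma, y:U' \ts s' : T$, and then reapply the let-rule to obtain $\Gamma \ts \Let{y}{e'[s]}{s'} : T$, which is exactly $\Gamma \ts \evalof{s} : T$.

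The main obstacle will be threading the capture-set conditions through the let-rule inversion and reintroduction: the premise constraining $x$ (coming from the fresh-witness typing) must translate into the correct premise on the plugged-in $s$, whose captures are bounded by $\cv(U)$. The base-case subsumption is similarly delicate, as it requires widening the singleton capture $\{x\}$ produced by the variable rule to a capture set not mentioning $x$, which should follow from the standing assumption that $T$ is well-formed in $\Gamma$. Both should be routine once the relevant subtyping step for capture sets is identified.
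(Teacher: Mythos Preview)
Your approach is genuinely different from the paper's. The paper dispatches this lemma in one line as a corollary of the term substitution lemma (\lref{lm:tm-subst-typ}): from $\Gamma,x:U \ts \evalof{x} : T$ together with $\Gamma \ts s : U$, substituting for the fresh placeholder $x$ yields $\Gamma \ts \evalof{s} : T$ directly, using $x \not\in \fv(T)$. Your structural induction on $e$ unpacks more of the structure, and the base case is essentially fine, but the inductive step has a real gap that the substitution route avoids.

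The problem is the intermediate type $U'$ you obtain by inverting \ruleref{let} on $\Gamma,x:U \ts \Let{y}{e'[x]}{s'} : T$. Nothing guarantees that $U'$ avoids $x$: already when $e' = [\,]$, the principal type of $e'[x] = x$ given by \ruleref{var} is $\{x\}\,R$. When $x \in \fv(U')$, two things break at once. First, the judgement ``$\Gamma \ts e' : U \Rightarrow U'$'' is not a well-formed instance of evaluation-context typing (the result type must be independent of the fresh hole variable), so you cannot invoke the induction hypothesis to get $\Gamma \ts e'[s] : U'$. Second, the context $\Gamma,y:U'$ you need after ``strengthening'' is itself ill-formed. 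To repair this you would have to argue that $U'$ can always be chosen to avoid $x$ --- an avoidance step --- and separately establish a strengthening lemma that drops an unused term binding, neither of which the paper provides. The substitution-based argument sidesteps both issues because it carries the substitution into the intermediate types rather than trying to excise $x$ from the context after the fact.
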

\begin{proof}
  A corollary of Lemma~\label{lm:tm-subst-typ}.
\end{proof}

\begin{lemma}[Replacement of term under an evaluation context]
  \label{lm:ectx-replug}
  If $\Gamma \ts \evalof{s} : U => T$
  and $\Gamma \ts s' : U$,
  then $\Gamma \ts \evalof{s'} : T$.
\end{lemma}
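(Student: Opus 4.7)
The plan is to derive the lemma as a direct composition of the two immediately preceding results: the \emph{Evaluation context typing inversion} lemma and the \emph{Evaluation context reification} lemma (\lref{lm:tm-rectx-eval}). Informally, inversion lets us decompose a typing of $\evalof{s}$ into a typing of the context $e$ at $U => T$ together with a typing of the plug $s$ at $U$; reification is the converse, taking such a pair to a typing of the re-plugged term. Composing the two lets us swap $s$ for any $s'$ of the same hole-type $U$, which is exactly what the lemma asserts.

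Concretely, first I would apply the inversion lemma to the hypothesis $\Gamma \ts \evalof{s} : T$, reading $U => T$ as the context type extracted by inversion. This yields $\Gamma \ts e : U => T$ together with $\Gamma \ts s : U$. Then, using the given $\Gamma \ts s' : U$, I would invoke \lref{lm:tm-rectx-eval} on $\Gamma \ts e : U => T$ and $\Gamma \ts s' : U$ to conclude $\Gamma \ts \evalof{s'} : T$, as required.

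I anticipate no substantive obstacle: the whole argument is two lemma invocations, and the real work (structural recursion on $e$, handling the store-let case, etc.) has already been discharged in the inversion and reification lemmas. The only point requiring care is ensuring that the $U$ produced by inversion is the same $U$ supplied with $s'$; under the natural reading of the statement, where $U$ denotes the hole-type of $e$ under $\Gamma$ (as witnessed by $\evalof{s}$ being well-typed at $T$), this coincidence is automatic, and no additional uniqueness-of-typing argument is needed.
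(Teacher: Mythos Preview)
Your proposal is correct and aligns with the paper's approach: the paper records the lemma simply as ``a corollary of \lref{lm:tm-rectx-eval}'' (reification). The only difference is that you additionally invoke the inversion lemma to extract $\Gamma \ts e : U \Rightarrow T$ from the premise, whereas the paper reads the (admittedly sloppy) premise notation $\Gamma \ts \evalof{s} : U \Rightarrow T$ as already asserting the context typing $\Gamma \ts e : U \Rightarrow T$ directly, so that reification alone suffices; under either reading the argument is immediate and your extra inversion step is harmless.
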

\begin{proof}
  A corollary of \label{lm:tm-rectx-eval}.
\end{proof}

\begin{lemma}[Store context reification]
  \label{lm:tm-rectx-store}
  If $\Gamma,\Delta \ts t \typ T$
  and $\Gamma \ts \sigma \sim \Delta$
  then $\Gamma \ts \storeof{t} \typ T$.
\end{lemma}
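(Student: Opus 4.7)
The plan is to proceed by induction on the derivation of $\Gamma \ts \sigma \sim \Delta$, which amounts to induction on the structure of $\sigma$, with $t$ and $T$ held schematic so that the induction hypothesis can be reapplied under an extended context.

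In the base case $\sigma = [\,]$ we have $\Delta = \cdot$, so $\storeof{t}$ is just $t$ and $\Gamma,\Delta = \Gamma$; the goal then coincides with the hypothesis. In the inductive step the matching derivation ends in the let rule, so $\sigma = \Let{x}{v}{\sigma'}$ and $\Delta = x:U,\Delta'$, with premises $\Gamma,x:U \ts \sigma' \sim \Delta'$, $\Gamma \ts v \typ U$, and $x \notin \fv(U)$. Reading the typing hypothesis as $(\Gamma,x:U),\Delta' \ts t \typ T$ lets me invoke the induction hypothesis under the enlarged context $\Gamma,x:U$, yielding a typing of the inner plug $\sigma'[t]$ at the same type $T$. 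Since $\sigma[t]$ unfolds to $\Let{x}{v}{\sigma'[t]}$, a single application of the let typing rule, combined with the already-established $\Gamma \ts v \typ U$, finishes the case.

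The point I expect to require care is the scoping side condition of the let typing rule, which demands that $x$ not occur free in the result type $T$. This is not directly supplied by the matching rule, whose freshness premise concerns only $\fv(U)$. I plan to discharge it from the ambient well-formedness assumptions: since the conclusion $\Gamma \ts \storeof{t} \typ T$ must itself be a well-formed judgement, $T$ is well-formed in $\Gamma$, and the uniqueness-of-bindings convention forces $x \notin \dom(\Gamma)$, from which $x \notin \fv(T)$ follows. With that freshness obligation discharged, the let typing rule applies cleanly and the induction goes through.
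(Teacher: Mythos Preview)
Your proof is correct and follows essentially the same approach as the paper: induction on the structure of $\sigma$, combining the \ruleref{let} rule with the induction hypothesis, and discharging the side condition $x \notin \fv(T)$ by appeal to the ambient assumption $\Gamma \ts T \wf$. The only cosmetic difference is that the paper peels off the innermost binding (writing $\sigma = \sigma'[\Let{x}{v}{[\,]}]$ and applying \ruleref{let} before the IH), whereas you peel the outermost one following the matching rule's structure and apply the IH in the extended context $\Gamma,x:U$ first; both orderings work.
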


\begin{proof}
  By induction on $\sigma$.

  \begin{proofcases}
    \pcase{$\sigma = [\,]$} Immediate.

    \pcase{$\sigma = \sigma'[\ \Let{x}{v}{[\,]}\ ]$}
    Then $\Delta = \Delta', x:U$ for some $U$.
    Since $x \not\in \fv(T)$ as $\Gamma \ts T \wf$,
    by \ruleref{let}, we have that $\Gamma, \Delta' \ts \Let{x}{v}{t}$
    and hence by the induction hypothesis we have, for some $U$, that
    $\Gamma, x \typ U \ts \sigma'[t] \typ T$.  The result follows directly.
  \end{proofcases}
\end{proof}

The above lemma immediately gives us:

\begin{corollary}[Replacement of term under a store context]
  \label{lm:sctx-replug}
  If $\Gamma \ts \storeof{t} : T$
  and $\Gamma \ts \sigma \sim \Delta$
  and $\Gamma,\Delta \ts t : T$,
  then for all $t'$
  such that $\Gamma,\Delta \ts t' : T$
  we have $\Gamma \ts \storeof{t'} : T$.
\end{corollary}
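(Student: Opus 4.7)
The plan is to observe that this corollary is an immediate consequence of \lref{lm:tm-rectx-store}. Concretely, fix an arbitrary $t'$ satisfying $\Gamma,\Delta \ts t' : T$. Then the hypotheses of store context reification are already in place: $\Gamma,\Delta \ts t' : T$ by assumption, and $\Gamma \ts \sigma \sim \Delta$ is given. Applying the lemma with $t'$ in the role of $t$ yields $\Gamma \ts \storeof{t'} : T$, which is the desired conclusion.

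Note that the other two hypotheses of the corollary, namely $\Gamma \ts \storeof{t} : T$ and $\Gamma,\Delta \ts t : T$, are not actually used in the deduction. They are stated mainly for symmetry with \lref{lm:ectx-replug} and to record the typing situation from which, in typical applications, a matching environment $\Delta$ is obtained (via the Fact that states such a $\Delta$ exists whenever $\storeof{t}$ is well-typed).

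Since the real work has already been done in \lref{lm:tm-rectx-store}, no induction or case analysis is required, and there is no substantial obstacle. The only subtlety worth flagging is that the type $T$ must remain the same when we swap $t$ for $t'$: the proof of store context reification relies on the freshness side-condition $x \not\in \fv(T)$ at each \eruleref{let} binding of $\sigma$, and that side-condition is only preserved automatically under replacement because the annotating type is unchanged. If one wanted to generalise the corollary to allow $t'$ at a different type $T'$, additional freshness assumptions on $T'$ relative to $\dom(\Delta)$ would have to be threaded through.
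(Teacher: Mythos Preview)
Your proposal is correct and matches the paper's approach exactly: the paper simply states that the corollary follows immediately from \lref{lm:tm-rectx-store}. Your additional observations about the unused hypotheses and the freshness side-condition are accurate extras, but the core argument is the same one-line application of store context reification.
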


\subsection{Properties of Subcapturing}

\begin{lemma}[Top capture set]\label{lm:top-cs}
  Let $\Gamma |- C \wf$. Then $\Gamma |- C <: \UC$.
\end{lemma}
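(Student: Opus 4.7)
The plan is to proceed by induction on the finite structure of $C$, reducing the problem to handling a single capability reference at a time and then lifting this through the set-based subcapturing rule.

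First, I would handle the base case where $C = \emptyset$: the empty capture set is subcaptured by any well-formed capture set, which should follow immediately from whichever set-inclusion style rule the calculus provides for subcapturing (every element of the empty set is trivially an element of $\UC$). Next, for the inductive step, I would decompose $C$ as $\{x\} \cup C'$ where $x \in \dom(\Gamma)$ by well-formedness of $C$, apply the induction hypothesis to conclude $\Gamma \ts C' <: \UC$, and separately establish $\Gamma \ts \{x\} <: \UC$. Combining the two singleton/subset facts via the elementwise subcapturing rule then gives the desired $\Gamma \ts C <: \UC$.

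The only non-routine step is showing $\Gamma \ts \{x\} <: \UC$ for an arbitrary well-formed reference $x$. Since $\UC = \{*\}$ denotes the universal capability set, the cleanest route is to invoke the subcapturing axiom (or derived rule) stating that every capability $x$ is captured by $*$ — this is typically built into the subcapturing relation precisely so that $\UC$ acts as the top element. Well-formedness of $C$ guarantees that $x$ is in scope, which is the only side condition such a rule requires.

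The main obstacle, as far as I can see, is purely administrative: ensuring that the subcapturing judgement in this presentation indeed admits both a set-inclusion style rule (so that we can assemble singletons into $C$) and a rule making $*$ a top element (so that each singleton is subcaptured by $\UC$). If the calculus defines subcapturing via a single "subset-plus-transitivity-through-bounds" schema rather than separate rules, the argument collapses to checking that $x \in \UC$ holds in the relevant subset sense, which is immediate from $* \in \UC$ and the semantics of $*$. In either formulation, no induction on typing derivations or on types is required — only an induction on the finite set $C$.
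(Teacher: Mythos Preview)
There is a genuine gap in your argument, and it sits exactly where you flag it as ``purely administrative.'' In this calculus there is \emph{no} primitive rule stating that $\{x\} <: \UC$ for a term variable $x$; the only subcapturing rules are \ruleref{sc-elem} (membership), \ruleref{sc-var} (go through the bound's capture set), and \ruleref{sc-set} (pointwise). Since $\UC = \{\rcap\}$ and a term variable $x$ is not literally the element $\rcap$, rule \ruleref{sc-elem} does not apply. The only route is \ruleref{sc-var}: from $x : D\,S \in \Gamma$ you must first show $\Gamma \ts D <: \UC$ and only then conclude $\Gamma \ts \{x\} <: \UC$. But that premise is another instance of the very lemma you are proving, for a \emph{different} capture set $D$ that bears no size relation to your original $C$. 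Your induction on the structure of $C$ gives you no handle on it.

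The paper's proof resolves this by doing induction on $\Gamma$ instead. The point is that if $x$ is the last binding in $\Gamma = \Gamma', x:D\,S$, then well-formedness of $\Gamma$ forces $\Gamma' \ts D \wf$, so the induction hypothesis (for the shorter environment $\Gamma'$) yields $\Gamma' \ts D <: \UC$; weakening and \ruleref{sc-var} then give $\Gamma \ts \{x\} <: \UC$. Variables bound earlier are handled by the same IH applied to the singleton $\{y\}$, which is well-formed already in $\Gamma'$. In short, the decreasing measure is the environment, not the capture set, and the lemma is precisely what establishes that $\UC$ is top --- it is not available as an axiom to be invoked.
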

\begin{proof}
  By induction on $\Gamma$.
  If $\Gamma$ is empty, then $C$ is either empty or $\rcap \in C$, so we can conclude by \ruleref{sc-set} or \ruleref{sc-elem} correspondingly.
  Otherwise, $\Gamma = \Gamma',x:D\,S$ and since $\Gamma$ is well-formed, $\Gamma' |- D \wf$.
  By \ruleref{sc-set}, we can conclude if for all $y \in C$ we have $\Gamma |- \{y\} <: \UC$.
  If $y = x$, by IH we derive $\Gamma' |- D <: \UC$, which we then weaken to $\Gamma$ and conclude by \ruleref{sc-var}.
  If $y \neq x$, then $\Gamma' |- \{y\} \wf$, so by IH we derive $\Gamma' |- \{y\} <: \UC$ and conclude by weakening.
\end{proof}

\begin{corollary}[Effectively top capture set]\label{lm:effectitop-cs}
  Let $\Gamma |- C,D \wf$ such that $\rcap \in D$. Then we can derive $\Gamma |- C <: D$.
\end{corollary}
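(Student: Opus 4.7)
The plan is to reduce the corollary to Lemma~\ref{lm:top-cs} via transitivity of subcapturing. First, I would apply Lemma~\ref{lm:top-cs} using the hypothesis $\Gamma \vdash C \wf$ to obtain $\Gamma \vdash C <: \UC$.

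Next, I would establish the auxiliary fact $\Gamma \vdash \UC <: D$. Since $\UC$ is (effectively) the singleton $\{\rcap\}$, by \ruleref{sc-set} this reduces to $\Gamma \vdash \{\rcap\} <: D$, which is immediate from \ruleref{sc-elem} since $\rcap \in D$. Chaining the two derivations by transitivity then yields $\Gamma \vdash C <: D$, as required.

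The main obstacle I anticipate is whether transitivity of subcapturing is directly available as a derived rule at this point in the development. If it is not yet proved, a clean alternative is to generalize the statement of Lemma~\ref{lm:top-cs} by replacing $\UC$ with an arbitrary well-formed $D$ satisfying $\rcap \in D$; the induction structure carries over almost verbatim, with \ruleref{sc-elem} invoked on the witness $\rcap \in D$ in place of $\rcap \in \UC$, and both Lemma~\ref{lm:top-cs} (instantiating $D := \UC$) and this corollary drop out as special cases. The delicate point of such a generalization is the case $\Gamma = \Gamma', x:D'\,S$ with $y = x$: the appeal to the IH then requires $D$ to remain well-formed in the smaller environment $\Gamma'$, which may fail if $D$ mentions $x$. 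The transitivity route sidesteps this issue entirely by reasoning only in the full environment $\Gamma$, which is why I would prefer it as the primary approach.
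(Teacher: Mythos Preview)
Your proposal is correct and follows essentially the same approach as the paper: derive $\Gamma \vdash C <: \UC$ from Lemma~\ref{lm:top-cs}, derive $\Gamma \vdash \UC <: D$ from \ruleref{sc-elem} (since $\UC = \{\rcap\}$ and $\rcap \in D$), and chain via subcapturing transitivity (Lemma~\ref{lm:sc-trans}). Your worry about transitivity is unfounded in this development---the paper's own proof invokes Lemma~\ref{lm:sc-trans} as a forward reference, which is harmless since its proof does not depend on this corollary.
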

\begin{proof}
  We can derive $\Gamma |- C <: \UC$ by Lemma~\ref{lm:top-cs} and then we can conclude by Lemma~\ref{lm:sc-trans} and \ruleref{sc-elem}.
\end{proof}

\begin{lemma}[Star subcapture inversion]\label{lm:star-sc-inversion}
  Let $\Gamma |- C <: D$. If $\rcap \in C$, then $\rcap \in D$.
\end{lemma}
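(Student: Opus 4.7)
The plan is to induct on the derivation of $\Gamma \vdash C <: D$ and case-split on the final rule applied. The conceptual heart of the argument is that $\rcap$ cannot be bound in any environment, so the only subcapturing rule that can genuinely shrink the left-hand side, namely \ruleref{sc-var}, cannot fire with $\rcap$ as its subject. Consequently $\rcap$ is ``preserved'' through every step of the derivation.

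I would dispatch the cases as follows. For \ruleref{sc-elem}, $C = \{x\}$ with $x \in D$, and the hypothesis $\rcap \in C$ forces $x = \rcap$, giving $\rcap \in D$ immediately. For \ruleref{sc-set}, the premises supply $\Gamma \vdash \{y\} <: D$ for each $y \in C$; choosing $y = \rcap$ yields a strictly smaller derivation of $\Gamma \vdash \{\rcap\} <: D$, to which the induction hypothesis applies. The \ruleref{sc-var} case I expect to be vacuous, since its premise demands $C = \{x\}$ for some $x$ bound in $\Gamma$, which is incompatible with $\rcap \in C$. If transitivity \ruleref{sc-trans} is a primitive rule, with intermediate set $C'$, I apply the induction hypothesis to the first premise to pull $\rcap$ into $C'$, then apply it again to the second premise to land $\rcap$ in $D$.

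The step I expect to be the main obstacle is simply sanity-checking the rule set: I need to be confident that no other subcapturing rule (reflexivity, weakening, or any admissible extension) lets $\rcap$ silently disappear from the left. In particular, I want to verify that \ruleref{sc-var}'s side condition really requires the subject to be an ordinary term variable rather than permitting $\rcap$ to be treated as one, and that the \ruleref{sc-set} decomposition is set up element-wise (so the premise for $\rcap$ really exists as a subderivation). Once those points are nailed down, every case reduces to one of the four patterns above.
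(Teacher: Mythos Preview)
Your proposal is correct and follows essentially the same route as the paper's proof: induction on the subcapturing derivation, with \ruleref{sc-elem} immediate, \ruleref{sc-set} handled by the induction hypothesis on the premise for $\rcap$, and \ruleref{sc-var} vacuous because $\rcap$ is not a term variable bound in $\Gamma$. Note that in this system transitivity is \emph{not} a primitive subcapturing rule (it is derived later as \lref{lm:sc-trans}), so your hypothetical \ruleref{sc-trans} case does not arise; the three cases you analyse are exactly the three rules that exist.
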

\begin{proof}
  By induction on subcapturing.
  Case \ruleref{sc-elem} immediate, case \ruleref{sc-set} by repeated IH, case \ruleref{sc-var} contradictory.
\end{proof}

\begin{lemma}[Subcapturing distributivity]
  \label{lm:sc-distr}
  If $\Gamma \ts C \sub D$ and $x \in C$,
  then $\Gamma \ts \{x\} \sub D$.
\end{lemma}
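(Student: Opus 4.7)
The plan is to proceed by induction on the derivation of $\Gamma \ts C \sub D$, doing a straightforward case analysis on the final rule. Since the hypothesis $x \in C$ places a constraint on the shape of $C$, I expect each case to reduce to something essentially immediate, matching the style of \lref{lm:star-sc-inversion} above.

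First I would dispatch \ruleref{sc-elem}. Here $C$ must be a singleton $\{y\}$ with $y \in D$, and from $x \in C$ I conclude $x = y$, so $\Gamma \ts \{x\} \sub D$ is exactly the conclusion of \ruleref{sc-elem} itself. Next, for \ruleref{sc-set} the rule supplies, as a premise, a derivation of $\Gamma \ts \{y\} \sub D$ for every $y \in C$; instantiating this premise at $y = x$ directly yields the desired judgment, so the result follows by induction hypothesis (or indeed, without even invoking it, simply by picking the right sub-derivation). Finally, for \ruleref{sc-var} the conclusion already has a singleton on the left, so once more $C = \{x\}$ and the rule's own conclusion suffices.

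I do not expect any real obstacle: the only thing to watch is making sure the shape-of-$C$ analysis is exhaustive, i.e. that there is no rule which concludes $\Gamma \ts C \sub D$ for a non-singleton $C$ other than \ruleref{sc-set}. If the subcapturing judgment has additional structural rules (e.g.\ a reflexivity or transitivity primitive, as hinted by the reference to \lref{lm:sc-trans}), those would need to be handled too: reflexivity reduces to \ruleref{sc-elem}, and a transitivity case $\Gamma \ts C \sub E$ and $\Gamma \ts E \sub D$ would be handled by first applying IH to the left premise (producing some $y \in E$ with $\Gamma \ts \{x\} \sub E$ via an intermediate step, or directly $\Gamma \ts \{x\} \sub E$) and then combining with the right premise via \lref{lm:sc-trans}. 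Either way the argument is uniform and brief.
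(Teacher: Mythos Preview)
Your proposal is correct and matches the paper's approach: a case analysis on the last rule, with \ruleref{sc-set} giving the goal directly as a premise and the other two rules forcing $C=\{x\}$. Two minor remarks: no induction hypothesis is ever actually invoked (the paper phrases it as ``by inspection''), and your hedging about primitive reflexivity or transitivity is unnecessary here, since in this system both are admissible (Lemmas~\ref{lm:sc-refl} and~\ref{lm:sc-trans}) rather than rules of the subcapturing judgment.
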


\begin{proof}
  By inspection of the last subcapturing rule used to derive $C \sub D$.
  All cases are immediate. If the last rule was \ruleref{sc-set}, we have our goal
  as premise. Otherwise, we have that $C = \{x\}$, and the result
  follows directly.
\end{proof}

\begin{lemma}[Subcapturing transitivity]
  \label{lm:sc-trans}
  If $\Gamma \ts C_1 \sub C_2$
  and $\Gamma \ts C_2 \sub C_3$
  then $\Gamma \ts C_1 \sub C_3$.
\end{lemma}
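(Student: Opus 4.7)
The plan is to proceed by induction on the derivation of $\Gamma \ts C_1 \sub C_2$, casing on the last subcapturing rule applied. Inducting on the left derivation (rather than the right one) is motivated by the fact that both \ruleref{sc-var} and \ruleref{sc-elem} constrain the shape of $C_1$ to be a singleton, which gives immediate leverage in those cases; the remaining \ruleref{sc-set} case then handles the general shape by distributing the work across elements.

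In the \ruleref{sc-set} case, $C_1 \sub C_2$ is obtained from sub-derivations of $\{x\} \sub C_2$ for each $x \in C_1$. For each such $x$, I would apply the induction hypothesis against the given $C_2 \sub C_3$ to obtain $\{x\} \sub C_3$, and a final application of \ruleref{sc-set} reassembles these into $C_1 \sub C_3$. Note that this case degenerates well when $C_1$ is empty or singleton.

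In the \ruleref{sc-elem} case, we have $C_1 = \{x\}$ with $x \in C_2$, so I would directly invoke \lref{lm:sc-distr} on the hypothesis $\Gamma \ts C_2 \sub C_3$ with the witness $x \in C_2$ to obtain $\{x\} \sub C_3$, which is the goal. In the \ruleref{sc-var} case, we have $C_1 = \{x\}$ where $x : D\,R \in \Gamma$ together with a sub-derivation of $D \sub C_2$; combining this sub-derivation with the given $C_2 \sub C_3$ via the induction hypothesis yields $D \sub C_3$, after which reapplying \ruleref{sc-var} gives $\{x\} \sub C_3$.

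I do not expect a substantive obstacle: \lref{lm:sc-distr} is designed precisely to discharge the \ruleref{sc-elem} case, and the other two cases are routine structural recursion. The only subtlety worth noting is that the induction must be on the derivation rather than on the syntactic size of $C_1$, since \ruleref{sc-var} preserves the singleton shape of its conclusion while producing a strictly smaller premise $D \sub C_2$; the decreasing measure is therefore the derivation tree of $\Gamma \ts C_1 \sub C_2$ itself.
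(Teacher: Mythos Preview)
Your proposal is correct and matches the paper's proof essentially line for line: induction on the first derivation, with \ruleref{sc-elem} handled via \lref{lm:sc-distr}, \ruleref{sc-var} by IH followed by reapplying \ruleref{sc-var}, and \ruleref{sc-set} by repeated IH and \ruleref{sc-set}. Your closing remark about inducting on the derivation rather than on the size of $C_1$ is a nice clarification that the paper leaves implicit.
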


\begin{proof}
  By induction on the first derivation.

  \begin{proofcases}
    \pcase{\ruleref{sc-elem}}
    $C_1 = \{x\}$ and $x \in C_2$,
    so by \lref{lm:sc-distr} $\Gamma \ts \{x\} \sub C_3$.

    \pcase{\ruleref{sc-var}}
    Then $C_1 = \{x\}$ and $x : C_4\;R \in \Gamma$ and $\Gamma \ts C_4 \sub C_2$.
    By IH $\Gamma \ts C_4 \sub C_3$
    and we can conclude by \ruleref{sc-var}.

    \pcase{\ruleref{sc-set}}
    By repeated IH and \ruleref{sc-set}.
  \end{proofcases}
\end{proof}

\begin{lemma}[Subcapturing reflexivity]
  \label{lm:sc-refl}
  For all $C$, if $\Gamma \ts C \wf$, then $\Gamma \ts C \sub C$.
\end{lemma}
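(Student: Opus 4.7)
The proof will be a one-step derivation rather than an induction. The plan is to apply \ruleref{sc-set} at the root: to conclude $\Gamma \ts C \sub C$ it suffices to derive $\Gamma \ts \{y\} \sub C$ for every $y \in C$. For each such $y$ we have $y \in C$ by construction, so \ruleref{sc-elem} applies directly and discharges the subgoal. This is exactly the same pattern already used in the proof of \lref{lm:top-cs}, where \ruleref{sc-set} is invoked once and each singleton premise is then handled uniformly.

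The well-formedness hypothesis $\Gamma \ts C \wf$ is carried purely to satisfy whatever well-formedness side conditions \ruleref{sc-set} may require on its target; in particular it covers the degenerate case $C = \emptyset$, where \ruleref{sc-set} is applied with no premises and the conclusion $\Gamma \ts \emptyset \sub \emptyset$ should be licensed by the well-formedness of the empty set. No case distinction on the shape of $\Gamma$ or on whether $\rcap \in C$ is needed, because $\rcap$ is treated like any other element by \ruleref{sc-elem}.

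I do not expect any real obstacle: the argument has no inductive structure and uses no previously established subcapturing lemmas. The only point to double-check is that \ruleref{sc-set} is stated in the expected ``pointwise singleton'' form, i.e.\ that it really does reduce $\Gamma \ts C \sub D$ to $\{\Gamma \ts \{y\} \sub D\}_{y \in C}$; the proof of \lref{lm:top-cs} confirms this reading.
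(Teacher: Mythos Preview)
Your proposal is correct and matches the paper's proof exactly: the paper's entire argument is ``By \ruleref{sc-set} and \ruleref{sc-elem}.'' Your additional remarks about well-formedness, the empty case, and $\rcap$ are accurate but go beyond what the paper bothers to spell out.
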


\begin{proof}
  By \ruleref{sc-set} and \ruleref{sc-elem}.
\end{proof}

\begin{lemma}[Subtyping implies subcapturing]
  \label{lm:sub-is-sc}
  If $\Gamma \ts C_1\;R_1 \sub C_2\;R_2$,
  then $\Gamma \ts C_1 \sub C_2$.
\end{lemma}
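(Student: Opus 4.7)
The plan is to proceed by induction on the derivation of $\Gamma \ts C_1\;R_1 \sub C_2\;R_2$, doing a case analysis on the last subtyping rule used. The lemmas for subcapturing reflexivity (\lref{lm:sc-refl}) and subcapturing transitivity (\lref{lm:sc-trans}) established just above are exactly the tools we will need.

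The expected cases are: a reflexivity rule, where $C_1\;R_1$ and $C_2\;R_2$ coincide so the goal follows from \lref{lm:sc-refl} (after noting well-formedness); a transitivity rule, where we go through an intermediate type $C\;R$, apply the induction hypothesis to each sub-derivation to get $\Gamma \ts C_1 \sub C$ and $\Gamma \ts C \sub C_2$, and conclude by \lref{lm:sc-trans}; and a capture-set rule (the rule whose job is to relate $C_1\;R_1 \sub C_2\;R_2$ by demanding $\Gamma \ts C_1 \sub C_2$ together with a shape-subtyping premise $R_1 \sub R_2$), where the desired subcapturing is literally a premise. Any remaining structural shape-subtyping rules (such as those for function types) either do not change the capture set, in which case reflexivity suffices, or are constrained to operate through the capture-set rule at the top.

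The main potential obstacle is the precise formulation of the subtyping rules in the paper: if the calculus uses a single combined rule that merges capture subcapturing with shape subtyping at the capture-type level, the proof collapses to inspecting that one rule plus reflexivity and transitivity; if instead shape subtyping can be stated independently (e.g.\ a rule of the form $R_1 \sub R_2$ lifted to $C\;R_1 \sub C\;R_2$), then those cases must be handled by appealing to \lref{lm:sc-refl} applied to the common capture set. In either situation the work per case is immediate, and no auxiliary lemma beyond \lref{lm:sc-refl} and \lref{lm:sc-trans} is required.
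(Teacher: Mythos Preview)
Your proposal is correct and follows essentially the same approach as the paper: induction on the subtyping derivation, with the \ruleref{capt} case immediate, the \ruleref{trans} case handled by two applications of the induction hypothesis plus \lref{lm:sc-trans}, and the remaining cases dispatched by observing that the capture sets coincide (the paper notes they are in fact both empty in those cases, so \ruleref{sc-set} suffices, which is just the special case of your appeal to \lref{lm:sc-refl}).
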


\begin{proof}
  By induction on the subtyping derivation.
  If \ruleref{capt}, immediate.
  If \ruleref{trans}, by IH and subcapturing transitivity \lref{lm:sc-trans}.
  Otherwise, $C_1 = C_2 = \{\}$ and we can conclude by \ruleref{sc-set}.
\end{proof}

\subsubsection{Subtype inversion}

While it may be unusual to show these lemmas so early, we in fact need one of
them to show narrowing.

\begin{fact}
  \label{fact:trans}
  Both subtyping and subcapturing are transitive.
\end{fact}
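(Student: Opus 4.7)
The plan is to split the claim along its two conjuncts and dispatch each in turn, since the excerpt has already set up almost everything that is needed.

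For the subcapturing half, no new work is required: the statement is exactly \lref{lm:sc-trans}, which has already been proved in the preceding subsection. I would simply cite it and move on.

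For the subtyping half, I would observe that \ruleref{trans} is a primitive rule of the subtyping judgment, as witnessed by its appearance as a dedicated case in the proof of \lref{lm:sub-is-sc}. Given the two hypotheses $\Gamma \ts S \sub T$ and $\Gamma \ts T \sub U$, one application of \ruleref{trans} directly yields $\Gamma \ts S \sub U$. There is nothing to induct on, because the rule handles exactly this composition.

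The only thing to flag is that there is no real obstacle here \emph{because} transitivity is built into both judgments (explicitly as a rule for subtyping, and already discharged as a separate lemma for subcapturing). A genuinely difficult variant of this fact would be admissibility of transitivity in a \emph{transitivity-free} presentation of subtyping, which would require a careful simultaneous induction on the size of the middle type and on the two derivations (a Pierce/Steffen-style cut elimination). That, however, is a different claim, and since the system here admits \ruleref{trans} as a rule, the stated fact reduces to a one-line argument on each side.
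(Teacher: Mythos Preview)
Your proposal is correct and matches the paper's own proof essentially verbatim: the paper dispatches subtyping by noting that \ruleref{trans} is a primitive rule, and subcapturing by citing \lref{lm:sc-trans}. Your additional remark about cut-elimination in a transitivity-free system is accurate context but, as you note, not needed here.
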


\begin{proof}
  Subtyping is intrisically transitive through \ruleref{trans},
  while subcapturing admits transitivity as per \lref{lm:sc-trans}.
\end{proof}

\begin{fact}
  \label{fact:refl}
  Both subtyping and subcapturing are reflexive.
\end{fact}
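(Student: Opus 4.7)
The plan is to dispatch the two claims separately. Subcapturing reflexivity is already established as Lemma~\ref{lm:sc-refl}, so for that half we simply invoke it. The real work is subtyping reflexivity: we must show $\Gamma \ts T \sub T$ whenever $\Gamma \ts T \wf$.

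I would proceed by structural induction on the well-formed type $T$, which by the type grammar has the form $C\;R$ for some shape $R$. The first move is to apply \ruleref{capt}, reducing the goal to (i) $\Gamma \ts C \sub C$ and (ii) a reflexivity statement on the shape $R$. Part (i) is immediate from Lemma~\ref{lm:sc-refl}, since well-formedness of $C\;R$ entails well-formedness of $C$.

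For (ii), I would case-split on the shape $R$. Atomic shapes---type variables and $\top$-like base shapes---are handled directly by the corresponding nullary subtyping rules in the system (these are exactly the rules that, per Lemma~\ref{lm:sub-is-sc}, produce the $C_1 = C_2 = \{\}$ case). Compound shapes such as function types or bounded polymorphic types are handled by the associated subtyping rule, whose premises are reflexivity statements about strictly smaller sub-components, discharged by the induction hypothesis. In binding cases the IH must be used in an extended environment $\Gamma,x:S$ (or $\Gamma,X<:S$); this environment is well-formed because $\Gamma \ts T \wf$ gives $\Gamma \ts S \wf$, and the sub-component is structurally smaller.

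The main obstacle---modest as it is---is the binder case, which forces the induction hypothesis to be invoked in an extended environment; this is only safe because subtyping and subcapturing implicitly weaken and because well-formedness of the extension follows from well-formedness of $T$. Everything else is a mechanical traversal of the shape grammar, pairing each constructor with the subtyping rule that makes its reflexivity derivable from reflexivity of its immediate sub-components.
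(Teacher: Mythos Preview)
Your treatment of subcapturing is fine and matches the paper: it is an admissible property, discharged by Lemma~\ref{lm:sc-refl}.

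For subtyping, however, you are working much harder than necessary and the paper takes a different, trivial route. In this calculus reflexivity is a \emph{primitive} subtyping rule \ruleref{refl} (you can see it used as a case in the proofs of Lemmas~\ref{lm:sub-inv-tvar}, \ref{lm:sub-inv-fun}, \ref{lm:tm-subst-sub}, \ref{lm:tp-subst-sub}). The paper therefore simply says that subtyping reflexivity is intrinsic, by \ruleref{refl}, and is done.

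Your inductive argument is not just redundant but also has a gap at the type-variable case. You claim atomic shapes are handled ``directly by the corresponding nullary subtyping rules,'' but \ruleref{tvar} derives $X \sub S$ from $X{<:}S \in \Gamma$, not $X \sub X$. There is no way to obtain $X \sub X$ from \ruleref{tvar}; the only rule that yields it is \ruleref{refl} itself. So either you appeal to \ruleref{refl} in that base case---at which point the whole induction is superfluous---or the argument does not go through. The fix is simply to drop the induction and invoke \ruleref{refl} once, as the paper does.
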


\begin{proof}
  Again, this is an intrinsic property of subtyping (by \ruleref{refl})
  and an admissible property of subcapturing per \lref{lm:sc-refl}.
\end{proof}

\begin{lemma}[Subtyping inversion: type variable]
  \label{lm:sub-inv-tvar}
  If $\Gamma \ts U \sub \CS{C}X$,
  then $U$ is of the form $\CS{C'}X'$
  and we have $\Gamma \ts C' \sub C$
  and $\Gamma \ts X' \sub X$.
\end{lemma}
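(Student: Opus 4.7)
The plan is to proceed by induction on the derivation of $\Gamma \ts U \sub \CS{C}X$, analyzing the last rule applied. Because the right-hand side has the form $\CS{C}X$ with a type variable as its shape, only a handful of rules could have concluded the derivation: \ruleref{refl}, \ruleref{trans}, and \ruleref{capt}; the structural rules for function types, box types, and other shape constructors cannot produce a conclusion of this shape.

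First I would dispatch \ruleref{refl}: then $U = \CS{C}X$, so take $C' = C$ and $X' = X$ and invoke Fact~\ref{fact:refl} to obtain $\Gamma \ts C \sub C$ and $\Gamma \ts X \sub X$. The \ruleref{capt} case is also immediate: its premises give directly that $U = \CS{C'}X'$ together with $\Gamma \ts C' \sub C$ and $\Gamma \ts X' \sub X$.

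The interesting case is \ruleref{trans}. Here we have some intermediate type $V$ with $\Gamma \ts U \sub V$ and $\Gamma \ts V \sub \CS{C}X$. Applying the induction hypothesis to the second premise gives $V = \CS{C''}X''$ with $\Gamma \ts C'' \sub C$ and $\Gamma \ts X'' \sub X$. Now applying the induction hypothesis to the first premise (with the newly discovered form of $V$) gives $U = \CS{C'}X'$ with $\Gamma \ts C' \sub C''$ and $\Gamma \ts X' \sub X''$. We then compose via transitivity (Fact~\ref{fact:trans}) in both subcapturing and subtyping to obtain the desired $\Gamma \ts C' \sub C$ and $\Gamma \ts X' \sub X$.

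The main obstacle is justifying that no other subtyping rule could have been the last step. This boils down to enumerating the subtyping rules and observing that each one either produces a conclusion whose right-hand side shape is not a type variable (and hence cannot match $\CS{C}X$), or is one of the three rules handled above. It is a routine check, but it is the spot where a careful reading of the formal rules is indispensable, and it is worth flagging explicitly since the whole induction rests on this exhaustive case analysis.
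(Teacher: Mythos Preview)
Your overall strategy---induction on the subtyping derivation with a case analysis on the last rule---is exactly what the paper does, and your \ruleref{trans} case is correct. However, there are two genuine gaps.

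First, you omit the \ruleref{tvar} case. If some type variable $Y$ is bound as $Y <: \CS{C}X$ in $\Gamma$, then \ruleref{tvar} can be the last rule, concluding $\Gamma \ts Y \sub \CS{C}X$. So your claim that only \ruleref{refl}, \ruleref{trans}, and \ruleref{capt} can apply is not quite right. The paper handles this case (grouped with \ruleref{refl}) via reflexivity: here $U$ is $\CS{\{\}}Y$, and the required $\Gamma \ts Y \sub X$ follows from the \ruleref{tvar} premise together with \ruleref{capt}/\ruleref{trans}.

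Second, and more importantly, your \ruleref{capt} case is too quick. The premises of \ruleref{capt} give you $U = \CS{C'}R'$ with $\Gamma \ts C' \sub C$ and $\Gamma \ts R' \sub X$, where $R'$ is an \emph{arbitrary shape}, not a priori a type variable. You still need to argue that $R'$ must itself be a type variable. The paper does this by observing that the shape-level judgement $\Gamma \ts R' \sub X$ is equivalent to $\Gamma \ts \CAPT{}R' \sub \CAPT{}X$ and then invoking the induction hypothesis on that (structurally smaller) subderivation to conclude $R' = Y$ for some type variable $Y$. Without this step your \ruleref{capt} case does not go through.
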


\begin{proof}
  By induction on the subtyping derivation.
  \begin{proofcases}
    \pcase{\ruleref{tvar}, \ruleref{refl}} Follow from reflexivity (\ref{fact:refl}).

    \pcase{\ruleref{capt}}
    Then we have $U = \CS{C'}R$ and $\Gamma \ts C' \sub C$
    and $\Gamma \ts R \sub X$.\\
    \indent This relationship is equivalent to
    $\Gamma \ts \CAPT{}R \sub \CAPT{}X$,
    on which we invoke the IH.\\
    \indent By IH we have $\CAPT{}R = \CAPT{}Y$
    and we can conclude with $U = \CS{C'}Y$.

    \pcase{\ruleref{trans}}
    Then we have $\Gamma \ts U \sub U$
    and $\Gamma \ts U \sub \CS{C}X$.
    We proceed by using the IH twice
    and conclude by transitivity (\ref{fact:trans}).

    \noindent Other rules are impossible.
  \end{proofcases}
\end{proof}

\begin{lemma}[Subtyping inversion: capturing type]
  \label{lm:sub-inv-capt}
  If $\Gamma \ts U \sub \CS{C}R$,
  then $U$ is of the form $\CS{C'}R'$
  such that $\Gamma \ts C' \sub C$
  and $\Gamma \ts R' \sub R$.
\end{lemma}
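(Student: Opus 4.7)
My plan is to induct on the derivation of $\Gamma \ts U \sub \CS{C}R$, mirroring the structure of \lref{lm:sub-inv-tvar} but with strictly less reconstruction work, since the raw component $R$ on the right-hand side is arbitrary rather than a specific type variable. The crucial observation is that every subtyping rule whose conclusion places a capturing type on the right must likewise put a capturing type on the left, so $U$ must indeed decompose as $\CS{C'}R'$ for some $C'$ and $R'$, and the only remaining task is to extract $\Gamma \ts C' \sub C$ and $\Gamma \ts R' \sub R$.

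The cases will run as follows. For \ruleref{refl}, I would take $C' = C$ and $R' = R$ and appeal to reflexivity of subcapturing (\lref{lm:sc-refl}) and subtyping (Fact~\ref{fact:refl}). For \ruleref{capt}, the premises already provide the required decomposition along with $\Gamma \ts C' \sub C$ and $\Gamma \ts R' \sub R$ directly. For \ruleref{trans}, some intermediate type $V$ witnesses both $\Gamma \ts U \sub V$ and $\Gamma \ts V \sub \CS{C}R$; I would apply the induction hypothesis first to the right premise to obtain $V = \CS{C''}R''$ with $\Gamma \ts C'' \sub C$ and $\Gamma \ts R'' \sub R$, then apply it to the left premise to obtain $U = \CS{C'}R'$ with $\Gamma \ts C' \sub C''$ and $\Gamma \ts R' \sub R''$, and finally invoke transitivity (Fact~\ref{fact:trans}) on both comparisons.

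The remaining subtyping rules, such as \ruleref{tvar}, are simply inapplicable: their conclusions place a bare raw type on the right-hand side rather than a capturing type of the form $\CS{C}R$, so they cannot be the last rule used in the given derivation. I do not foresee any real obstacle, and if any arises it would be purely a matter of the rule inventory --- specifically, verifying that no other subtyping rule can conclude a judgment whose right-hand side matches $\CS{C}R$ in a way that fails to expose the decomposition of $U$. This is a strictly easier analog of \lref{lm:sub-inv-tvar}, so the bulk of the work is transcribing that case analysis and dropping the bookkeeping that was needed to recover the precise variable $X$.
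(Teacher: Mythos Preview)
Your proposal is correct and matches the paper's approach: the paper's proof is simply ``straightforward induction on the subtyping judgment'' after noting that subtyping and subcapturing are transitive and reflexive, which is exactly what you spell out.

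One small caveat on phrasing: your claim that rules like \ruleref{tvar}, \ruleref{top}, \ruleref{fun}, \ruleref{tfun}, \ruleref{boxed} are ``simply inapplicable'' because they place a bare raw type on the right is not quite how the paper treats things. As the proof of \lref{lm:sub-is-sc} indicates (``Otherwise, $C_1 = C_2 = \{\}$''), those rules are regarded as instances of the capturing-type judgment with empty capture sets, and indeed \lref{lm:sub-inv-tvar} and \lref{lm:sub-inv-fun} list \ruleref{tvar} (and, where relevant, \ruleref{fun}) as genuine cases rather than dismissing them as syntactically impossible. In the present lemma each such case is still trivial: one takes $C' = C = \{\}$, obtains $\Gamma \ts \{\} \sub \{\}$ by reflexivity, and the raw component $\Gamma \ts R' \sub R$ is exactly the rule's own conclusion. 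So your argument goes through unchanged; just replace ``inapplicable'' with ``immediate'' to align with the paper's conventions.
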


\begin{proof}
  We take note of the fact that subtyping and subcapturing
  are both transitive (\ref{fact:trans})
  and reflexive (\ref{fact:refl}).
  The lemma then follows from a straightforward induction on the subtyping
  judgment.
\end{proof}

\begin{lemma}[Subtyping inversion: function type]
  \label{lm:sub-inv-fun}
  If $\Gamma \ts U \sub \CS{C}\ARR{x:T_1}T_2$,
  then $U$ either is of the form $\CS{C'}X$
  and we have $\Gamma \ts C' \sub C$
  and $\Gamma \ts X \sub \ARR{x:T_1}T_2$,
  or $U$ is of the form $\CS{C'}\ARR{x:U_1}U_2$
  and we have $\Gamma \ts C' \sub C$
  and $\Gamma \ts T_1 \sub U_1$
  and $\Gamma,x:T_1 \ts U_2 \sub T_2$.
\end{lemma}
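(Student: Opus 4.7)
The plan is to proceed by induction on the derivation of $\Gamma \ts U \sub \CS{C}\ARR{x:T_1}T_2$, following the template already used in Lemmas~\ref{lm:sub-inv-tvar} and \ref{lm:sub-inv-capt}. The reflexivity case is immediate via Fact~\ref{fact:refl}, and the direct function-subtyping rule places $U$ into the second disjunct by construction. Rules whose conclusion cannot yield a captured function type (such as \ruleref{tvar}) are vacuous and can be dismissed.

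For \ruleref{capt}, we have $U = \CS{C'}R$ with $\Gamma \ts C' \sub C$ and a bare subtyping $\Gamma \ts R \sub \ARR{x:T_1}T_2$. A case split on the shape of $R$ — either a type variable or a function shape — lands $U$ into the appropriate disjunct of the conclusion, with the capture-set obligation discharged by the premise and the shape obligation by re-wrapping.

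The real work is the \ruleref{trans} case: assume $\Gamma \ts U \sub V$ and $\Gamma \ts V \sub \CS{C}\ARR{x:T_1}T_2$. I would first apply the IH to the right subderivation to learn that $V$ falls into one of the two disjuncts. If $V = \CS{D}Y$ is a captured type variable, then Lemma~\ref{lm:sub-inv-tvar} applied to $\Gamma \ts U \sub V$ yields $U = \CS{C'}Y'$ with $Y' \sub Y$, and the first disjunct follows by subcapturing transitivity (\lref{lm:sc-trans}) and subtyping transitivity (Fact~\ref{fact:trans}). If instead $V = \CS{D}\ARR{x:V_1}V_2$, I apply the IH again to $\Gamma \ts U \sub V$, producing two further subcases: the type-variable subcase feeds the first disjunct by composing $X \sub \ARR{x:V_1}V_2$ with the arrow subtyping fact between $V$ and the target; the function subcase feeds the second.

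The main obstacle lies in this last subcase, where we must glue two arrow-subtyping facts: $\Gamma, x:V_1 \ts U_2 \sub V_2$ and $\Gamma, x:T_1 \ts V_2 \sub T_2$ under $\Gamma \ts T_1 \sub V_1$. To produce the required $\Gamma, x:T_1 \ts U_2 \sub T_2$ I need to narrow the first judgment from $V_1$ down to $T_1$ before composing by transitivity. Since narrowing itself rests on inversion (as flagged by the remark opening this subsection), I expect the cleanest organization is to strengthen the induction to be on the height of the subtyping derivation, so that narrowing may be invoked at strictly smaller height; alternatively, to prove narrowing and this inversion lemma simultaneously by mutual induction on derivation height. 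The remaining algebraic bookkeeping — composing $C' \sub D \sub C$ and $T_1 \sub V_1 \sub U_1$ — is then routine via Fact~\ref{fact:trans}.
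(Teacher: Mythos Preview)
Your approach is essentially the paper's: induction on the subtyping derivation, with the \ruleref{trans} case handled by two nested invocations of the IH (right premise first, then left), and an appeal to narrowing in the final sub-subcase to align the binder's bound before composing by transitivity. Two remarks are worth making.

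First, \ruleref{tvar} is not vacuous. If $X <: \CS{C}\ARR{x:T_1}T_2 \in \Gamma$, then \ruleref{tvar} does derive $\Gamma \ts X \sub \CS{C}\ARR{x:T_1}T_2$, so the case fires; the paper handles it as \emph{immediate}, landing in the first disjunct with $U = \CS{\{\}}X$. Similarly, in the \ruleref{capt} case the paper does not case-split on the syntactic shape of $R$ directly but instead observes that $\Gamma \ts R \sub \ARR{x:T_1}T_2$ is the same judgment as $\Gamma \ts \CS{\{\}}R \sub \CS{\{\}}\ARR{x:T_1}T_2$ and invokes the IH on that subderivation; your shape case-split amounts to the same thing but is only justified via the IH.

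Second, your circularity worry is well-spotted but over-engineered. The paper's organization makes the dependency acyclic: term-binding narrowing needs only \lref{lm:sub-inv-capt}, whose proof is a plain induction that does \emph{not} require narrowing. Hence narrowing is already available by the time one proves \lref{lm:sub-inv-fun}, and the paper simply invokes it in the last sub-subcase without any height metric or mutual induction. Your proposed strengthening to derivation height would work, but it is unnecessary once you notice that the only inversion lemma narrowing actually consumes is the (narrowing-free) \lref{lm:sub-inv-capt}.
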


\begin{proof}
  By induction on the subtyping derivation.
  \begin{proofcases}
    \pcase{\ruleref{tvar}} Immediate.

    \pcase{\ruleref{fun}, \ruleref{refl}} Follow from reflexivity (\ref{fact:refl}).

    \pcase{\ruleref{capt}}
    Then we have $\Gamma \ts C' \sub C$
    and $\Gamma \ts R \sub \ARR{x:T_1}T_2$.\\
    \indent This relationship is equivalent to
    $\Gamma \ts \CAPT{}R \sub \CAPT{}\ARR{x:T_1}T_2$,
    on which we invoke the IH.\\
    \indent By IH $\CAPT{}R$ might have two forms.
    If $\CAPT{}R = \CAPT{}X$, then we can conclude with $U = \CS{C'}X$.\\
    \indent Otherwise we have $\CAPT{}R = \CAPT{}\ARR{x:U_1}U_2$
    and $\Gamma \ts T_1 \sub U_1$
    and $\Gamma,x:T_1 \ts U_2 \sub T_2$.
    Then, $U = \CS{C'}\ARR{x:U_1}U_2$ lets us conclude.

    \pcase{\ruleref{trans}}
    Then we have $\Gamma \ts U \sub U'$
    and $\Gamma \ts U \sub \CS{C}\ARR{x:T_1}T_2$.
    By IH $U$ may have one of two forms.
    If $U = \CS{C'}X$,
    we proceed with \lref{lm:sub-inv-tvar}
    and conclude by transitivity (\ref{fact:trans}).\\
    \indent
    Otherwise $U = \CS{C'}\ARR{x:U_1}U_2$
    and we use the IH again on $\Gamma \ts U' \sub \CS{C'}\ARR{x:U_1}U_2$.
    If $U = \CS{C''}X$,
    we again can conclude by (\ref{fact:trans}).
    Otherwise if $U = \CS{C''}\ARR{x:U_1}U_2$,
    the IH only gives us $\Gamma,x:U_1 \ts U_2 \sub U_2$,
    which we need to narrow to $\Gamma,x:T_1$
    before we can similarly conclude by transitivity (\ref{fact:trans}).

    \noindent Other rules are not possible.
  \end{proofcases}
\end{proof}

\begin{lemma}[Subtyping inversion: type function type]
  \label{lm:sub-inv-tfun}
  If $\Gamma \ts U \sub \CS{C}\TARR{X:T_1}T_2$,
  then $U$ either is of the form $\CS{C'}X$
  and we have $\Gamma \ts C' \sub C$
  and $\Gamma \ts X \sub \TARR{X:T_1}T_2$,
  or $U$ is of the form $\CS{C'}\TARR{X:U_1}U_2$
  and we have $\Gamma \ts C' \sub C$
  and $\Gamma \ts T_1 \sub U_1$
  and $\Gamma,X<:T_1 \ts U_2 \sub T_2$.
\end{lemma}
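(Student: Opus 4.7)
The plan is to mirror the proof of Lemma~\ref{lm:sub-inv-fun} almost verbatim, swapping the term-dependent function shape $\ARR{x:T_1}T_2$ for the type-dependent shape $\TARR{X:T_1}T_2$ and the rule \ruleref{fun} for \ruleref{tfun}. The argument proceeds by induction on the derivation of $\Gamma \ts U \sub \CS{C}\TARR{X:T_1}T_2$, with cases organized by the last rule applied.

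First I would dispatch the structural rules. The \ruleref{tvar} case is immediate; \ruleref{tfun} and \ruleref{refl} both close via reflexivity (Fact~\ref{fact:refl}). For \ruleref{capt}, we have $U = \CS{C'}R$ with $\Gamma \ts C' \sub C$ and $\Gamma \ts R \sub \TARR{X:T_1}T_2$; rewriting the pure part as $\Gamma \ts \CAPT{}R \sub \CAPT{}\TARR{X:T_1}T_2$ lets us invoke the IH, and the two shapes it produces for $\CAPT{}R$ lift back to $U$ by reattaching $C'$.

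The substantive case is \ruleref{trans}, where the derivation factors through some intermediate $U'$ with $\Gamma \ts U \sub U'$ and $\Gamma \ts U' \sub \CS{C}\TARR{X:T_1}T_2$. Applying the IH to the right premise yields two shapes for $U'$. If $U' = \CS{C''}Y$, I apply Lemma~\ref{lm:sub-inv-tvar} to $\Gamma \ts U \sub U'$ and close by transitivity (Fact~\ref{fact:trans}). Otherwise $U' = \CS{C''}\TARR{X:U_1}U_2$ with $\Gamma \ts T_1 \sub U_1$ and $\Gamma,X<:T_1 \ts U_2 \sub T_2$; a second IH application to $\Gamma \ts U \sub U'$ yields either a type-variable case (again closed by \lref{lm:sub-inv-tvar} and transitivity) or a type-function case $U = \CS{C'}\TARR{X:U_1'}U_2'$ with $\Gamma \ts U_1 \sub U_1'$ and $\Gamma,X<:U_1 \ts U_2' \sub U_2$.

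The main obstacle, exactly as in \lref{lm:sub-inv-fun}, sits in this final subcase: the two codomain subtypings live under different type-variable bounds, $X<:U_1$ versus $X<:T_1$, so we must narrow $\Gamma,X<:U_1 \ts U_2' \sub U_2$ down to $\Gamma,X<:T_1 \ts U_2' \sub U_2$ using $\Gamma \ts T_1 \sub U_1$ before transitivity (Fact~\ref{fact:trans}) can chain it with $\Gamma,X<:T_1 \ts U_2 \sub T_2$ to deliver $\Gamma,X<:T_1 \ts U_2' \sub T_2$. This is the same mutual dependency with narrowing flagged just before the inversion block; we rely on narrowing for type-variable bounds, exactly paralleling the treatment used for term-variable bounds in the function-type case.
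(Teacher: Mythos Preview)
Your proposal is correct and is exactly the analogue the paper intends: the paper's own proof is just ``Analogous to the proof of \lref{lm:sub-inv-fun},'' and you have unfolded that analogy case by case, correctly swapping \ruleref{fun} for \ruleref{tfun} and invoking type-binding narrowing (rather than term-binding narrowing) in the final \ruleref{trans} subcase.
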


\begin{proof}
  Analogous to the proof of \lref{lm:sub-inv-fun}.
\end{proof}

\begin{lemma}[Subtyping inversion: boxed type]
  If $\Gamma \ts U \sub \CS{C}\Boxed{T}$,
  then $U$ either is of the form $\CS{C'}X$
  and we have $\Gamma \ts C' \sub C$
  and $\Gamma \ts X \sub \Boxed{T}$,
  or $U$ is of the form $\CS{C'}\Boxed{U'}$
  and we have $\Gamma \ts C' \sub C$
  and $\Gamma \ts U \sub T$.
\end{lemma}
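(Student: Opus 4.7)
The plan is to proceed by induction on the derivation of $\Gamma \ts U \sub \CS{C}\Boxed{T}$, following closely the structure of the proofs of \lref{lm:sub-inv-fun} and \lref{lm:sub-inv-tfun}. The conclusion involves two alternative shapes for $U$ (a type variable or a boxed type), matching the shape of the corresponding function-type lemmas, so the same case analysis applies; the main simplification is that the boxed constructor has no binder, so I do not expect to need narrowing.

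The base cases are routine. The \ruleref{tvar} case is immediate: $U$ already has the form $\CS{C'}X$ and we land in the first alternative by reflexivity of subcapturing and subtyping (\ref{fact:refl}) applied to $C$ and $\Boxed{T}$. The \ruleref{refl} case is handled the same way. For whichever structural rule introduces boxed subtyping (analogous to \ruleref{fun} in \lref{lm:sub-inv-fun}), we get the second alternative directly with $C' = C$ and use reflexivity on $C$.

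The \ruleref{capt} case mirrors the \ruleref{capt} case of \lref{lm:sub-inv-fun}: we have $U = \CS{C'}R$ with $\Gamma \ts C' \sub C$ and $\Gamma \ts R \sub \Boxed{T}$, which is equivalent to $\Gamma \ts \CAPT{}R \sub \CAPT{}\Boxed{T}$, on which we invoke the IH. Whichever alternative the IH delivers, we attach the capture set $C'$ to reach the desired conclusion, using transitivity of subcapturing (\ref{fact:trans}) if necessary on the capture-set side.

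The delicate case is \ruleref{trans}, where $\Gamma \ts U \sub U'$ and $\Gamma \ts U' \sub \CS{C}\Boxed{T}$. I apply the IH to the second premise, which splits into two subcases. If $U'$ is of the form $\CS{C''}X$, I then invoke \lref{lm:sub-inv-tvar} on $\Gamma \ts U \sub U'$, conclude $U = \CS{C'}X'$, and assemble the result by transitivity (\ref{fact:trans}) of subcapturing for the capture sets and of subtyping for chaining $X' \sub X \sub \Boxed{T}$. If $U'$ is of the form $\CS{C''}\Boxed{U''}$, I apply the IH again to $\Gamma \ts U \sub U'$; in the variable subcase I again conclude by transitivity, and in the boxed subcase I combine the two inner subtyping derivations $\Gamma \ts U''' \sub U''$ and $\Gamma \ts U'' \sub T$ by \ref{fact:trans}. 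The absence of a binder in $\Boxed{\cdot}$ means, unlike in \lref{lm:sub-inv-fun}, no narrowing step is required to make these inner derivations composable, so the proof is strictly simpler than its function-type counterpart.
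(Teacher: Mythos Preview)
Your proposal is correct and follows exactly the approach the paper intends: the paper's own proof is simply ``Analogous to the proof of \lref{lm:sub-inv-fun},'' and you have faithfully expanded that analogy case by case, including the correct observation that the absence of a binder in $\Boxed{\cdot}$ eliminates the narrowing step needed in the \ruleref{trans} case of the function-type lemma.
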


\begin{proof}
  Analogous to the proof of \lref{lm:sub-inv-fun}.
\end{proof}

\subsubsection{Permutation, weakening, narrowing}

\begin{lemma}[Permutation]
  Permutating the bindings in the environment up to preserving environment
  well-formedness also preserves type well-formedness, subcapturing, subtyping
  and typing.

  Let $\Gamma$ and $\Delta$ be the original and permutated context,
  respectively. Then:
  \begin{enumerate}
  \item If $\Gamma \ts T \wf$, then $\Delta \ts T \wf$.
  \item If $\Gamma \ts C_1 \sub C_2$, then $\Delta \ts C_1 \sub C_2$.
  \item If $\Gamma \ts U \sub T$, then $\Delta \ts U \sub T$.
  \item If $\Gamma \ts t \typ T$, then $\Delta \ts t \typ T$.
  \end{enumerate}
\end{lemma}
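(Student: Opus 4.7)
The plan is to prove the four clauses in the order listed, each by induction on the relevant derivation, so that each later clause may freely invoke the earlier ones. The guiding observation is that every rule of type well-formedness, subcapturing, subtyping, and typing only inspects which bindings are present in the environment, never their ordering; consequently, for every non-binding-introducing rule the case is immediate: apply the induction hypothesis to each premise and reassemble the same rule in $\Delta$.

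For clause 1 I would induct on the derivation of $\Gamma \ts T \wf$. The variable-lookup cases go through because $\dom(\Delta) = \dom(\Gamma)$ with the same associated types; the structural cases (function, type-function, boxed, capturing) recurse on subcomponents, possibly under an extension $\Gamma, x : T'$, in which case $\Delta, x : T'$ is again a valid well-formedness-preserving permutation. Clause 2 is then straightforward: \ruleref{sc-elem} and \ruleref{sc-set} transfer immediately, while \ruleref{sc-var} only needs the binding $x : D\,R$ to still appear in $\Delta$. Clause 3 mirrors clause 1 for the binder-introducing rules \ruleref{fun} and \ruleref{tfun}, appeals to clause 2 in the \ruleref{capt} case, and is direct for \ruleref{refl}, \ruleref{tvar}, and \ruleref{trans}. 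Clause 4 follows the same template, with subsumption dispatched to clause 3 and the binder-introducing term-formers (abstractions, let-bindings, etc.) handled as in clause 1.

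The main obstacle — really the only non-trivial bookkeeping — is checking that each environment extension performed inside a premise respects the permutation discipline. Concretely, when a rule has a premise of the form $\Gamma, x : T' \ts J$, I must exhibit $\Delta, x : T'$ as both well-formed and as a valid permutation of $\Gamma, x : T'$. This relies on $T'$ being well-formed in $\Gamma$ (which follows from the standing assumption that all environments in premises are well-formed), clause 1 to re-establish well-formedness of $T'$ in $\Delta$, and the observation that appending the same fresh binding to both sides of a valid permutation yields a valid permutation. Once this invariant is maintained across all recursive calls, every remaining rule case is a routine IH-plus-reassembly step, and no genuinely new reasoning is needed for any particular judgment form.
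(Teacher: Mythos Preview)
Your proposal is correct and takes essentially the same approach as the paper: the paper's entire proof is the single observation that ``order of the bindings in the environment is not used in any rule,'' and your write-up is simply a more explicit unfolding of that same argument across the four judgment forms. The additional bookkeeping you describe for binder-introducing rules (showing that $\Delta, x{:}T'$ remains a well-formed permutation of $\Gamma, x{:}T'$) is exactly the detail the paper suppresses as ``usual.''
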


\begin{proof}
  As usual, order of the bindings in the environment is not used in any rule.
\end{proof}

\note{In fact, arbitrary permutation preserves all the above jdgmnts, but it
  might violate environment well-formedness, and we never want to do that.}

\begin{lemma}[Weakening]
  Adding a binding to the environment such that the resulting environment is
  well-formed preserves type well-formedness, subcapturing, subtyping and
  typing.

  Let $\Gamma$ and $\Delta$ be the original and extended context, respectively.
  Then:
  \begin{enumerate}
  \item If $\Gamma \ts T \wf$, then $\Delta \ts T \wf$.
  \item If $\Gamma \ts C_1 \sub C_2$, then $\Delta \ts C_1 \sub C_2$.
  \item If $\Gamma \ts U \sub T$, then $\Delta \ts U \sub T$.
  \item If $\Gamma \ts t \typ T$, then $\Delta \ts t \typ T$.
  \end{enumerate}
\end{lemma}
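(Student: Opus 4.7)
The plan is to prove all four statements by a simultaneous mutual induction on the respective derivations. For each judgment I inspect the last rule used, apply the induction hypothesis to each premise, and then reassemble the conclusion with the same rule now interpreted in the extended environment $\Delta$.

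I would start with well-formedness (1) and subcapturing (2), since they feed into the remaining parts. For $\Gamma \ts T \wf$, every premise either refers to a variable or type-variable lookup in $\Gamma$ -- which persists in $\Delta$ because bindings are only added and variables are unique -- or is a well-formedness subderivation to which the IH applies. The subcapturing case is equally direct: \ruleref{sc-elem} depends only on set membership, \ruleref{sc-set} recurses via the IH on each element, and \ruleref{sc-var} relies on the binding $x : D\;R$ persisting into $\Delta$, which once more follows from uniqueness.

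For subtyping (3) and typing (4), the non-binding cases are routine: apply the IH to each subtyping, subcapturing, or typing premise, then reapply the same rule. The cases that need more care are those that introduce a new binder -- \ruleref{fun} and \ruleref{tfun} for subtyping, and \ruleref{let}, \ruleref{abs}, and \ruleref{tabs} for typing -- because there the IH must be applied under a further-extended environment. Since the newly introduced binder $y$ is fresh by the uniqueness convention, $\Delta, y:T_1$ is still an extension of $\Gamma, y:T_1$ by the same binding that distinguishes $\Delta$ from $\Gamma$, so the IH is applicable. The typing case \ruleref{sub} additionally appeals to the subtyping part of the induction.

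The main obstacle I anticipate is the bookkeeping around these binders when the weakening binding lies somewhere in the middle of $\Delta$ rather than at the end: the extension relation we carry through the IH may require the new binding to be in a specific position, but going under a binder appends a fresh entry at the end of the environment. I would resolve this by invoking the permutation lemma proved just above to reshuffle the environment as needed after each application of the IH, so that the added binding sits in its canonical position. Aside from this shuffling, every case reduces to a mechanical reapplication of the same rule used in the original derivation.
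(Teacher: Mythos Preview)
Your proposal is correct and follows essentially the same approach as the paper: a straightforward induction on the derivations, with the only non-trivial cases being the binder-introducing rules, which are handled by appealing to the permutation lemma. The paper illustrates only the \ruleref{abs} case and otherwise declares the remaining cases analogous, so your sketch is if anything slightly more detailed than the paper's own proof.
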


\begin{proof}
  As usual, the rules only check if a variable is bound in the environment
  and all versions of the lemma are provable by straightforward induction.
  For rules which extend the environment, such as \ruleref{abs}, we need
  permutation. All cases are analogous, so we will illustrate only one.

  \begin{proofcases}
    \pcase{\ruleref{abs}}
    WLOG we assume that $\Delta = \Gamma,x:T$.
    We know that $\Gamma \ts \LAM{y:U}t' \typ \ARR{y:U}U$.
    and from the premise of \ruleref{abs} we also know that
    $\Gamma,y:U \ts t' \typ U$.

    By IH, we have $\Gamma,y:U,x:T \ts t' \typ U$.
    $\Gamma,x:T,y:U$ is still a well-formed environment
    (as $T$ cannot mention $y$) and by permutation we have
    $\Gamma,x:T,y:U \ts t' \typ U$. Then by \ruleref{abs} we have
    $\Gamma,x:T \ts \LAM{y:U}t' \typ \ARR{y:U}U$,
    which concludes.
  \end{proofcases}
\end{proof}

\begin{lemma}[Type binding narrowing]\ \\
  \begin{enumerate}
  \item If $\Gamma \ts S' \sub S$
    and $\Gamma,X<:S,\Delta \ts T \wf$,
    then $\Gamma,X<:S',\Delta \ts T \wf$.

  \item If $\Gamma \ts S' \sub S$
    and $\Gamma,X<:S,\Delta \ts C_1 \sub C_2$,
    then $\Gamma,X<:S',\Delta \ts C_1 \sub C_2$.

  \item If $\Gamma \ts S' \sub S$
    and $\Gamma,X<:S,\Delta \ts U \sub T$,
    then $\Gamma,X<:S',\Delta \ts U \sub T$.

  \item If $\Gamma \ts S' \sub S$
    and $\Gamma,X<:S,\Delta \ts t \typ T$,
    then $\Gamma,X<:S',\Delta \ts t \typ T$.
  \end{enumerate}
\end{lemma}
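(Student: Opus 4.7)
The plan is to proceed by mutual induction on the derivations of all four judgments simultaneously. The environment shape $\Gamma, X{<:}S, \Delta$ versus $\Gamma, X{<:}S', \Delta$ is preserved throughout, so most cases merely descend into the sub-derivations by the relevant IH. When a typing or subtyping rule extends the context (e.g.\ \ruleref{fun}, \ruleref{tfun}, or \ruleref{let}), the IH applies under the augmented environment, since adding a fresh binding commutes with the narrowing of $X$.

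Parts (1) and (2) should be essentially immediate. Type well-formedness only reads type-variable bindings by identity: $X$ is still bound in the narrowed context, so its well-formedness witness carries over directly. Subcapturing never inspects type-variable bindings at all --- \ruleref{sc-var} only consults term-variable bindings --- so every subcapturing derivation transfers unchanged.

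The substance of part (3) sits in the \ruleref{tvar} case. Suppose the derivation concludes $\Gamma, X{<:}S, \Delta \ts Y \sub S''$ from a binding $Y{<:}S'' \in \Gamma, X{<:}S, \Delta$. If $Y \neq X$, the binding persists verbatim and the result is immediate. If $Y = X$ (so $S'' = S$), then in the narrowed environment \ruleref{tvar} yields $\Gamma, X{<:}S', \Delta \ts X \sub S'$; weakening the hypothesis $\Gamma \ts S' \sub S$ into that environment and composing by transitivity (Fact~\ref{fact:trans}) produces $\Gamma, X{<:}S', \Delta \ts X \sub S$, as desired. The remaining subtyping rules unfold by IH on their premises and repackage with the same rule, with the dependent premises of \ruleref{fun} and \ruleref{tfun} handled by IH under the extended environment.

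Part (4) reduces mostly to applying the appropriate IH on each sub-derivation. The \ruleref{var} case is immediate because term-variable bindings are untouched, and \ruleref{sub} composes the typing IH with the subtyping IH. I expect the main obstacle to be in rules where a type must be matched structurally against a shape obtained through subsumption --- here the author's upfront investment in subtype inversion (\lref{lm:sub-inv-fun}, \lref{lm:sub-inv-tfun}) pays off, since it lets us extract the needed function- or type-function- shape from a derivation that narrowing has rewritten. Otherwise the case analysis is uniform and proceeds by the induction hypothesis directly.
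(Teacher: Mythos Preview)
Your proposal is correct and matches the paper's approach: induction on derivations, with \ruleref{tvar} as the only interesting case, handled by weakening the hypothesis $\Gamma \ts S' \sub S$ and composing with \ruleref{tvar} via transitivity. One small over-complication: your anticipated ``main obstacle'' in part~(4) involving subtype inversion does not actually arise --- the IH hands back each premise verbatim in the narrowed environment, so rules like \ruleref{app} and \ruleref{tapp} reassemble directly with no structural matching needed.
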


\begin{proof}
  By straightforward induction on the derivations.
  Only subtyping considers types to which type variables are bound,
  and the only rule to do so is \ruleref{tvar}, which we prove below.
  All other cases follow from IH or other narrowing lemmas.

  \begin{proofcases}
    \pcase{\ruleref{tvar}} We need to prove $\Gamma,X<:S',\Delta \ts X \sub S$,
    which follows from weakening the lemma premise and using \ruleref{trans}
    together with \ruleref{tvar}.
  \end{proofcases}
\end{proof}

\begin{lemma}[Term binding narrowing]\ \\
  \begin{enumerate}
  \item If $\Gamma \ts U' \sub U$
    and $\Gamma,x:U,\Delta \ts T \wf$,
    then $\Gamma,x:U',\Delta \ts T \wf$.

  \item If $\Gamma \ts U' \sub U$
    and $\Gamma,x:U,\Delta \ts C_1 \sub C_2$,
    then $\Gamma,x:U',\Delta \ts C_1 \sub C_2$.

  \item If $\Gamma \ts U' \sub U$
    and $\Gamma,x:U,\Delta \ts T_1 \sub T_2$,
    then $\Gamma,x:U',\Delta \ts T_1 \sub T_2$.

  \item If $\Gamma \ts U' \sub U$
    and $\Gamma,x:U,\Delta \ts t \typ T$,
    then $\Gamma,x:U',\Delta \ts t \typ T$.
  \end{enumerate}
\end{lemma}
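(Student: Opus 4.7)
The plan is to prove the four parts by mutual induction on the respective derivations, following the pattern of type-binding narrowing. Most rules do not inspect the type assigned to $x$ in the environment and pass through by applying the appropriate part of the IH to each premise; in particular, subtyping never looks into a term binding, so all subtyping cases reduce either to the IH (with possibly the mutual IH for subcapturing when a subcapturing premise appears). Type well-formedness only checks membership $x \in \dom(\cdot)$, which is preserved under narrowing. The genuinely interesting cases are \ruleref{sc-var} applied to $\{x\}$ and \ruleref{var} applied to $x$.

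For \ruleref{sc-var} on $x$, write $U = C\,R$ and $U' = C'\,R'$. Applying \lref{lm:sub-inv-capt} to $\Gamma \ts U' \sub U$ yields $\Gamma \ts C' \sub C$ (and $\Gamma \ts R' \sub R$). The derivation being narrowed concludes $\Gamma, x:U, \Delta \ts \{x\} \sub C_2$ from the premise $\Gamma, x:U, \Delta \ts C \sub C_2$; by IH the latter holds under $\Gamma, x:U', \Delta$. Weakening $C' \sub C$ into the full context and composing via \lref{lm:sc-trans} gives $\Gamma, x:U', \Delta \ts C' \sub C_2$, and \ruleref{sc-var} then fires under the new binding $x:U'$ to conclude $\Gamma, x:U', \Delta \ts \{x\} \sub C_2$. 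For \ruleref{var} on $x$, the same inversion additionally gives $R' \sub R$; we reconstruct a typing derivation for $x$ under the narrowed environment using $U'$, and then subsume back to the original type via \ruleref{capt} applied to the weakened facts $C' \sub C$ and $R' \sub R$, composed with the source derivation through \ruleref{trans}.

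The main obstacle is the \ruleref{sc-var} case: this is where the generality of the hypothesis $U' \sub U$ is paid for, requiring us to peel $U'$ and $U$ apart with subtyping inversion and to rely on subcapturing transitivity being already available (which is why \lref{lm:sc-trans} and the subtyping inversion lemmas were proved first). A minor wrinkle is well-formedness of $\Gamma, x:U', \Delta$: $\Delta$ may mention $x$ through its type, but since $\Gamma \ts U' \sub U$ is derived purely in $\Gamma$ and $U'$ is well-formed there, the derivations inside $\Delta$ do not depend on the shape of $U$ and go through unchanged; this is precisely part (1) of the statement, which is why the four parts must be proved simultaneously rather than in sequence.
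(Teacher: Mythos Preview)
Your proposal is correct and follows essentially the same approach as the paper: mutual induction on the derivations, with \ruleref{sc-var} and \ruleref{var} being the only cases that inspect the binding of $x$, both handled via \lref{lm:sub-inv-capt}, weakening, and transitivity (\lref{lm:sc-trans}) or subsumption. One small imprecision: in the \ruleref{var} case the conclusion type is $\{x\}\,R$, so after re-deriving $x:\{x\}\,R'$ in the narrowed context you only need $\{x\}\,R' \sub \{x\}\,R$, i.e.\ \ruleref{capt} with reflexive $\{x\} \sub \{x\}$ and the weakened $R' \sub R$ --- the fact $C' \sub C$ is not used here, and there is no ``source derivation'' to compose via \ruleref{trans}.
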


\begin{proof}
  By straightforward induction on the derivations.
  Only subcapturing and typing consider types to which term variables are bound.
  Only \ruleref{sc-var} and \ruleref{var} do so, which we prove below.
  All other cases follow from IH or other narrowing lemmas.

  \begin{proofcases}
    \pcase{\ruleref{var}}
    We know that $U = \CS{C}R$
    and $\Gamma,x:\CS{C}R,\Delta \ts x \typ \CAPT{x}R$.
    As $\Gamma \ts U' \sub U$, from \lref{lm:sub-inv-capt}
    we know that $U' = \CS{C'}R'$
    and that $\Gamma \ts R' \sub R$.
    We need to prove that
    $\Gamma,x:\CS{C'}R',\Delta \ts x \typ \CAPT{x}R$.
    We can do so through \ruleref{var}, \ruleref{sub},
    \ruleref{capt}, \ruleref{sc-elem}
    and weakening $\Gamma \ts R' \sub R$.

    \pcase{\ruleref{sc-var}}
    Then we know that $C_1 = \{y\}$
    and that $y:T \in \Gamma,x:U,\Delta$
    and that $\Gamma,x:U,\Delta \ts \cv(T) \sub C_2$.

    If $y \neq x$, we can conclude by IH and \ruleref{sc-var}.

    Otherwise, we have $T = U$.
    From \lref{lm:sub-inv-capt} we know that
    $\Gamma \ts \cv(U') \sub \cv(U)$,
    and from IH we know that
    $\Gamma,x:U',\Delta \ts \cv(U) \sub C_2$.
    By \ruleref{sc-var} to conclude it is enough to have
    $\Gamma,x:U',\Delta \ts \cv(U') \sub C_2$,
    which we do have by connecting two previous conclusions
    by weakening and \lref{lm:sc-trans}.
  \end{proofcases}
\end{proof}

\subsection{Substitution}

\subsubsection{Term Substitution}

\begin{lemma}[Term substitution preserves subcapturing]
  \label{lm:tm-subst-sc}
  If $\Gamma,x:U,\Delta \ts C_1 \sub C_2$
  and $\Gamma \ts D \sub cv(U)$,
  then $\Gamma,[x \mapsto D]\Delta \ts [x \mapsto D]C_1 \sub [x \mapsto D]C_2$.
\end{lemma}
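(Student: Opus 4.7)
The plan is to induct on the derivation of $\Gamma, x:U, \Delta \ts C_1 \sub C_2$, splitting on the final subcapturing rule.

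Cases \ruleref{sc-elem} and \ruleref{sc-set} I expect to be routine. In \ruleref{sc-elem} we have $C_1 = \{y\}$ with $y \in C_2$. If $y \neq x$ then $y \in [x \mapsto D]C_2$ and \ruleref{sc-elem} closes the goal directly; if $y = x$ then $[x \mapsto D]C_1 = D$ and $D \subseteq [x \mapsto D]C_2$, so I combine \ruleref{sc-elem} applied to each element of $D$ with \ruleref{sc-set}. In \ruleref{sc-set}, I apply the IH to each premise; for the resulting goal I need, for every $z \in [x \mapsto D]C_1$, a judgment $\{z\} \sub [x \mapsto D]C_2$, which I get from IH directly when $z$ comes from a $y_i \neq x$, and from IH combined with \lref{lm:sc-distr} when $z \in D$ arises from substituting away $x$.

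The substantial case is \ruleref{sc-var}, where $C_1 = \{y\}$, the binding $y : C'\,R$ sits in the environment, and $\Gamma, x:U, \Delta \ts C' \sub C_2$. If $y \neq x$, the binding survives substitution (with $C'$ possibly becoming $[x \mapsto D]C'$ when $y \in \dom(\Delta)$), so IH together with \ruleref{sc-var} suffices. The delicate subcase is $y = x$: here $C' = \cv(U)$, and by well-formedness of $\Gamma, x:U$ we have $x \notin \cv(U)$, so $[x \mapsto D]\cv(U) = \cv(U)$. IH then produces $\Gamma, [x \mapsto D]\Delta \ts \cv(U) \sub [x \mapsto D]C_2$; weakening the hypothesis $\Gamma \ts D \sub \cv(U)$ into this context and appealing to \lref{lm:sc-trans} yields $\Gamma, [x \mapsto D]\Delta \ts D \sub [x \mapsto D]C_2$, which is exactly the goal $[x \mapsto D]\{x\} \sub [x \mapsto D]C_2$.

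I expect this $y = x$ subcase of \ruleref{sc-var} to be the main obstacle: it is the only place where the hypothesis $\Gamma \ts D \sub \cv(U)$ is actually consumed, and it hinges on the freshness invariant $x \notin \cv(U)$, which is not explicit in the lemma statement but must be read off from environment well-formedness before the substitution on $\cv(U)$ can be simplified away.
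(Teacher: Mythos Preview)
Your proposal is correct and follows the same approach as the paper: induction on the subcapturing derivation, with the \ruleref{sc-var} subcase $y = x$ handled by applying the IH to the premise $\cv(U) \sub C_2$, weakening the hypothesis $\Gamma \ts D \sub \cv(U)$, and closing with \lref{lm:sc-trans}. If anything, your treatment of that subcase is tidier than the paper's---you explicitly use $x \notin \cv(U)$ to simplify $[x \mapsto D]\cv(U)$ before invoking transitivity, whereas the paper's write-up somewhat redundantly appeals to the IH a second time where weakening alone already suffices.
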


\begin{proof}
  Define $\theta \triangleq [x \mapsto D]$. By induction on the subcapturing derivation.

  \begin{proofcases}
    \pcase{\ruleref{sc-elem}}
    Then $C_1 = \{y\}$ and $y \in C_2$.
    Inspect if $y = x$.
    If no, then our goal is $\Gamma,\theta\Delta \ts \{y\} \sub \theta{}C_2$.
    In this case, $y \in \theta{}C_2$,
    which lets us conclude by \ruleref{sc-elem}.
    Otherwise, we have $\theta{}C_2 = (C_2 \setminus \{x\}) \cup D$,
    as $x \in C_2$.
    Then our goal is
    $\Gamma,\theta{}\Delta \ts D \sub (C_2 \setminus \{x\}) \cup D$,
    which can be shown by \ruleref{sc-set} and \ruleref{sc-elem}.

    \pcase{\ruleref{sc-var}}
    Then $C_1 = \{y\}$
    and $y : C_3\;P' \in \Gamma,x:U,\Delta$
    and $\Gamma,x:U,\Delta \ts C_3 \sub C_2$.\\
    \indent Inspect if $y = x$.
    If yes, then our goal is
    $\Gamma,\theta\Delta \ts D \sub \theta{}C_2$.
    By IH we know that
    $\Gamma,\theta\Delta \ts \theta{}C_3 \sub \theta{}C_2$.
    As $x = y$, we have
    $U = C_3\,P'$
    and therefore based on an initial premise of the lemma we have
    $\Gamma \ts D \sub C_3$.
    Then by weakening and IH, we know that
    $\Gamma,\theta\Delta \ts \theta{}D \sub \theta{}C_3$,
    which means we can conclude by \lref{lm:sc-trans}.\\
    \indent Otherwise, $x \neq y$, and our goal is
    $\Gamma,\theta{}\Delta \ts C_1 \sub \theta{}C_2$.
    We inspect where $y$ is bound.
    \begin{proofcases}
      \pcase{$y \in \dom(\Gamma)$}
      Then $y \not\in C_3$, as $\Gamma,x:U,\Delta \wf$.
      By IH we have
      $\Gamma,\theta{}\Delta \ts \theta{}C_3 \sub \theta{}C_2$.
      We can conclude by \ruleref{sc-var}
      as $[x \mapsto D]C_3 = C_3$
      and $y: C_3\;P \in \Gamma,\theta{}\Delta$.

      \pcase{$y \in \dom(\Delta)$}
      Then $y: \theta{}(C_3\;P) \in \Gamma,\theta{}\Delta$
      and we can conclude by IH and \ruleref{sc-var}.
    \end{proofcases}

    \pcase{\ruleref{sc-set}}
    Then $C_1 = \{y_1, \ldots, y_n\}$
    and we inspect if $x \in C_1$.\\
    \indent If not, then for all $y \in C_1$
    we have $\theta{}\{y\} = \{y\}$
    and so we can conclude by repeated IH on our premises and \ruleref{sc-set}.\\
    \indent If yes, then we know that:
    $\forall\,y \in C_1.\, \Gamma,x:U,\Delta \ts \{y\} \sub C_2$.
    We need to show that
    $\Gamma,\theta{}\Delta \ts \theta{}C_1 \sub \theta{}C_2$.
    By \ruleref{sc-set}, it is enough to show that if $y' \in \theta{}C_1$,
    then $\Gamma,\theta{}\Delta \ts \{y'\} \sub \theta{}C_2$.
    For each such $y'$, there exists $y \in C_1$ such that $y' \in \theta{}\{y\}$.
    For this $y$, from a premise of \ruleref{sc-set} we know that
    $\Gamma,x:U,\Delta \ts \{y\} \sub \theta{}C_2$
    and so by IH we have
    $\Gamma,\theta{}\Delta \ts \theta{}\{y\} \sub \theta{}C_2$.
    Based on that, by \lref{lm:sc-trans} we also have
    $\Gamma,\theta{}\Delta \ts \{y'\} \sub \theta{}C_2$.
    which is our goal.
  \end{proofcases}
\end{proof}

\begin{lemma}[Term substitution preserves subtyping]
  \label{lm:tm-subst-sub}
  If $\Gamma,x:U,\Delta \ts U \sub T$
  and $\Gamma \ts y:U$,
  then $\Gamma,[x \mapsto y]\Delta \ts [x \mapsto y]U \sub [x \mapsto y]T$.
\end{lemma}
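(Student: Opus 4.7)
The plan is to induct on the derivation of $\Gamma, x:U, \Delta \ts S \sub T$, writing $\theta$ for $[x \mapsto y]$. Before descending into cases, I would first establish the side condition needed to invoke \lref{lm:tm-subst-sc}, namely $\Gamma \ts \{y\} \sub \cv(U)$. Inverting the typing derivation for $\Gamma \ts y : U$, some application of \ruleref{var} must eventually appear, giving $\Gamma \ts y : \CAPT{y}R$ where $y : \CS{C_y}R \in \Gamma$; this is then promoted to $U$ by subsumption. Applying \lref{lm:sub-is-sc} to that subsumption, and combining with \lref{lm:sc-trans} and \ruleref{sc-elem}, yields $\Gamma \ts \{y\} \sub \cv(U)$, which is precisely the shape required by \lref{lm:tm-subst-sc}.

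The inductive cases are then mostly mechanical. \ruleref{refl} is immediate, since $\theta$ preserves type shape; \ruleref{trans} follows from two applications of the IH. In \ruleref{tvar}, $\theta$ leaves the type variable $X$ untouched, and its bound either lives in $\Gamma$ (where it is unchanged by $\theta$) or in $\Delta$ (where it is substituted exactly as the goal demands), so the IH closes the residual subtyping. The rules \ruleref{fun}, \ruleref{tfun}, and \ruleref{boxed} each extend the context with a fresh $z$ (or $X$) before recursing; I would invoke IH with $\Delta$ replaced by $\Delta, z:T_1$, yielding a subgoal under $\Gamma, \theta\Delta, z:\theta T_1$, which has exactly the shape needed to reapply the original rule. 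The bookkeeping reduces to noting that $\theta$ commutes with the fresh extension since $z \neq x$.

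The essential case is \ruleref{capt}: $\CS{C_1}R_1 \sub \CS{C_2}R_2$ comes from $\Gamma, x:U, \Delta \ts C_1 \sub C_2$ and $\Gamma, x:U, \Delta \ts R_1 \sub R_2$. The shape obligation discharges by IH, while the capture-set obligation is discharged by \lref{lm:tm-subst-sc} instantiated with $D = \{y\}$, whose side condition I have arranged in advance. The main obstacle is exactly this side condition: the whole argument pivots on being able to extract $\Gamma \ts \{y\} \sub \cv(U)$ from the weaker-looking $\Gamma \ts y : U$, so the opening step is the load-bearing one; everything thereafter is routine bookkeeping to commute $\theta$ through rules whose premises are already in inductive range.
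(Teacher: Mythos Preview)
Your proposal is correct and follows the same inductive structure as the paper's proof. You are in fact more careful than the paper in one respect: the paper's \ruleref{capt} case simply says ``By IH and \lref{lm:tm-subst-sc} and \ruleref{capt}'' (modulo a typo where it cites \lref{lm:tp-subst-sc}), glossing over the side condition $\Gamma \ts \{y\} \sub \cv(U)$ that \lref{lm:tm-subst-sc} requires. Your opening paragraph correctly identifies and discharges this obligation via inversion on $\Gamma \ts y : U$, \lref{lm:sub-is-sc}, and transitivity.

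Two minor inaccuracies, neither of which affects correctness: in \ruleref{tvar} there is no residual subtyping premise on which to invoke the IH---once you have located the bound of $X$ in either $\Gamma$ or $\theta\Delta$, you reapply \ruleref{tvar} directly (the paper does exactly this); and \ruleref{boxed} does not extend the context, so the IH applies with $\Delta$ unchanged rather than with a fresh binder. You also omit \ruleref{top}, but it is trivial.
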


\begin{proof}
  Define $\theta \triangleq [x \mapsto y]$.
  Proceed by induction on the subtyping derivation.

  \begin{proofcases}
    \pcase{\ruleref{refl}, \ruleref{top}} By same rule.

    \pcase{\ruleref{capt}} By IH and \lref{lm:tp-subst-sc} and \ruleref{capt}.

    \pcase{\ruleref{trans}, \ruleref{boxed}, \ruleref{fun}, \ruleref{tfun}}
    By IH and re-application of the same rule.

    \pcase{\ruleref{tvar}}
    Then $U = Y$ and $T = C\;R$
    and $Y <: C\;R \in \Gamma,x:U,\Delta$
    and our goal is $\Gamma,x:U,\Delta \ts \theta{}Y \sub \theta{}(C\;R)$.
    Note that $x \neq Y$ and inspect where $Y$ is bound.
    If $Y \in \dom(\Gamma)$,
    we have $Y <: C\;R \in \Gamma,\theta\Delta$
    and since $X \not\in \fv(R)$ (as $\Gamma,X<:R,\Delta \wf$),
    $\theta(C\;R) = C\;R$.
    Then, we can conclude by \ruleref{tvar}.
    Otherwise if $Y \in \dom(\Delta)$,
    we have $Y <: C\;\theta{}R \in \Gamma,\theta\Delta$
    and we can conclude by \ruleref{tvar}.
  \end{proofcases}
\end{proof}

\begin{lemma}[Term substitution preserves typing]
  \label{lm:tm-subst-typ}
  If $\Gamma,x:U,\Delta \ts t \typ T$
  and $\Gamma \ts y \typ U$,
  then $\Gamma,[x \mapsto y]\Delta \ts [x \mapsto y]t \typ [x \mapsto y]T$.
\end{lemma}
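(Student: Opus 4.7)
The plan is to proceed by induction on the derivation of $\Gamma, x:U, \Delta \ts t \typ T$, writing $\theta \triangleq [x \mapsto y]$. The routine structural cases --- \ruleref{sub}, \ruleref{boxed}, \ruleref{unboxed}, \ruleref{tapp}, and so on --- are handled by applying the IH to each typing premise, lifting any subtyping or subcapturing premises through \lref{lm:tm-subst-sub} and \lref{lm:tm-subst-sc}, and reassembling with the same rule. Throughout, I silently weaken the hypothesis $\Gamma \ts y \typ U$ to the larger context $\Gamma, \theta\Delta$ whenever needed.

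The \ruleref{var} case is the first nontrivial one. Suppose $t = z$ with $z : \CS{D}R \in \Gamma, x:U, \Delta$ and $T = \CS{\{z\}}R$. If $z \neq x$, then $\theta t = z$ remains bound in $\Gamma, \theta\Delta$ (with annotation possibly $\theta$-rewritten when $z \in \dom(\Delta)$), and \ruleref{var} delivers exactly $\theta T$. The hard sub-case is $z = x$: then $\theta t = y$ and $\theta T = \CS{\{y\}}R$, so the goal is $\Gamma, \theta\Delta \ts y \typ \CS{\{y\}}R$. I would start from the actual binding of $y$ in $\Gamma$, apply \ruleref{var} to obtain $\Gamma \ts y \typ \CS{\{y\}}R_y$ for its declared underlying type $R_y$, then extract $\Gamma \ts R_y \sub R$ from the premise $\Gamma \ts y \typ U = \CS{D}R$ via \lref{lm:sub-inv-capt}, and finally tighten through \ruleref{capt} (with reflexivity of subcapturing on $\{y\}$) and \ruleref{sub}.

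The binder rules \ruleref{abs}, \ruleref{tabs}, and \ruleref{let} extend the environment in their body premise. By the implicit Barendregt convention the bound variable $z$ is distinct from $x$ and $y$, so I permute the extended context into the shape $\Gamma, x:U, \Delta, z:T_1$ required by the IH, apply the IH to the body to get $\Gamma, \theta\Delta, z:\theta T_1 \ts \theta t' \typ \theta T_2$, and reapply the rule on the $\theta$-substituted body and annotation.

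The main obstacle is the \ruleref{app} case (and analogously \ruleref{let}), where the conclusion type has the dependent form $[z \mapsto w]T_2$ for the formal parameter $z$ and the concrete argument $w$. Closing the case requires the substitution-commutation identity $\theta([z \mapsto w]T_2) = [z \mapsto \theta w](\theta T_2)$, which holds because $z$ can be chosen fresh for $x$ and $y$ and because $\theta$ maps a variable to a variable, so the two substitutions do not clash. Once this identity is in hand, the IH applied to the function and argument premises plus a single re-application of \ruleref{app} closes the case, and the corresponding commutation (with a freshness side condition on the result type) disposes of \ruleref{let}.
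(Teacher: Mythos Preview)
Your proposal is correct and follows essentially the same inductive strategy as the paper, with the same handling of \ruleref{var}, the binder cases, and the substitution-commutation argument in \ruleref{app}. Two small points the paper makes explicit that your ``routine'' bucket glosses over: in \ruleref{abs}/\ruleref{tabs} one must observe that variable-for-variable substitution commutes with $\cv$ so the capture-set annotation on the arrow type is preserved, and in \ruleref{box}/\ruleref{unbox} the side condition $C \subseteq \dom(\Gamma,\theta\Delta)$ is not a subtyping or subcapturing premise and needs a separate check (it holds because $y \in \dom\Gamma$); conversely, your \ruleref{var} case for $z = x$ is actually more careful than the paper's, which simply writes ``by \ruleref{var}'' where, as you note, an appeal to \lref{lm:sub-inv-capt} and \ruleref{sub} is really needed.
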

\begin{proof}
  Define $\theta \triangleq [x \mapsto y]$.
  Proceed by induction on the typing judgement.

  \begin{proofcases}
    \pcase{\ruleref{var}}
    Then $t = z$
    and $z:C\,R \in \Gamma,x:U,\Delta$
    and $T = \{z\}\,R$
    and our goal is $\Gamma,\theta\Delta \ts z \typ \theta(\{z\}\,R)$.\\
    \indent If $z = x$,
    then $U = C\,R$
    and $\theta(\{x\}\,R) = \{y\}\,R$
    and by \ruleref{var} $\Gamma,\theta\Delta \ts y \typ \{y\}\,R$.\\
    \indent Otherwise, $z \neq x$ and we inspect $z$ it is bound.\\
    \indent\indent If $z \in \dom(\Gamma)$,
    then $x \not\in \fv(C\,R)$
    and so $\theta(\{z\}\,R) = \{z\}\,R$
    and we can conclude by \ruleref{var}.\\
    \indent\indent Otherwise, $z \in \dom(\Delta)$
    and $z:\theta(C\,R) \in \Gamma,\theta\Delta$
    and we can conclude by \ruleref{var}.

    \pcase{\ruleref{sub}}
    By IH, \lref{lm:tm-subst-sub} and \ruleref{sub}.

    \pcase{\ruleref{abs}}
    Then $\Gamma,x:U,\Delta,x':U' \ts t \typ T$.\\
    \indent By IH, we have that $\Gamma,\theta\Delta, x':\theta{}U' \ts t \typ \theta{}T$.
    If we note that term substitution does not affect $\cv$,
    the result immediately follows.

    \pcase{\ruleref{tabs}} Similar to previous rule.

    \pcase{\ruleref{app}}
    Then $t = z\,z'$
    and $\Gamma,x:U,\Delta \ts z \typ C\,\ARR{x' : U'}{T'}$
    and $\Gamma,x:U,\Delta \ts z' \typ U'$
    and $T = [x' \mapsto z']T'$.\\
    \indent By IH we have
    $\Gamma,\theta\Delta \ts \theta{}z \typ \theta(C\,\ARR{x' : U'}{T'})$
    and $\Gamma,\theta\Delta \ts \theta{}z' \typ \theta{}U'$.\\
    \indent Then by \ruleref{app} we have
    $\Gamma,\theta\Delta \ts \theta(z\,z') \typ [x' \mapsto \theta{}z']\theta{}T'$.\\
    \indent As $x' \neq x$, we have
    $[x' \mapsto \theta{}z']\theta{}T' = \theta([x' \mapsto z']T')$,
    which concludes.

    \pcase{\ruleref{tapp}} Similar to previous rule.

    \pcase{\ruleref{box}}
    Then $t = \Boxed{z}$
    and $\Gamma,x:U,\Delta \ts z \typ C\,R$
    and $T = \Boxed{C\,R}$.\\
    By IH, we have $\Gamma,\theta\Delta \ts \theta{}z \typ \theta{}C\,\theta{}R$.
    If $x \not\in C$,
    we have $\theta{}C = C$
    and $C \subseteq \dom(\Gamma,\theta\Delta)$
    which lets us conclude by \ruleref{box}.
    Otherwise, $\theta{}C = (C \setminus \{x\}) \cup \{y\}$
    As $\Gamma \ts y \typ U$,
    $\theta{}C \subseteq \dom(\Gamma,\theta\Delta)$,
    which again lets us conclude by \ruleref{box}.

    \pcase{\ruleref{unbox}}
    Analogous to the previous rule.
    Note that we just swap the types in the premise and the conclusion.
  \end{proofcases}
\end{proof}

\subsubsection{Type Substitution}

\begin{lemma}[Type substitution preserves subcapturing]
  \label{lm:tp-subst-sc}
  If $\Gamma,X<:R,\Delta \ts C \sub D$
  and $\Gamma \ts P <: R$
  then $\Gamma,[X \mapsto P]\Delta \ts C \sub D$.
\end{lemma}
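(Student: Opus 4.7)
The plan is to proceed by induction on the derivation of $\Gamma,X<:R,\Delta \ts C \sub D$. The key observation up front is that since $X$ is a type variable and $C,D$ are capture sets, $[X \mapsto P]$ acts as the identity on $C$ and $D$ themselves; it only rewrites the types assigned to term variables inside $\Delta$. So the statement really says that this rewriting of $\Delta$ preserves an existing subcapturing judgement.

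For \ruleref{sc-elem}, we have $C = \{y\}$ with $y \in D$, and the same rule applies in the substituted environment, so the case is immediate.

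For \ruleref{sc-set}, we apply the IH to each premise $\Gamma,X<:R,\Delta \ts \{y\} \sub D$ for $y \in C$ and reassemble with \ruleref{sc-set}.

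The interesting case is \ruleref{sc-var}, where $C = \{y\}$, $y : C'\,P' \in \Gamma,X<:R,\Delta$, and $\Gamma,X<:R,\Delta \ts C' \sub D$. Since $y$ is a term variable and $X$ is a type variable, $y \neq X$, and we split on where $y$ is bound. If $y \in \dom(\Gamma)$, well-formedness of $\Gamma,X<:R,\Delta$ forces $X \not\in \fv(C'\,P')$, so the binding is unaffected and we conclude by IH and \ruleref{sc-var}. If $y \in \dom(\Delta)$, then after substitution the binding becomes $y : [X\mapsto P](C'\,P') = C'\,[X\mapsto P]P'$ in $\Gamma,[X\mapsto P]\Delta$ (using that $[X\mapsto P]C' = C'$ because capture sets contain no type variables); applying the IH to the premise $\Gamma,X<:R,\Delta \ts C' \sub D$ and then \ruleref{sc-var} closes this case.

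The main subtlety is just this bookkeeping around what $[X \mapsto P]$ touches: it is purely a rewrite of types appearing in $\Delta$, never of the capture sets on either side of $\sub$, and the case analysis on where the witness variable $y$ lives in the environment is what makes the proof go through without needing the subtyping premise $\Gamma \ts P <: R$ at all for this particular lemma — that premise will only become essential in the subtyping and typing versions of type substitution.
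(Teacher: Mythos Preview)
Your proposal is correct and follows essentially the same approach as the paper: induction on the subcapturing derivation, with \ruleref{sc-elem} and \ruleref{sc-set} immediate (by IH and the same rule) and \ruleref{sc-var} handled by a case split on whether the witness $y$ lies in $\Gamma$ or in $\Delta$. Your additional remarks---that $[X\mapsto P]$ is the identity on capture sets and that the premise $\Gamma \ts P <: R$ is not actually used here---are accurate and make the argument clearer than the paper's somewhat terse version.
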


\begin{proof}
  Define $\theta \triangleq [X \mapsto R]$.
  Proceed by induction on the subcapturing judgement.
  \begin{proofcases}
    \pcase{\ruleref{sc-set}, \ruleref{sc-elem}} By IH and same rule.

    \pcase{\ruleref{sc-var}}
    Then $C = \{y\}$, $y : C'\;R' \in \Gamma,X<:R,\Delta$, $y \neq X$.
    Inspect where $y$ is bound.
    If $y \in \dom(\Gamma)$,
    we have $y : C'\;R' \in \Gamma,\theta\Delta$.
    Otherwise, by definition of substition
    we have $y : C'\;\theta{}R' \in \Gamma,\theta\Delta$ .
    In both cases we can conclude by \ruleref{sc-var}.
  \end{proofcases}
\end{proof}

\begin{lemma}[Type substitution preserves subtyping]
  \label{lm:tp-subst-sub}
  If $\Gamma,X<:R,\Delta \ts U \sub T$
  and $\Gamma \ts P \sub R$,
  then $\Gamma,[X \mapsto P]\Delta \ts [X \mapsto P]U \sub [X \mapsto P]T$.
\end{lemma}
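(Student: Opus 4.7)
The plan is to set $\theta \triangleq [X \mapsto P]$ and proceed by induction on the derivation of $\Gamma, X<:R, \Delta \ts U \sub T$, mirroring the structure of \lref{lm:tm-subst-sub}. The structural cases are routine: \ruleref{refl} and \ruleref{top} follow by re-applying the same rule after substitution; \ruleref{trans} follows by two applications of the IH and re-applying \ruleref{trans}; \ruleref{boxed}, \ruleref{fun} and \ruleref{tfun} are IH-plus-same-rule cases. For \ruleref{fun} and \ruleref{tfun} the relevant premise extends the environment with a fresh binder, so I would invoke the IH with $\Delta$ augmented by that binder and note that the resulting environment is already of the form $\Gamma, \theta\Delta, x:\theta T_1$ (or $X':\theta T_1$). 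For \ruleref{capt}, the subtyping premise on the pure parts is handled by the IH, while the subcapturing premise is dispatched by \lref{lm:tp-subst-sc}; note that $\theta$ is the identity on capture sets because $X$ is a type variable.

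The only interesting case is \ruleref{tvar}. Here $U$ is some type variable $Y$ with declared bound $T$, and I would split on whether $Y = X$. If $Y \neq X$, the argument mirrors the analogous case of \lref{lm:tp-subst-sc}: if $Y \in \dom(\Gamma)$, the bound lies outside the scope of $X$ and is therefore untouched by $\theta$, so \ruleref{tvar} applies directly; if $Y \in \dom(\Delta)$, the bound in $\Gamma, \theta\Delta$ is exactly the $\theta$-image of the original, and \ruleref{tvar} again suffices. If instead $Y = X$, then the original derivation concluded $X \sub R$, and the goal becomes $\Gamma, \theta\Delta \ts P \sub \theta R$. Since $\Gamma, X<:R, \Delta$ is well-formed we have $X \not\in \fv(R)$, so $\theta R = R$; weakening the lemma premise $\Gamma \ts P \sub R$ through $\theta\Delta$ then supplies the goal.

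I expect the $Y = X$ subcase of \ruleref{tvar} to be the main obstacle, not for any deep reason but because it rests on two auxiliary facts: the well-formedness observation $X \not\in \fv(R)$ (needed to conclude $\theta R = R$), and the availability of weakening into $\Gamma, \theta\Delta$, which in turn requires that $\Gamma, \theta\Delta$ is itself well-formed. The latter should follow from $\Gamma \ts P \sub R$ (which entails $\Gamma \ts P \wf$) together with a straightforward substitution-preserves-well-formedness argument on $\Delta$, running in parallel with this induction.
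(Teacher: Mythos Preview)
Your proposal is correct and follows essentially the same approach as the paper: induction on the subtyping derivation, with the routine structural cases handled by IH and rule re-application, \ruleref{capt} discharged via \lref{lm:tp-subst-sc}, and the only interesting case \ruleref{tvar} split on whether the variable is $X$ (lemma premise plus weakening, using $X \notin \fv(R)$) or not (case split on where it is bound). Your write-up is in fact slightly more careful than the paper's own proof, in that you flag the need for $\Gamma,\theta\Delta$ to be well-formed before weakening can be invoked.
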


\begin{proof}
  Define $\theta \triangleq [X \mapsto P]$.
  Proceed by induction on the subtyping derivation.

  \begin{proofcases}
    \pcase{\ruleref{refl}, \ruleref{top}} By same rule.

    \pcase{\ruleref{capt}} By IH and \lref{lm:tp-subst-sc} and \ruleref{capt}.

    \pcase{\ruleref{trans}, \ruleref{boxed}, \ruleref{fun}, \ruleref{tfun}}
    By IH and re-application of the same rule.

    \pcase{\ruleref{tvar}}
    Then $U = Y$ and $T = C\;R'$
    and $Y <: C\;R' \in \Gamma,X<:R,\Delta$
    and our goal is $\Gamma,X<:R,\Delta \ts \theta{}Y \sub \theta{}(C\;R')$.
    If $Y = X$, by lemma premise and weakening.
    Otherwise, inspect where $Y$ is bound.
    If $Y \in \dom(\Gamma)$,
    we have $Y <: C\;R' \in \Gamma,\theta\Delta$
    and since $X \not\in \fv(R')$ (as $\Gamma,X<:R,\Delta \wf$),
    $\theta(C\;R') = C\;R'$.
    Then, we can conclude by \ruleref{tvar}.
    Otherwise if $Y \in \dom(\Delta)$,
    we have $Y <: C\;\theta{}R' \in \Gamma,\theta\Delta$
    and we can conclude by \ruleref{tvar}.
  \end{proofcases}
\end{proof}

\begin{lemma}[Type substitution preserves typing]
  \label{lm:tp-subst-typ}
  If $\Gamma,X<:R,\Delta \ts t \typ T$
  and $\Gamma \ts P \sub R$,
  then $\Gamma,[X \mapsto P]\Delta \ts [X \mapsto P]t \typ [X \mapsto P]T$.
\end{lemma}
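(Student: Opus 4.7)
The plan is to mirror the structure of \lref{lm:tm-subst-typ} and proceed by induction on the derivation of $\Gamma, X<:R, \Delta \ts t \typ T$, writing $\theta \triangleq [X \mapsto P]$. The substitution lemmas for subcapturing (\lref{lm:tp-subst-sc}) and subtyping (\lref{lm:tp-subst-sub}) already carry the real technical content, so the typing proof should reduce to case analysis on the shape of the derivation.

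First I would dispatch \ruleref{var}: since $X$ is a type variable and the subject $z$ of the judgement is a term variable, they cannot coincide, and this case is strictly easier than its analogue in \lref{lm:tm-subst-typ}. I would case-split on whether $z \in \dom(\Gamma)$ -- where well-formedness of $\Gamma, X{<:}R, \Delta$ gives $X \notin \fv$ of the binding type, so $\theta$ acts as the identity on it -- or $z \in \dom(\Delta)$, where the binding is replaced by its $\theta$-image, and then re-apply \ruleref{var} in either case. The \ruleref{sub} case follows from IH combined with \lref{lm:tp-subst-sub} and \ruleref{sub}. The \ruleref{box} and \ruleref{unbox} cases use IH plus the observation that capture sets contain only term variables, so $\theta C = C$ and the \emph{structure} of the boxed/unboxed type is preserved.

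The abstraction cases \ruleref{abs} and \ruleref{tabs} extend the environment with a new binding; I would invoke the IH on the body in the extended context, noting that the freshly added annotation (the argument type, or the upper bound) is the only new place $\theta$ can act, and then reassemble the judgement with the same rule. For \ruleref{app} and \ruleref{tapp} I would apply the IH to each premise and re-apply the rule, obtaining a conclusion whose result type uses $\theta$ applied after the inner term/type substitution; this needs to be rewritten as $\theta$ applied before the inner substitution. That commutation is a standard fact and holds because the inner bound variable can be chosen fresh from $X$, and $P$ is well-formed in $\Gamma$ (which does not mention the inner bound variable), so $P$ cannot capture it.

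The main obstacle I anticipate is nothing deep, but rather the careful bookkeeping needed in the \ruleref{app}/\ruleref{tapp} cases to commute $\theta$ past the inner substitution, together with tracking in each case which occurrences of $\theta$ reduce to the identity (namely, on term variables, on capture sets, and on subterms already well-formed in $\Gamma$). As long as this bookkeeping is handled uniformly and the appropriate substitution lemmas for subtyping are invoked in \ruleref{sub}, the rest of the proof should go through by a direct structural induction.
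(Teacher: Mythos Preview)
Your proposal is correct and follows essentially the same route as the paper: induction on the typing derivation with $\theta = [X \mapsto P]$, handling \ruleref{var} by case-splitting on where the term variable is bound, \ruleref{sub} via \lref{lm:tp-subst-sub}, \ruleref{app}/\ruleref{tapp} via the substitution-commutation identities, and \ruleref{box}/\ruleref{unbox} by noting that $\theta$ fixes capture sets. The only small point the paper makes explicit that you leave implicit is that in the \ruleref{abs}/\ruleref{tabs} cases one must also observe that type substitution does not affect $\cv$, since those rules carry a side condition on the capture set of the body; this is immediate for the same reason you already note (capture sets contain only term variables), so your argument goes through unchanged.
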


\begin{proof}
  Define $\theta \triangleq [X \mapsto P]$.
  Proceed by induction on the typing derivation.
  \begin{proofcases}
    \pcase{\ruleref{var}}
    Then $t = y$, $y : C'\;R' \in \Gamma,X<:R,\Delta$, $y \neq X$,
    and our goal is $\Gamma,\theta\Delta \ts x : \{x\} \theta{}R'$.\\
    \indent Inspect where $y$ is bound.
    If $y \in \dom(\Gamma)$, {
      then $y : C'\;R' \in \Gamma,\theta\Delta$
      and $X \not\in \fv(R')$ {
        (as $\Gamma,X<:R,\Delta \wf$) }.
      Then, $\theta(C'\;R') = C'\;R'$
      and we can conclude by \ruleref{var}.
    }.
    Otherwise, $y : C'\;\theta{}R' \in \Gamma,\theta\Delta$
    and we can directly conclude by \ruleref{var}.

    \pcase{\ruleref{abs}, \ruleref{tabs}}
    In both rules, observe that type substitution does not affect $\cv$
    and conclude by IH and rule re-application.

    \pcase{\ruleref{app}}
    Then we have $t = x\;y$
    and $\Gamma,X<:R,\Delta \ts x : C\;\ARR{z : U}T_0$
    and $T = [z \mapsto y]T_0$.\\
    \indent We observe that $\theta[z \mapsto y]T_0 = [z \mapsto y]\theta{}T_0$
    and $\theta{}t = t$
    and conclude by IH and \ruleref{app}.

    \pcase{\ruleref{tapp}}
    Then we have $t = x\,[R']$
    and $\Gamma,X<:R,\Delta \ts x : C\;\TARR{Z <: R'}T_0$
    and $T = [Z \mapsto R']T_0$.\\
    \indent We observe that
    $\theta[Z \mapsto R']T_0 = [Z \mapsto \theta{}R']\theta{}T_0$.
    By IH, $\Gamma,\theta\Delta \ts x : C\;\TARR{Z : \theta{}R'}T_0$,
    Then, we can conclude by \ruleref{tapp}.

    \pcase{\ruleref{box}}
    Then $t = \Boxed{y}$
    and $\Gamma,X<:R,\Delta \ts y \typ C\;R$
    and $T = \Boxed{C\;R}$,
    and our goal is $\Gamma,\theta\Delta \ts y : \Boxed{\theta(C'\;R')}$.\\
    \indent Inspect where $y$ is bound.
    If $y \in \dom(\Gamma)$, {
      then $y : C'\;R' \in \Gamma,\theta\Delta$
      and $X \not\in \fv(R')$ {
        (as $\Gamma,X<:R,\Delta \wf$) }.
      Then, $\theta(C'\;R') = C'\;R'$
      and we can conclude by \ruleref{box}.
    }
    Otherwise, $y : C'\;\theta{}R' \in \Gamma,\theta\Delta$
    and we can directly conclude by \ruleref{box}.

    \pcase{\ruleref{unbox}}
    Proceed analogously to the case for \ruleref{box} -- we just swap
    the types in the premise and in the consequence.

    \pcase{\ruleref{sub}} By IH and \ref{lm:tp-subst-sub}.
  \end{proofcases}
\end{proof}

\subsection{Main Theorems -- Soundness}

\subsubsection{Preliminaries}

As we state Preservation (Theorem \ref{th:preservation}) in a non-empty
environment, we need to show canonical forms lemmas in such an environment as
well. To do so, we need to know that values cannot be typed with a type that is
a type variable, which normally follows from the environment being empty.
Instead, we show the following lemma:

\begin{lemma}[Value typing]
  \label{lm:val-typ}
  If $\Gamma \ts v \typ T$,
  then $T$ is not of the form $\CS{C}X$.
\end{lemma}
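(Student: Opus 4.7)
The plan is to proceed by induction on the typing derivation of $\Gamma \ts v \typ T$, inspecting which rule could form its last step. Since values in this calculus are lambdas, type-lambdas, and boxed values, the introduction rules \ruleref{abs}, \ruleref{tabs}, and \ruleref{box} are the only rules that can directly conclude a typing for $v$. Each of these produces a type whose shape is respectively a function arrow, a type function arrow, or a boxed type --- none of which match the form $\CS{C}X$, so these base cases are immediate. The case \ruleref{var} is excluded because variables are not values in the grammar.

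The one interesting case is \ruleref{sub}, where we have $\Gamma \ts v \typ T'$ and $\Gamma \ts T' \sub T$. I would argue by contradiction: if $T = \CS{C}X$, then by \lref{lm:sub-inv-tvar} the premise type $T'$ must also be of the form $\CS{C'}X'$ for some $C'$ and $X'$. But the induction hypothesis applied to $\Gamma \ts v \typ T'$ rules this out. So $T$ cannot be of that form.

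The proof is short and the main ``obstacle'' is conceptual rather than technical: it is crucial that the syntactic category of values does not overlap with variables, so that \ruleref{var} genuinely cannot conclude the outermost step. Given that, the result follows directly from subtyping inversion on type variables, which is precisely why \lref{lm:sub-inv-tvar} was proven earlier in the subtyping-inversion section.
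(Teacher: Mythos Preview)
Your proof is correct and follows essentially the same approach as the paper: induction on the typing derivation, with the introduction rules \ruleref{abs}, \ruleref{tabs}, \ruleref{box} immediate, other rules inapplicable because $v$ is a value, and the \ruleref{sub} case handled by contradiction via \lref{lm:sub-inv-tvar} and the IH. The paper's write-up is terser but the argument is the same.
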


\begin{proof}
  By induction on the typing derivation.\\
  \indent For rule \ruleref{sub},
  we know that $\Gamma \ts v \typ U$ and $\Gamma \ts U \sub T$.
  Assuming $T = \CS{C}X$,
  we have a contradiction
  by \lref{lm:sub-inv-tvar} and IH.\\
  \indent Rules \ruleref{box}, \ruleref{abs}, \ruleref{tabs} are immediate,
  and other rules are not possible.
\end{proof}

\begin{lemma}[Canonical forms: term abstraction]
  \label{lm:canon-abs}
  If $\Gamma \ts v \typ \CS{C}\ARR{x : U}T$,
  then $v = \LAM{x : U'}t$
  and $\Gamma \ts U \sub U'$
  and $\Gamma,x:U \ts t \typ T$.
\end{lemma}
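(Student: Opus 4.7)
The plan is to proceed by induction on the derivation of $\Gamma \ts v \typ \CS{C}\ARR{x:U}T$. Since $v$ is a value, the only possible last rules are \ruleref{abs}, \ruleref{tabs}, \ruleref{box}, and \ruleref{sub}. The rules \ruleref{tabs} and \ruleref{box} are immediately impossible: they assign $v$ a type function type or a boxed type, which cannot syntactically equal $\CS{C}\ARR{x:U}T$.

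For the \ruleref{abs} case, the derivation forces $v = \LAM{x:U}t$ with $\Gamma,x:U \ts t \typ T$ directly, and we get $\Gamma \ts U \sub U$ by reflexivity (Fact \ref{fact:refl}).

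The interesting case is \ruleref{sub}. Here we have $\Gamma \ts v \typ U_0$ and $\Gamma \ts U_0 \sub \CS{C}\ARR{x:U}T$ for some $U_0$. I would apply \lref{lm:sub-inv-fun} to the subtyping premise, which gives two possibilities for $U_0$. The first possibility, $U_0 = \CS{C'}X$, is ruled out by \lref{lm:val-typ} since $v$ is a value. Thus $U_0 = \CS{C'}\ARR{x:U_1}U_2$ with $\Gamma \ts U \sub U_1$ and $\Gamma,x:U \ts U_2 \sub T$. Applying the induction hypothesis to $\Gamma \ts v \typ \CS{C'}\ARR{x:U_1}U_2$ yields $v = \LAM{x:U'}t$ together with $\Gamma \ts U_1 \sub U'$ and $\Gamma,x:U_1 \ts t \typ U_2$.

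It remains to stitch things together. For the parameter, $\Gamma \ts U \sub U'$ follows from $\Gamma \ts U \sub U_1$ and $\Gamma \ts U_1 \sub U'$ by transitivity (Fact \ref{fact:trans}). For the body, I would use term binding narrowing on $\Gamma,x:U_1 \ts t \typ U_2$ with $\Gamma \ts U \sub U_1$ to get $\Gamma,x:U \ts t \typ U_2$, and then apply \ruleref{sub} with $\Gamma,x:U \ts U_2 \sub T$ to conclude $\Gamma,x:U \ts t \typ T$. The main subtlety is remembering to invoke narrowing on the body rather than trying to use transitivity of subtyping under an incompatible binding; everything else is a direct combination of the inversion lemma, the value typing lemma, and the structural properties already established.
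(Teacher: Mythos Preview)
Your proof is correct and follows essentially the same approach as the paper's: induction on the typing derivation, with the \ruleref{sub} case handled via \lref{lm:sub-inv-fun}, \lref{lm:val-typ} to exclude the type-variable branch, the induction hypothesis, transitivity for the parameter bound, and narrowing plus \ruleref{sub} for the body. Your write-up is in fact slightly more explicit than the paper's, spelling out the final subsumption step on the body that the paper leaves implicit.
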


\begin{proof}
  By induction on the typing derivation.

  For rule \ruleref{sub},
  we observe that by \lref{lm:sub-inv-fun} and by \lref{lm:val-typ},
  the subtype is of the form $\CS{C'}\ARR{y:U''}T'$
  and we have $\Gamma \ts U \sub U''$.
  By IH we know that $v = \LAM{x:U'}t$
  and $\Gamma \ts U'' \sub U'$
  and $\Gamma,x:U'' \ts t \typ T$.
  By \ruleref{trans} we have $\Gamma \ts U \sub U'$
  and by narrowing we have $\Gamma,x:U \ts t \typ T$,
  which concludes.

  Rule \ruleref{abs} is immediate, and other rules cannot occur.
\end{proof}

\begin{lemma}[Canonical forms: type abstraction]
  \label{lm:canon-tabs}
  If $\Gamma \ts v \typ \CS{C}\TARR{X : U}T$,
  then $v = \TLAM{X : U'}t$
  and $\Gamma \ts U \sub U'$
  and $\Gamma,X<:U \ts t \typ T$.
\end{lemma}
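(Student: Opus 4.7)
The plan is to mirror the proof of \lref{lm:canon-abs} almost verbatim, replacing the function-type machinery with its type-function analogue. I would proceed by induction on the typing derivation. The only rules that can assign a type of the form $\CS{C}\TARR{X:U}T$ to a value are \ruleref{tabs} and \ruleref{sub}; \ruleref{abs}, \ruleref{box} produce values with a mismatched shape, and the elimination forms and \ruleref{var} do not apply to values. The \ruleref{tabs} case yields the conclusion directly, with $\Gamma \ts U \sub U$ by reflexivity (\ref{fact:refl}).

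The substantive case is \ruleref{sub}. Here we have $\Gamma \ts v \typ U_0$ and $\Gamma \ts U_0 \sub \CS{C}\TARR{X:U}T$. By \lref{lm:sub-inv-tfun}, $U_0$ is either of the form $\CS{C'}Y$ for some type variable $Y$, or of the form $\CS{C'}\TARR{X:U''}T'$ with $\Gamma \ts U \sub U''$ and $\Gamma,X<:U \ts T' \sub T$. The first alternative is ruled out by \lref{lm:val-typ}. So we are in the second, and applying the IH to $\Gamma \ts v \typ \CS{C'}\TARR{X:U''}T'$ gives $v = \TLAM{X:U'}t$ with $\Gamma \ts U'' \sub U'$ and $\Gamma,X<:U'' \ts t \typ T'$.

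To conclude, I would combine these pieces as follows: transitivity of subtyping (\ref{fact:trans}) applied to $\Gamma \ts U \sub U''$ and $\Gamma \ts U'' \sub U'$ gives $\Gamma \ts U \sub U'$; type binding narrowing applied to $\Gamma,X<:U'' \ts t \typ T'$ using $\Gamma \ts U \sub U''$ gives $\Gamma,X<:U \ts t \typ T'$; and finally \ruleref{sub} together with $\Gamma,X<:U \ts T' \sub T$ (which is already over the environment $\Gamma,X<:U$, so no further narrowing is needed) yields $\Gamma,X<:U \ts t \typ T$.

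I expect no genuine obstacle here: the only subtlety is remembering to apply narrowing and transitivity on the correct side and the correct environment, exactly as in \lref{lm:canon-abs}. The only place the type-function case differs structurally from the term case is that the relevant inversion lemma is \lref{lm:sub-inv-tfun} and the narrowing step uses type binding narrowing rather than term binding narrowing, but both have already been established and used in analogous ways earlier in the development.
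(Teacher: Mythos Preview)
Your proposal is correct and follows exactly the approach the paper intends: the paper's own proof is simply ``Analogous to the proof of \lref{lm:canon-abs},'' and your argument is precisely that analogue, swapping \lref{lm:sub-inv-fun} for \lref{lm:sub-inv-tfun} and term binding narrowing for type binding narrowing. If anything, your write-up is more explicit than the paper's treatment of the term-abstraction case, since you spell out the final \ruleref{sub} step with $\Gamma,X{<:}U \ts T' \sub T$ that the paper's \lref{lm:canon-abs} proof leaves implicit.
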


\begin{proof}
  Analogous to the proof of \lref{lm:canon-abs}.
\end{proof}

\begin{lemma}[Canonical forms: boxed term]
  \label{lm:canon-box}
  If $\Gamma \ts v \typ \CS{C}\Boxed{T}$,
  then $v = \Boxed{x}$
  and $\Gamma \ts x \typ T$.
\end{lemma}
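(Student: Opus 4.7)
The plan is to mirror the proof of \lref{lm:canon-abs} almost verbatim, proceeding by induction on the typing derivation. The only rules that can introduce a value are \ruleref{abs}, \ruleref{tabs}, \ruleref{box}, and \ruleref{sub}; of these, \ruleref{abs} and \ruleref{tabs} produce function and type-function types respectively, so they cannot yield $\CS{C}\Boxed{T}$ and are impossible. The \ruleref{box} case is immediate: from the premise we read off directly that $v = \Boxed{x}$ with $\Gamma \ts x \typ T$.

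The only interesting case is \ruleref{sub}. Here we have $\Gamma \ts v \typ U$ together with $\Gamma \ts U \sub \CS{C}\Boxed{T}$. I would apply the subtyping inversion lemma for boxed types, which gives two possibilities for $U$. The first possibility, $U = \CS{C'}X$, is immediately ruled out by \lref{lm:val-typ}, since values cannot be typed at a type-variable type. So $U$ must be of the form $\CS{C'}\Boxed{U'}$ with $\Gamma \ts U' \sub T$. The IH then delivers $v = \Boxed{x}$ and $\Gamma \ts x \typ U'$, and a single application of \ruleref{sub} promotes the latter to $\Gamma \ts x \typ T$, which concludes.

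No narrowing is required here, in contrast to \lref{lm:canon-abs} and \lref{lm:canon-tabs}: since $T$ occurs in a covariant, non-binding position inside $\Boxed{\cdot}$, simple transitivity via \ruleref{sub} suffices, and there is no binding whose bound we have to adjust. For the same reason I do not expect any real obstacle in this proof — the one place to be careful is invoking the correct branch of the boxed-type subtyping inversion lemma and combining it with \lref{lm:val-typ} to discard the type-variable branch.
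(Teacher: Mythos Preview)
Your proposal is correct and is precisely the argument the paper intends: it says the proof is analogous to that of \lref{lm:canon-abs}, and you have spelled out exactly that analogy, using the boxed-type subtyping inversion lemma together with \lref{lm:val-typ} in the \ruleref{sub} case and observing that \ruleref{box} is immediate. Your remark that no narrowing is needed here (unlike in \lref{lm:canon-abs}) is also correct and a nice sharpening of the analogy.
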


\begin{proof}
  Analogous to the proof of \lref{lm:canon-abs}.
\end{proof}

\begin{lemma}[Variable lookup inversion]
  \label{lm:inv-var-lookup}
  If we have both $\Gamma \ts \sigma \sim \Delta$ and $x : C\,R \in \Gamma,\Delta$,
  then $\sigma(x) = v$ implies that $\Gamma,\Delta \ts v \typ \CS{C}R$.
\end{lemma}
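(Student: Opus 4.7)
The plan is to proceed by induction on the derivation of $\Gamma \ts \sigma \sim \Delta$ (equivalently, on the structure of $\sigma$), with a case split in the inductive step on whether the name $x$ coincides with the outermost binder introduced by the store.

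In the base case $\sigma = [\,]$, we have $\Delta = \cdot$ and $\sigma(x)$ is undefined, so the implication holds vacuously. In the inductive case, the matching judgement must have been concluded by the non-empty rule, so $\sigma = \Let{y}{v'}{\sigma'}$ and $\Delta = y:T,\Delta'$, and we get as premises $\Gamma,y:T \ts \sigma' \sim \Delta'$, $\Gamma \ts v' \typ T$, and $y \not\in \fv(T)$.

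If $x = y$, then $\sigma(x) = v'$ and, by uniqueness of names in environments, the binding $x : C\,R$ in $\Gamma,\Delta$ must be the one we just introduced, so $C\,R = T$. The premise $\Gamma \ts v' \typ T$ then yields $\Gamma,\Delta \ts v \typ C\,R$ by weakening (the resulting context is well-formed by the fact stated earlier in the paper). If $x \neq y$, then $\sigma(x) = \sigma'(x)$ and, since $x \neq y$, the binding $x : C\,R$ lies in $\Gamma,y:T,\Delta'$. We apply the induction hypothesis with $\Gamma' := \Gamma,y:T$ and $\Delta' := \Delta'$ to the smaller matching $\Gamma,y:T \ts \sigma' \sim \Delta'$, obtaining $\Gamma,y:T,\Delta' \ts v \typ C\,R$, which is exactly $\Gamma,\Delta \ts v \typ C\,R$.

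No step is really an obstacle here; the only mild subtlety is the bookkeeping in the $x = y$ case, where we rely on the uniqueness assumption on environments to identify $C\,R$ with $T$, and on weakening to move from $\Gamma$ up to $\Gamma,\Delta$. The well-formedness of $\Gamma,\Delta$ needed for weakening is guaranteed by the fact about matching environments stated at the beginning of this section.
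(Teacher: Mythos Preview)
Your proof is correct, and it differs from the paper's in two respects worth noting. First, you peel the store from the outside, writing $\sigma = \Let{y}{v'}{\sigma'}$ and $\Delta = y{:}T,\Delta'$, which follows the inductive definition of $\Gamma \ts \sigma \sim \Delta$ directly; the paper instead peels from the inside, writing $\sigma = \sigma'[\Let{y}{v}{[\,]}]$ and $\Delta = \Delta',y{:}U$, which requires an (easy but unstated) re-association of the matching judgement. Second, and more substantively, in the $x = y$ case you simply use uniqueness of bindings to identify $T$ with $C\,R$ and conclude by weakening, whereas the paper launches an inner induction on a typing derivation with cases \ruleref{var} and \ruleref{sub}. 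That inner induction only makes sense if the second hypothesis were the typing judgement $\Gamma,\Delta \ts x \typ C\,R$ rather than the environment lookup $x : C\,R \in \Gamma,\Delta$ actually stated (and indeed the three corollaries that invoke this lemma all pass a typing judgement). For the statement as written, your direct argument is both simpler and better aligned with the hypotheses.

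One minor remark: the Fact you cite for well-formedness of $\Gamma,\Delta$ is stated under the assumption that some $\storeof{t}$ is well-typed, which is not among the hypotheses here. Well-formedness of $\Gamma,\Delta$ nonetheless follows by a trivial induction on the matching derivation itself (each step types a value in a well-formed prefix), so this is a citation issue rather than a gap.
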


\begin{proof}
  By structural induction on $\sigma$.
  It is not possible for $\sigma$ to be empty.\\
  \indent Otherwise, $\sigma = \sigma'[\Let{y}{v}{[\,]}]$
  and for some $U$ we have both
  $\Delta = \Delta',y:U$
  and $\Gamma,\Delta' \ts v \typ U$.
  \note{We remove bindings from the ``inside'' in order to make induction
    possible below.}
  \\
  \indent If $y \neq x$,
  we can proceed by IH as $x$ can also be typed in $\Gamma,\Delta'$,
  after which we can conclude by weakening.
  Otherwise, we proceed by induction on the typing derivation.
  \begin{proofcases}
    \pcase{\ruleref{var}}
    Then $C = \{x\}$.
    Using subsumption and \ruleref{sc-var},
    we can conclude with $U = \CS{C'}R$.

    \pcase{\ruleref{sub}}
    Then we have $\Gamma,\Delta \ts x \typ \CS{C''}R'$
    and $\Gamma,\Delta \ts \CS{C''}R' \sub \CS{C}R$.
    By \lref{lm:sub-inv-capt} we have both
    $\Gamma,\Delta \ts C'' \sub C$
    and $\Gamma,\Delta \ts R' \sub R$.
    Then we proceed by using the IH on $\Gamma,\Delta \ts x \typ \CS{C''}R'$
    and conclude through subsumption and transitivity (\ref{fact:trans}).

    \noindent Other rules are impossible.
  \end{proofcases}
\end{proof}

\begin{lemma}[Term abstraction lookup inversion]
  \label{lm:inv-tm-abs}
  If $\Gamma \ts \sigma \sim \Delta$
  and $\Gamma,\Delta \ts x \typ \CS{C}\ARR{z: U}T$
  and $\sigma(x) = \LAM{z: U'}t$,
  then $\Gamma,\Delta \ts U \sub U'$
  and $\Gamma,\Delta,z:U \ts t \typ T$.
\end{lemma}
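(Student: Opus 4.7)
The plan is to combine \lref{lm:inv-var-lookup} with \lref{lm:canon-abs}, reconciling the type the environment assigns to $x$ with the type the hypothesis assigns to $x$ via subtyping inversion and narrowing.

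First I would observe that by a straightforward induction on the typing derivation of $\Gamma,\Delta \ts x \typ \CS{C}\ARR{z:U}T$, there is some binding $x : C_0\,R_0 \in \Gamma,\Delta$ and, collecting any interleaved \ruleref{sub} steps, $\Gamma,\Delta \ts \{x\}\,R_0 \sub \CS{C}\ARR{z:U}T$. Applying \lref{lm:inv-var-lookup} to this binding together with $\Gamma \ts \sigma \sim \Delta$ and $\sigma(x) = \LAM{z:U'}t$ yields $\Gamma,\Delta \ts \LAM{z:U'}t \typ \CS{C_0}\,R_0$.

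Next I would argue that $R_0$ must be a function type. By \lref{lm:val-typ} the value $\LAM{z:U'}t$ cannot inhabit a type of the form $\CS{C_0}X$, which rules out the type-variable case of \lref{lm:sub-inv-fun} applied to $\Gamma,\Delta \ts \{x\}\,R_0 \sub \CS{C}\ARR{z:U}T$. Hence, after alpha-renaming, $R_0 = \ARR{z:U_1}T_1$ with $\Gamma,\Delta \ts U \sub U_1$ and $\Gamma,\Delta,z:U \ts T_1 \sub T$. Then \lref{lm:canon-abs} applied to $\Gamma,\Delta \ts \LAM{z:U'}t \typ \CS{C_0}\ARR{z:U_1}T_1$ gives $\Gamma,\Delta \ts U_1 \sub U'$ and $\Gamma,\Delta,z:U_1 \ts t \typ T_1$.

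Finally, transitivity (Fact~\ref{fact:trans}) of the two subtyping facts $U \sub U_1$ and $U_1 \sub U'$ discharges the first conclusion $\Gamma,\Delta \ts U \sub U'$. For the second, I would narrow the body typing $\Gamma,\Delta,z:U_1 \ts t \typ T_1$ using term binding narrowing with $\Gamma,\Delta \ts U \sub U_1$ to obtain $\Gamma,\Delta,z:U \ts t \typ T_1$, and then conclude $\Gamma,\Delta,z:U \ts t \typ T$ by \ruleref{sub} with $\Gamma,\Delta,z:U \ts T_1 \sub T$. The only non-mechanical step is the initial variable-typing inversion to locate $x$'s binding; the rest is routine bookkeeping across canonical forms, subtyping inversion, and narrowing.
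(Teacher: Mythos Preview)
Your argument is correct and follows the same two-lemma strategy the paper uses (the paper's proof is the one-liner ``A corollary of \lref{lm:inv-var-lookup} and \lref{lm:canon-abs}''). The extra machinery you introduce---the initial variable-typing inversion to locate the binding $x:C_0\,R_0$, the appeal to \lref{lm:sub-inv-fun} and \lref{lm:val-typ}, and the final narrowing/subsumption glue---arises because you read \lref{lm:inv-var-lookup} as requiring a literal binding $x:C\,R \in \Gamma,\Delta$ rather than a typing judgment $\Gamma,\Delta \ts x : C\,R$. The proof of \lref{lm:inv-var-lookup} in the paper actually performs induction on a typing derivation, so under that intended reading you can apply it directly to the hypothesis $\Gamma,\Delta \ts x \typ \CS{C}\ARR{z:U}T$, obtain $\Gamma,\Delta \ts \LAM{z:U'}t \typ \CS{C}\ARR{z:U}T$, and then \lref{lm:canon-abs} yields both conclusions in one step. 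Your longer route is still sound; it just unpacks what the paper leaves implicit.
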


\begin{proof}
  A corollary of \lref{lm:inv-var-lookup} and \lref{lm:canon-abs}.
\end{proof}

\begin{lemma}[Type abstraction lookup inversion]
  \label{lm:inv-tp-abs}
  If $\Gamma \ts \sigma \sim \Delta$
  and $\Gamma,\Delta \ts x \typ \CS{C}\Ttlam{Z}{U}{T}$
  and $\sigma(x) = \tlam{Z}{U'}{t}$,
  then $\Gamma,\Delta \ts U \sub U'$
  and $\Gamma,\Delta,Z<:U \ts t \typ T$.
\end{lemma}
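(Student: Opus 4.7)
The plan is to follow the exact same recipe as the proof of \lref{lm:inv-tm-abs}, which was itself a one-line corollary argument. First, I would apply \lref{lm:inv-var-lookup} to the hypotheses $\Gamma \ts \sigma \sim \Delta$ and $\Gamma,\Delta \ts x \typ \CS{C}\Ttlam{Z}{U}{T}$: since $\sigma(x) = \tlam{Z}{U'}{t}$ is defined, that lemma gives $\Gamma,\Delta \ts \tlam{Z}{U'}{t} \typ \CS{C}\Ttlam{Z}{U}{T}$.

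Next I would feed that typing derivation to \lref{lm:canon-tabs} (canonical forms for type abstractions). Since the value is already syntactically a type abstraction, the lemma immediately yields $\Gamma,\Delta \ts U \sub U'$ together with $\Gamma,\Delta, Z<:U \ts t \typ T$, which are exactly the two conclusions required.

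There is no real obstacle here: the only thing one could worry about is making sure that the environment passed to \lref{lm:inv-var-lookup} and \lref{lm:canon-tabs} is consistent (namely $\Gamma,\Delta$), and that the variable $x$ indeed has the appropriate form of type before canonical forms can be invoked. Both are immediate from the statement of this lemma and the interface of the two cited results. So the proof really is just ``a corollary of \lref{lm:inv-var-lookup} and \lref{lm:canon-tabs},'' directly parallel to the proof of \lref{lm:inv-tm-abs}.
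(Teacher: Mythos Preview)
Your proposal is correct and matches the paper's proof exactly: the paper also dispatches this lemma in one line as ``a corollary of \lref{lm:inv-var-lookup} and \lref{lm:canon-tabs},'' in direct analogy with \lref{lm:inv-tm-abs}. There is nothing to add.
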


\begin{proof}
  A corollary of \lref{lm:inv-var-lookup} and \lref{lm:canon-tabs}.
\end{proof}

\begin{lemma}[Box lookup inversion]
  \label{lm:inv-box}
  If $\Gamma \ts \sigma \sim \Delta$
  and $\sigma(x) = \Boxed{y}$
  and $\Gamma,\Delta \ts x \typ \Boxed{T}$,
  then $\Gamma,\Delta \ts y \typ T$.
\end{lemma}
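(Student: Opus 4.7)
The plan is to prove this in exactly the same way as the two preceding corollaries \lref{lm:inv-tm-abs} and \lref{lm:inv-tp-abs}: chain \lref{lm:inv-var-lookup} into \lref{lm:canon-box}. Reading the $\Boxed{T}$ in the statement as shorthand for the full capturing form $\CS{C}\Boxed{T}$ (since every term-level type carries a capture set in this system), the argument is a short sequence of mechanical steps and has essentially no induction of its own.

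First I would use the premise $\Gamma,\Delta \ts x \typ \CS{C}\Boxed{T}$ to recover the actual binding of $x$ in the environment. Peeling off any uses of \ruleref{sub} --- or, equivalently, applying \lref{lm:sub-inv-capt} once --- gives some binding $x : \CS{C'}R' \in \Gamma,\Delta$ together with $\Gamma,\Delta \ts \CS{C'}R' \sub \CS{C}\Boxed{T}$. Combining that binding with the store-matching premise $\Gamma \ts \sigma \sim \Delta$ and $\sigma(x) = \Boxed{y}$, \lref{lm:inv-var-lookup} yields $\Gamma,\Delta \ts \Boxed{y} \typ \CS{C'}R'$. Applying \ruleref{sub} against the subtyping just obtained then promotes this to $\Gamma,\Delta \ts \Boxed{y} \typ \CS{C}\Boxed{T}$, which is exactly the shape \lref{lm:canon-box} expects, and that lemma directly delivers the desired $\Gamma,\Delta \ts y \typ T$.

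There is no real obstacle in the proof; the lemma genuinely is a corollary, and the only minor care needed is the reading of $\Boxed{T}$ in the statement as $\CS{C}\Boxed{T}$ for some ignored $C$ so that both \lref{lm:sub-inv-capt} and \lref{lm:canon-box} apply off the shelf. All other plumbing --- extracting the binding, invoking the store lookup, and unpacking the canonical form --- is identical to the two preceding abstraction-lookup corollaries, so the final write-up can be as terse as ``A corollary of \lref{lm:inv-var-lookup} and \lref{lm:canon-box}.''
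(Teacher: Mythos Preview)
Your proposal is correct and matches the paper's approach exactly: the paper's proof is the one-liner ``A corollary of \lref{lm:inv-var-lookup} and \lref{lm:canon-box},'' and you have simply unfolded the plumbing behind that sentence. The extra step you note --- extracting the actual binding of $x$ before invoking \lref{lm:inv-var-lookup} --- is indeed required by that lemma's interface, so your elaboration is accurate.
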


\begin{proof}
  A corollary of \lref{lm:inv-var-lookup} and \lref{lm:canon-box}.
\end{proof}

\subsubsection{Soundness}

In this section, we show the classical soundness theorems.

\begin{theorem}[Preservation]
  \label{th:preservation}
  If we have $\Gamma \ts \sigma \sim \Delta$ and $\Gamma,\Delta \ts t : T$,
  then $\storeof{t} \reduces \storeof{t'}$
  implies that $\Gamma,\Delta \ts t' : T$.
\end{theorem}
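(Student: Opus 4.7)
The plan is to proceed by case analysis on the reduction rule used in $\storeof{t} \reduces \storeof{t'}$. Since reduction takes place at a redex inside an evaluation context, I first decompose $t = \evalof{r}$. By the evaluation context typing inversion lemma, there exists some $U$ with $\Gamma,\Delta \ts e : U => T$ and $\Gamma,\Delta \ts r : U$. The strategy is, in each case, to exhibit the contractum $r'$ and show $\Gamma,\Delta \ts r' : U$; then \lref{lm:ectx-replug} immediately yields $\Gamma,\Delta \ts \evalof{r'} : T$, which is the goal.

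For a term-application redex $x\,y$ where $\sigma(x) = \LAM{z:U'}t_0$, the typing of $r$ has the shape $\Gamma,\Delta \ts x : \CS{C}\ARR{z:U_1}T_1$ and $\Gamma,\Delta \ts y : U_1$ with $U = [z \mapsto y]T_1$. I apply \lref{lm:inv-tm-abs} to get $\Gamma,\Delta \ts U_1 \sub U'$ together with a body typing $\Gamma,\Delta,z:U_1 \ts t_0 : T_1$, then invoke \lref{lm:tm-subst-typ} to conclude $\Gamma,\Delta \ts [z \mapsto y]t_0 : [z \mapsto y]T_1 = U$. The type-application case is entirely analogous, using \lref{lm:inv-tp-abs} together with \lref{lm:tp-subst-typ}. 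The unbox-of-box case is a direct application of \lref{lm:inv-box}. If the reduction system additionally admits store-reshaping steps in which $t'$ itself extends or rearranges the enclosing $\sigma$ while preserving the $\storeof{\cdot}$ shape, I use the store context reification lemma (\lref{lm:tm-rectx-store}) and its corollary \lref{lm:sctx-replug}, along with permutation, to re-typecheck the rewritten term against the appropriately adjusted matching environment.

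The main obstacle is bookkeeping around the substitutions on the codomain. In particular, reconciling $[z \mapsto y]T_1$ (the type the \ruleref{app} rule attaches to $x\,y$) with $T_1$ (the type produced by the body judgment in \lref{lm:inv-tm-abs}) requires that the variable substituted in and the binder's original name line up; similarly for the type-abstraction case, where the substitution runs into types carrying capture sets. Once these substitutions are lined up, the remainder of the proof is mechanical: each case is just \emph{inversion into the store, substitute, repack into the evaluation context}, with the reification and replacement lemmas from the previous subsection doing the plumbing.
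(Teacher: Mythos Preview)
Your overall strategy matches the paper's: case-analyse the reduction rule, invert the typing of the redex, apply the appropriate substitution or lookup-inversion lemma, and repack using the evaluation-context replacement lemmas. The paper does exactly this, so the high-level plan is fine.

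There are two places where your write-up is thinner than it needs to be. First, you assert that ``the typing of $r$ has the shape \ldots{} with $U = [z \mapsto y]T_1$''. That equality is not immediate: the derivation of $\Gamma,\Delta \ts x\,y : U$ may end in \ruleref{sub}, in which case you only get $[z \mapsto y]T_1 \sub U$. The paper handles this by an explicit inner induction on the typing derivation of the plug, dispatching \ruleref{sub} by IH plus subsumption before reaching the \ruleref{app}/\ruleref{let} case. Your argument goes through once you insert this induction (or an equivalent inversion lemma for application typing), but as written the step is unjustified.

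Second, and more substantively, the \eruleref{lift} case does \emph{not} fit your ``show $r' : U$, then \lref{lm:ectx-replug}'' template: here $t = \evalof{\Let{x}{v}{t_0}}$ reduces to $t' = \Let{x}{v}{\evalof{t_0}}$, so $t'$ is not of the form $\evalof{r'}$ for the same $e$. The paper's argument instead observes that the plug $\Let{x}{v}{t_0}$ and its body $t_0$ receive the same type (since $x \notin \fv(T_0)$ by \ruleref{let}), replugs $e$ around $t_0$, \emph{weakens} the result to $\Gamma,\Delta,x:U$, and finally reapplies \ruleref{let} on the outside. Your sentence about ``store-reshaping steps \ldots{} using \lref{lm:sctx-replug} and permutation'' does not capture this; note in particular that $\sigma$ itself is unchanged in this step (the theorem is stated with the same $\sigma$ on both sides), and that weakening, not permutation, is the relevant structural lemma. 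You also do not mention \eruleref{rename}, though that case \emph{does} fit your template and is straightforward via \lref{lm:tm-subst-typ}.
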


\begin{proof}
  We proceed by inspecting the rule used to reduce $\storeof{\evalof{t}}$.

  \begin{proofcases}
    \pcase{\eruleref{apply}}
    Then we have $t = \evalof{x\,y}$
    and $\sigma(x) = \LAM{z: U}s$
    and $t' = \evalof{[z \mapsto y]s}$.

    By Lemma~\ref{lm:ectx-plug-typ}, $x\,y$ is well-typed in $\Gamma,\Delta$
    and we proceed by induction on its typing derivation.
    Only two rules could apply:
    \ruleref{sub} and \ruleref{app}.
    In the first case, we can conclude by IH and subsumption.
    In the second case, we have:
    $\Gamma,\Delta \ts x \typ \ARR{z: U_0}T_0$
    and $\Gamma,\Delta \ts y \typ U_0$.
    By \lref{lm:inv-tm-abs} and narrowing, we have
    $\Gamma,\Delta,z:U_0 \ts t_0 : T_0$.
    By \lref{lm:tm-subst-typ}, we can type the substituted $s$ as
    $\Gamma,\Delta \ts [z \mapsto y]s : [z \mapsto y]T'_0$.
    We can now conclude by restoring the contexts $\sigma$ and $e$
    with Lemmas \ref{lm:ectx-replug} and \ref{lm:tm-rectx-store}.

    \pcase{\eruleref{tapply}, \eruleref{unbox}} As above.

    \pcase{\eruleref{rename}}
    Then we have
    $t = \evalof{\Let{x}{y}{t_0}}$
    and $t' = \evalof{[x \mapsto y]t_0}$.

    As $\Gamma,\Delta \ts T_0 \wf$ and by Barendregt $x \not\in \dom \Gamma,\Delta$, we have $x \not\in \fv(T_0)$.
    We again proceed by induction on the typing derivation of the plug inside $t$, that is $\Let{x}{y}{t_0}$.
    Only two rules could apply:
    \ruleref{sub} and \ruleref{let}.
    In the first case, we can conclude by IH and subsumption.
    In the second case for some $U$ we have
    $\Gamma,\Delta,x:U \ts t_0 \typ T_0$
    and $\Gamma,\Delta \ts y \typ U$.
    Then by \lref{lm:tm-subst-typ},
    $\Gamma,\Delta \ts [x \mapsto y]t_0 \typ [x \mapsto y]T_0$.
    Since $x \not\in \fv(T_0)$
    we have $[x \mapsto y]T_0 = T_0$,
    and so by \lref{lm:ectx-replug} we also have
    $\Gamma,\Delta \ts \evalof{[x \mapsto y]t_0} \typ T$,
    which by \lref{lm:tm-rectx-store} concludes.

    \pcase{\eruleref{lift}}
    Then we have $t = \evalof{\Let{x: U}{v}{t_0}}$
    and $t' = \Let{x}{v}{\evalof{t_0}}$.

    Like in the previous case,
    we proceed by induction on the typing derivation of the plug inside $t$.
    The \ruleref{sub} is, again, trivial.
    The only other case is again \ruleref{let},
    where for some $U$ we have
    $\Gamma,\Delta,x:U \ts t_0 \typ T_0$.
    By \lref{lm:sctx-plug-typ}, we have
    $\Gamma,\Delta \ts \evalof{t} \typ T$.
    As $t$ and $t_0$ have same types in $\Gamma,\Delta$,
    we can replug $e$ (\lref{lm:ectx-replug})
    and weaken the typing context to obtain
    $\Gamma,\Delta,x:U \ts \evalof{t_0} \typ T$.
    From the Barendregt convention, we know that $x \not\in \fv(T)$
    and so by \ruleref{let} we have
    $\Gamma,\Delta \ts \Let{x}{v}{\evalof{t_0}} \typ T$.
    To conclude, we restore $\sigma$ with \lref{lm:tm-rectx-store}.
  \end{proofcases}
\end{proof}

\begin{definition}[Canonical store-plug split]
  We say that a term of the form $\storeof{t}$ is a {\it canonical split} (of
  the entire term into store context $\sigma$ and the plug $t$) if $t$ is not of
  the form $\Let{x: T}{v}{t'}$.
\end{definition}

\begin{fact}
  Every term has a unique canonical store-plug split, and finding it is decidable.
\end{fact}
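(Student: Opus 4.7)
The plan is to prove existence, uniqueness, and decidability simultaneously by exhibiting a terminating recursive procedure on terms and then arguing it is both correct and unique.

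First I would define, by structural recursion on a term $u$, a function $\mathsf{split}(u)$ returning a pair $(\sigma, t)$. If $u = \Let{x: T}{v}{u'}$ with $v$ a value, recurse on $u'$ to get $(\sigma', t')$ and return $(\Let{x: T}{v}{\sigma'},\ t')$; in every other case (including when $u$ is a let-binding whose bound expression is not a value), return $([\,],\ u)$. Because the recursion is on an immediate subterm, this procedure terminates on every finite term, and each step is trivially computable, yielding decidability. A straightforward induction on $u$ then shows that $\mathsf{split}(u) = (\sigma, t)$ satisfies $u = \storeof{t}$ and that $t$ does not start with a let-binding of a value, establishing existence of a canonical split.

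For uniqueness, I would proceed by induction on $\sigma_1$, assuming $\storeof[\sigma_1]{t_1} = \storeof[\sigma_2]{t_2}$ with both splits canonical. If $\sigma_1 = [\,]$, then $t_1$ is not of the form $\Let{x: T}{v}{\cdot}$ by canonicity; but if $\sigma_2$ were nonempty, then $t_1$ would syntactically begin with such a let-binding, a contradiction, so $\sigma_2 = [\,]$ and $t_1 = t_2$. If $\sigma_1 = \Let{x: T}{v}{\sigma_1'}$, then by the same reasoning $\sigma_2$ cannot be empty and must begin with the same outer binding $\Let{x: T}{v}{\cdot}$ (the binder, type, and value are determined by the common term's outermost syntactic form); stripping it off reduces to the inductive hypothesis on $\sigma_1'$.

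There is no real obstacle here; the only mild subtlety is to be careful in the uniqueness argument that the canonicity condition on $t_1$ (and $t_2$) is what rules out a nonempty $\sigma$ on the other side, which is exactly why the definition of canonical split is phrased as ``$t$ is not of the form $\Let{x: T}{v}{t'}$'' rather than simply ``$t$ is not a let.''
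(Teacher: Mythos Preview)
The paper does not prove this statement at all; it is stated as a bare \textbf{Fact} with no accompanying proof. Your proposal supplies exactly the routine argument one would expect: the structural-recursive $\mathsf{split}$ procedure gives existence and decidability, and the induction on $\sigma_1$ for uniqueness is correct, with the canonicity condition on the plug doing precisely the work you identify. So your proof is correct and simply fills in details the paper omits.
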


\begin{lemma}[Extraction of bound value]
  \label{lm:extract-val}
  If $\Gamma,\Delta \ts x \typ T$
  and $\ts \sigma \sim \Delta$
  and $x \in \dom(\Delta)$,
  then $\sigma(x) = v$.
\end{lemma}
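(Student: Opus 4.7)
The plan is to proceed by induction on the derivation of $\Gamma \ts \sigma \sim \Delta$, which amounts to a structural induction on $\sigma$. Only the two rules for the matching judgment can conclude it, giving exactly two cases to consider. Note that the statement is presumably $\Gamma \ts \sigma \sim \Delta$; the $\Gamma$ is implicit in the lemma as written.

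The base case is vacuous: the only axiom for matching forces $\sigma = [\,]$ and $\Delta = \cdot$, so $\dom(\Delta) = \emptyset$, which directly contradicts the hypothesis that $x \in \dom(\Delta)$.

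In the inductive case, the matching rule gives $\sigma = \Let{y}{v_0}{\sigma'}$ and $\Delta = y:T, \Delta'$ for some $y$, $v_0$, $T$, and $\Delta'$, with premise $\Gamma, y:T \ts \sigma' \sim \Delta'$. I then split on whether $x = y$. If $x = y$, the conclusion is immediate: by the definition of store lookup $\sigma(x) = v_0$, so we can take $v = v_0$. Otherwise $x \neq y$, hence $x \in \dom(\Delta')$, and the induction hypothesis applied to the matching premise yields some $v'$ with $\sigma'(x) = v'$; since lookup in $\sigma$ skips the non-matching outer binding of $y$, we also have $\sigma(x) = v'$.

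There is no real obstacle here: the matching judgment walks over exactly the same structure that the store-lookup function traverses, so the two stay in lockstep, and the only thing one really verifies is that every binding recorded in $\Delta$ corresponds to a let in $\sigma$. The proof should mirror the structural-induction pattern already used in \lref{lm:inv-var-lookup}, just without the inner induction on typing since no typing fact about the resulting $v$ is claimed.
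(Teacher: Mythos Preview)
Your proof is correct and follows essentially the same approach as the paper: structural induction on the store/matching environment, with the base case vacuous and the inductive case splitting on whether $x$ is the peeled binding. The only cosmetic difference is that the paper peels $\Delta$ from the \emph{inside} ($\Delta = \Delta',z:T'$, $\sigma = \sigma'[\Let{z}{v}{[\,]}]$), which keeps the ambient context fixed and so avoids your explicit generalization of the matching premise to arbitrary $\Gamma$; your outside-in peel is the one that matches the derivation structure directly, at the cost of that generalization. Either direction works and the case analysis is identical.
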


\begin{proof}
  By structural induction on $\Delta$.
  If $\Delta$ is empty, we have a contradiction.
  Otherwise, $\Delta = \Delta',z:T'$
  and $\sigma = \sigma'[\Let{z:T'}{v}{[\,]}]$
  and $\Gamma,\Delta',z:T' \ts v \typ T'$.
  Note that $\Delta$ is the environment matching $\sigma$
  and can only contain term bindings.
  If $z = x$, we can conclude immediately,
  and otherwise if $z \neq x$, we can conclude by IH.
\end{proof}

\begin{theorem}[Progress]
  \label{th:progress}
  If $\ts \storeof{\evalof{t}} \typ T$
  and $\storeof{\evalof{t}}$ is a canonical store-plug split,
  then either $\evalof{t} = a$,
  or there exists $\storeof{t'}$ such that $\storeof{\evalof{t}} \reduces \storeof{t'}$.
\end{theorem}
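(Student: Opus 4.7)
The plan is to case-analyze the outermost shape of $\evalof{t}$, walking down its evaluation-context spine to locate the appropriate redex. Since $\Gamma$ is empty here, \lref{lm:extract-val} guarantees every free variable of $\evalof{t}$ is bound to a value in $\sigma$, which lets the canonical-forms and lookup-inversion lemmas produce concrete witnesses for the non-let redex rules.

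For the base cases, a bare variable or value $\evalof{t}$ is already an answer. If $\evalof{t} = x\,y$, typing forces $x$ to have a function type, so \lref{lm:inv-tm-abs} yields $\sigma(x) = \LAM{z: U'}{s}$ and \eruleref{apply} fires. The type-application and unbox cases are analogous, extracting the needed type-abstraction or boxed value from $\sigma$ through \lref{lm:inv-tp-abs} and \lref{lm:inv-box}.

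The nontrivial case is when $\evalof{t}$ is of the form $\Let{x: U}{s}{t_0}$. Canonicity of the split rules out $s$ being a value outright, so either $s$ is a variable --- in which case \eruleref{rename} applies directly --- or we descend into $s$ by induction on its size. The induction exposes one of three possibilities: an immediate non-let redex, handled as above with the enclosing let-frame absorbed into the evaluation context; a nested variable under a let, giving \eruleref{rename} again; or a nested let-value $\Let{y: U'}{v}{t_1}$, which together with the surrounding let-frames forms exactly a \eruleref{lift} redex.

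I expect the main obstacle to be making the recursive descent precise --- a sub-instance of $s$ is not itself a canonical store-plug split, so a literal induction on \autoref{th:progress} does not apply. The cleanest fix is to phrase the descent as a one-shot structural search: starting from $\evalof{t}$, follow the let-spine along the bound-position side until we reach the first bound subterm that is not itself a let-expression, and then apply the redex-specific argument to that innermost subterm; canonicity of the top-level split excludes the only configuration (a top-level let-value) that would otherwise leave us stuck.
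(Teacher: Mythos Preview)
Your proposal is correct, but the paper resolves the very obstacle you flag by organising the induction differently. Instead of case-analysing the syntax of $\evalof{t}$ as a whole, the paper first uses \lref{lm:ectx-plug-typ} to obtain a typing $\Delta \ts t : U$ for the innermost plug $t$, and then proceeds by induction on \emph{that} typing derivation, treating the evaluation context $e$ as a free parameter. In the \ruleref{let} case ($t = \Let{x}{s}{t'}$), the IH is invoked on the subderivation typing $s$, with the enlarged context $e' = e[\Let{x}{[\,]}{t'}]$. Since $e'[s] = \evalof{t}$, the canonical-split hypothesis for the recursive call is literally the original one and never needs to be re-established; the \ruleref{sub} case is likewise absorbed by the IH. This makes your ``one-shot structural search'' down the let-spine unnecessary: the descent is just the ordinary induction on a subderivation, and canonicity is preserved automatically because the overall term $\evalof{t}$ never changes.

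What your route buys is a syntax-first argument that does not lean on the $e$/$t$ decomposition already present in the statement; what the paper's route buys is that the recursion is a bona fide induction hypothesis rather than an ad-hoc descent, with the invariant maintained for free. Both approaches converge on the same leaf-case analysis (\eruleref{apply}, \eruleref{tapply}, \eruleref{open}, \eruleref{rename}, \eruleref{lift}) via \lref{lm:extract-val} and the canonical-forms lemmas, so the substantive content is the same.
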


\begin{proof}
  Since $\storeof{\evalof{t}}$ is well-typed in the empty environment, there clearly must be some $\Delta$ such that $\ts \sigma \sim \Delta$ and $\Delta \ts \evalof{t} : T$.
  By Lemma~\ref{lm:ectx-plug-typ}, we have that $\Delta \ts t \typ U$ for some $U$.
  We proceed by induction on the derivation of this judgement.

  \begin{proofcases}
    \pcase{\ruleref{var}}
    Then $t = x$.\\
    \indent If $e$ is non-empty, $\evalof{x} = e'[\ \Let{y: T}{x}{t}\ ]$
    and we can step by \eruleref{rename};
    otherwise, immediate.

    \pcase{\ruleref{abs}, \ruleref{tabs}, \ruleref{box}}
    Then $t = v$.\\
    \indent If $e$ is non-empty, $\evalof{v} = e'[\ \Let{x: T}{v}{t}\ ]$
    and we can step by \eruleref{lift};
    otherwise, immediate.

    \pcase{\ruleref{app}}
    Then $t = x\;y$
    and $x : \CS{C}\ARR{z : U_0}T_0 \in \Delta$
    and $\Delta \ts y \typ U_0$.\\
    \indent By Lemmas \ref{lm:extract-val} and \ref{lm:canon-abs},
    $\sigma(x) = \LAM{z : U_1}t'$,
    which means we can step by \eruleref{apply}.

    \pcase{\ruleref{tapp}}
    Then $t = x\;[R]$
    and $x : \CS{C}\TARR{Z <: R}T_0 \in \Delta$.\\
    \indent By Lemmas \ref{lm:extract-val} and \ref{lm:canon-tabs},
    $\sigma(x) = \tlam{z}{R}{t'}$,
    which means we can step by \eruleref{tapply}.

    \pcase{\ruleref{unbox}}
    Then $t = \Unboxed{C}{x}$
    and $x : \CS{C}\Boxed{C\;R} \in \Delta$.\\
    \indent By Lemmas \ref{lm:extract-val} and \ref{lm:canon-box},
    $\sigma(x) = \Boxed{y}$,
    which means we can step by \eruleref{open}.

    \pcase{\ruleref{let}}
    Then $t = \Let{x : T}{s}{t'}$
    and we proceed by IH on $s$,
    with $e[\ \Let{x: T}{[\,]}{t'}\ ]$ as the evaluation context.

    \pcase{\ruleref{sub}} By IH.
  \end{proofcases}
\end{proof}

\subsubsection{Consequences}

\begin{lemma}[Capture prediction for answers]
  \label{lm:answer-cs}
  If $\Gamma \ts \sigma[a] \typ \CS{C}R$,
  then $\Gamma \ts \sigma[a] \typ \CS{\cv(\sigma[a])}R$
  and $\Gamma \ts \cv(\sigma[a]) \sub C$.
\end{lemma}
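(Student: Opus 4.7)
The plan is to proceed by structural induction on $\sigma$, peeling off let-bindings one at a time and relying on inversion of the typing derivation to extract the precise capture set information at each layer.

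For the base case $\sigma = [\,]$ we have $\sigma[a] = a$, which is either a variable or a value. If $a = x$, the only rules that can appear in its typing derivation are \ruleref{var} and \ruleref{sub}; the unique use of \ruleref{var} at the leaf assigns $x$ the type $\CS{\{x\}}R'$, and any chain of \ruleref{sub} above composes via transitivity (Fact~\ref{fact:trans}) into $\Gamma \ts \CS{\{x\}}R' \sub \CS{C}R$, from which \lref{lm:sub-is-sc} yields $\Gamma \ts \{x\} \sub C$. Since $\cv(x) = \{x\}$, both conclusions follow. If $a$ is a value, the leaf rule is one of \ruleref{abs}, \ruleref{tabs}, or \ruleref{box}, each of which assigns a type whose capture set is exactly $\cv(a)$; the subcapturing $\cv(a) \sub C$ is then obtained from the enclosing \ruleref{sub} steps via \lref{lm:sub-is-sc}.

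For the inductive case $\sigma = \Let{x}{v}{\sigma'}$, I would invert the typing of $\Let{x}{v}{\sigma'[a]}$. After absorbing any trailing \ruleref{sub} into the conclusion, the derivation ends in \ruleref{let}, yielding $\Gamma \ts v : U$ and $\Gamma, x{:}U \ts \sigma'[a] : \CS{C_0}R$ with $\Gamma \ts C_0 \sub C$ and $x \not\in \fv(\CS{C}R)$. Applying the IH in the context $\Gamma, x{:}U$ gives $\Gamma, x{:}U \ts \sigma'[a] : \CS{\cv(\sigma'[a])}R$ together with $\Gamma, x{:}U \ts \cv(\sigma'[a]) \sub C_0$. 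Re-applying \ruleref{let} then delivers $\Gamma \ts \sigma[a] : \CS{\cv(\sigma[a])}R$.

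The principal obstacle is transferring the subcapturing $\cv(\sigma'[a]) \sub C_0$ from the inner context back out to $\Gamma$: inside the let, $\cv(\sigma'[a])$ may mention $x$, while $\cv(\sigma[a])$ is computed in $\Gamma$ and has $x$'s occurrences replaced by $\cv(v)$. Invoking the base case on the value $v$ itself provides $\Gamma \ts v : \CS{\cv(v)}R_U$ with $\Gamma \ts \cv(v) \sub \cv(U)$, which is exactly the premise required by \lref{lm:tm-subst-sc}; substituting $x \mapsto \cv(v)$ transports the subcapturing into $\Gamma$, and combined with \lref{lm:sc-trans} against $\Gamma \ts C_0 \sub C$ (using $x \not\in \fv(C)$ to see that the substitution leaves $C$ unchanged) this closes the goal.
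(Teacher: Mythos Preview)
Your argument is close to the paper's but with a different choice of induction: you induct on the structure of $\sigma$ and peel off \ruleref{sub} by inversion at each layer, whereas the paper inducts directly on the typing derivation, which lets \ruleref{sub} become its own case discharged uniformly by IH together with \lref{lm:sub-inv-capt} and \lref{lm:sc-trans}. Both decompositions are viable, and you are in fact more explicit than the paper in one place: you spell out that the hypothesis $\Gamma \ts \cv(v) \sub \cv(U)$ required by \lref{lm:tm-subst-sc} comes from applying the base case to the bound value $v$, something the paper leaves implicit.

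There is, however, a real gap in your inductive step for the \emph{first} conclusion. From the IH you have $\Gamma, x{:}U \ts \sigma'[a] : \CS{\cv(\sigma'[a])}R$, and you assert that ``re-applying \ruleref{let} then delivers $\Gamma \ts \sigma[a] : \CS{\cv(\sigma[a])}R$.'' But \ruleref{let} requires the bound variable to be absent from the result type, and in general $x \in \cv(\sigma'[a])$; even ignoring the side condition, \ruleref{let} would hand you capture set $\cv(\sigma'[a])$, not $\cv(\sigma[a]) = [x \mapsto \cv(v)]\,\cv(\sigma'[a])$. To close this you must first widen the body's capture set to $\cv(\sigma[a])$ via \ruleref{sub}, which reduces (through \ruleref{sc-var}) to showing $\Gamma, x{:}U \ts \cv(U) \sub \cv(v)$. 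That need not hold for the $U$ obtained by inversion, since the derivation may have subsumed $v$ to a type with a strictly larger capture set. The repair is to first narrow the binding of $x$ to the sharper type $\CS{\cv(v)}R_U$ that your own base-case argument on $v$ already supplies; after narrowing, the widening and then \ruleref{let} go through. (The paper's \ruleref{let} case is in fact silent on the first conclusion altogether, so the omission is shared; but the explicit step you wrote does not go through as stated.)
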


\begin{proof}
  By induction on the typing derivation.

  \begin{proofcases}
    \pcase{\ruleref{sub}}
    Then $\Gamma \ts a \typ \CS{C'}R'$
    and $\Gamma \ts \CS{C'}R' \sub \CS{C}R$.
    By IH, $\Gamma \ts \sigma[a] \typ \CS{\cv(\sigma[a])}R'$
    and $\Gamma \ts \cv(\sigma[a]) \sub C'$.
    By \lref{lm:sub-inv-capt},
    we have that $\Gamma \ts C' \sub C$ and
    $\Gamma \ts R' \sub R$.

    To conclude we need
    $\Gamma \ts \sigma[a] \typ \CS{\cv(\sigma[a])}R$
    and $\Gamma \ts \cv(\sigma[a]) \sub C$,
    which we respectively have
    by subsumption and \lref{lm:sc-trans}.

    \pcase{\ruleref{var}, \ruleref{abs}, \ruleref{tabs}, \ruleref{box}}
    Then $\sigma$ is empty and $C = \cv(a)$.
    One goal is immediate,
    other follows from \lref{lm:sc-refl}.

    \pcase{\ruleref{let}}
    Then $\sigma = \Let{x}{v}{\sigma'}$
    and $\Gamma,x:U \ts \sigma'[a] \typ \CS{C}R$
    and $x \not\in C$.

    By IH,
    $\Gamma,x:U \ts \sigma'[a] \sub \CS{\cv(\sigma'[a])}R$
    and $\Gamma,x:U \ts \cv(\sigma'[a]) \sub C$.

    By \lref{lm:tm-subst-sc}, we have
    $\Gamma \ts [x \mapsto \cv(v)](\cv(\sigma'[a])) \sub [x \mapsto \cv(v)]C$.

    By definition,
    $[x \mapsto \cv(v)](\cv(\sigma'[a])) = \cv(\Let{x}{v}{\sigma'[a]})$,
    and we also already know that $x \not\in C$.

    This lets us conclude, as we have
    $\Gamma \ts \cv(\Let{x}{v}{\sigma'[a]}) \sub C$.

    \noindent Other rules cannot occur.
  \end{proofcases}
\end{proof}

\begin{lemma}[Capture prediction for terms]
  Let $\ts \sigma \sim \Delta$ and $\Delta \ts t \typ \CS{C}R$.\\
  \quad Then $\sigma[t] -->^{*} \sigma[\sigma'[a]]$
  implies that $\Delta \ts \cv(\sigma'[a]) \sub C$.
\end{lemma}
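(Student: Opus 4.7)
The plan is to reduce this statement to the already-established capture prediction for answers (\lref{lm:answer-cs}) by first pushing the typing through the entire reduction sequence using Preservation (Theorem~\ref{th:preservation}).

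First, I would iterate Preservation on the sequence $\sigma[t] \reduces^* \sigma[\sigma'[a]]$. In any single step, every reduction rule preserves the outer $\sigma$: \eruleref{apply}, \eruleref{tapply}, \eruleref{unbox}, and \eruleref{rename} all fire strictly inside the inner term, while \eruleref{lift} merely migrates a let-binding to the top of the inner term, never into $\sigma$. So each step has the form $\sigma[t_i] \reduces \sigma[t_{i+1}]$, and combined with $\ts \sigma \sim \Delta$ and the initial typing $\Delta \ts t \typ \CS{C}R$, one application of Preservation yields $\Delta \ts t_{i+1} \typ \CS{C}R$. A straightforward induction on the length of the reduction sequence then gives $\Delta \ts \sigma'[a] \typ \CS{C}R$.

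Second, I would instantiate \lref{lm:answer-cs} with $\Gamma := \Delta$ on the typing $\Delta \ts \sigma'[a] \typ \CS{C}R$. It immediately yields $\Delta \ts \cv(\sigma'[a]) \sub C$, which is the stated goal. The only mild subtlety in the whole argument is justifying that the outer $\sigma$ really is preserved by each reduction step -- the invariant that reductions of $\sigma[t]$ are always of the form $\sigma[t']$ for some $t'$ -- but as sketched above this is by direct inspection of the reduction rules. Apart from that bookkeeping, the proof is a direct composition of Preservation with the answer-capture lemma and requires no new techniques.
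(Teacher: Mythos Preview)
Your proposal is correct and follows essentially the same approach as the paper's proof: iterate Preservation along the reduction sequence to obtain $\Delta \ts \sigma'[a] \typ \CS{C}R$, then invoke \lref{lm:answer-cs}. Your extra paragraph about the outer $\sigma$ being preserved across each step is sound bookkeeping that the paper leaves implicit (it is baked into how Preservation is stated, as $\storeof{t} \reduces \storeof{t'}$ with the same $\sigma$ on both sides).
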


\begin{proof}
  By preservation, $\ts \sigma'[a] \typ \CS{C}R$, which lets us conclude by \lref{lm:answer-cs}.
\end{proof}

\subsection{Correctness of boxing}

\newcommand\Resolve{\mathrm{resolve}}
\newcommand\Resolver{\mathrm{resolver}}

\subsubsection{Relating cv and stores}
We want to relate the $\cv$ of a term of the form $\sigma[t]$ with $\cv(t)$.
We are after a definition like this:
$$
  \cv(\sigma[t]) = \Resolve(\sigma, \cv(t))
$$
\noindent
Let us consider term of the form $\sigma[t]$ and a store of the form $\kw{let} x = v \kw{in} \sigma'$.
There are two rules that could be used to calculate $\cv(\kw{let} x = v \kw{in} \sigma')$:
\begin{align*}
  \cv(\Let{x}{v}{t}) &= \cv(t) \gap\gap\gap \kw{if} x \notin \cv(t) \\
  \cv(\Let{x}{s}{t}) &= \cv(s) \cup \cv(t) \backslash x
\end{align*}
\noindent
Observe that since we know that $x$ is bound to a value, we can reformulate these rules as:
\begin{align*}
  \cv(\Let{x}{v}{t}) = [x |-> \cv(v)]\cv(t)
\end{align*}
\noindent
Which means that we should be able to define $\Resolve$ with a substitution.
We will call this substitition a \emph{store resolver}, and we define it as:
\begin{align*}
  \Resolver(\kw{let} x = v \kw{in} \sigma) &= [x |-> \cv(v)] \circ \Resolver(\sigma) \\
  \Resolver([]) &= id
\end{align*}
\noindent
Importantly, note that we use \emph{composition} of substitutions. We have:
$$
\Resolve(\kw{let} x = a \kw{in} \kw{let} y= x \kw{in} []) \equiv [x |-> \{a\}, y |-> \{a\}]
$$

\noindent
With the above, we define $\Resolve$ as:
$$
  \Resolve(\sigma, C) = \Resolver(\sigma)(C)
$$
This definition satisfies our original desired equality with $\cv$.

\subsubsection{Relating cv and evaluation contexts}
Now to connect with evaluation contexts. We can extend $\cv$ to evaluation contexts like this:
\begin{align*}
  \cv(\kw{let} x = e \kw{in} t) &= \cv(e) \cup \cv(t)\\
  \cv([]) &= \{\}
\end{align*}

Now, the problem is that we can have ``degenerate'' evaluation contexts syntactically.
For instance we can split a term like $\kw{let} x = v \kw{in} z$ into a form like $(\kw{let} x = [] \kw{in} z)[v]$, even though this term is in normal form.
The problem with such a term is that $\cv$ may or may not count $\cv(v)$ as the $\cv$ of the overall term, which makes the relationship too complicated.
Instead, we will depend on reduction, like this:

\begin{fact}
  Let $\storeof{\evalof{t}}$ be a term that can reduce.
  Then: $$\cv(\storeof{\evalof{t}}) = \mathrm{resolve}(\sigma, \cv(e) \cup \cv(t))$$
\end{fact}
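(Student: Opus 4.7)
The plan is to peel off the store $\sigma$ by an outer induction and then reduce to an auxiliary claim stating that $\cv(\evalof{t}) = \cv(e) \cup \cv(t)$ whenever $\evalof{t}$ is reducible. Both inductions are structural; the real work is isolating the right reducibility invariant to feed them.

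For the outer induction on $\sigma$, the base case $\sigma = [\,]$ is immediate since $\Resolver([\,]) = \mathrm{id}$, so $\Resolve([\,], C) = C$ and the fact collapses to the auxiliary claim. In the step case $\sigma = \Let{x}{v}{\sigma'}$, I would apply the reformulated equation $\cv(\Let{x}{v}{s}) = [x \mapsto \cv(v)]\cv(s)$ derived in the preceding discussion to the term $\sigma[\evalof{t}] = \Let{x}{v}{\sigma'[\evalof{t}]}$, invoke the inductive hypothesis on $\sigma'[\evalof{t}]$, and recognise the result as $\Resolver(\sigma)$ applied to $\cv(e) \cup \cv(t)$ by unfolding $\Resolver(\Let{x}{v}{\sigma'}) = [x \mapsto \cv(v)] \circ \Resolver(\sigma')$.

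The auxiliary claim I would prove by induction on $e$. The empty-context case is trivial. For $e = \Let{y: U}{e'}{t'}$, the plug $e'[t]$ is not a value --- this is where reducibility enters --- so the non-value clause of $\cv$ on let-expressions applies, and the inductive hypothesis on $e'$ combined with the definition of $\cv$ on evaluation contexts closes the case modulo the usual Barendregt-style bookkeeping on the bound $y$.

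The main obstacle is precisely the degeneracy flagged in the preamble: a syntactic split such as $(\Let{y}{[\,]}{z})[v]$ has $\cv$ disagreeing with $\cv(e) \cup \cv(v)$, since the value-let rule would drop $\cv(v)$ entirely. What saves us is that we are dealing with a reducible term, so the canonical split places $t$ at a genuine redex and never at an interior value. I would therefore package this observation as a small preliminary lemma --- ``if $\evalof{t}$ is reducible, then at no nested let of $e$ does a value sit at the hole-side'' --- before launching into the two structural inductions, after which every case is mechanical.
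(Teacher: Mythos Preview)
The paper leaves this statement as a bare \texttt{Fact} with no proof, so there is no argument to compare against; your two-layer plan (induction on $\sigma$ using the reformulated value-let equation, then induction on $e$) is exactly what the paper's surrounding discussion sets up, and the outer layer goes through verbatim.

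The genuine gap is in the inner auxiliary claim $\cv(\evalof{t}) = \cv(e) \cup \cv(t)$. In the step $e = \Let{y}{e'}{t'}$ the term-level rule yields
\[
  \cv(\evalof{t}) \;=\; \cv(e'[t]) \cup \bigl(\cv(t') \setminus \{y\}\bigr),
\]
whereas the paper's extension of $\cv$ to evaluation contexts sets $\cv(e) = \cv(e') \cup \cv(t')$ \emph{without} removing $y$. If $y \in \cv(t')$ --- entirely ordinary, since $t'$ is the body of the let and $y$ is in scope there --- then $y$ lies in $\cv(e) \cup \cv(t)$ but not in $\cv(\evalof{t})$; by Barendregt $y$ occurs neither in $\cv(e')$ nor in $\cv(t)$, so nothing else reinstates it. This is not ``Barendregt-style bookkeeping'': the convention only ensures $y$ is distinct from other binders and from ambient free variables, not that $y$ is absent from its own body. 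As literally stated, the equality fails at this step, and your induction does not close. The fix is either to read the context-level $\cv$ with the missing $\setminus\{y\}$ (after which your argument is fine), or to weaken the Fact to the inclusion $\cv(\storeof{\evalof{t}}) \subseteq \Resolve(\sigma,\cv(e)\cup\cv(t))$, which is all the later lemmas actually consume; but you should say which of these you are doing rather than waving it through.
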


\subsubsection{Correctness of cv}

\begin{definition}[Platform environment]\label{def:platform-env}\forcenl
  \indent
  $\Gamma$ is a platform environment if for all $x \in \dom \Gamma$ we have $x : \UC\,S \in \Gamma$ for some $S$.
\end{definition}

\begin{lemma}[Inversion of subcapturing under platform environment]\label{lm:platform-sc-inversion}\forcenl
  \indent
  If $\Gamma$ is a platform environment and $\Gamma |- C <: D$, then either $C \subseteq D$ or $\rcap \in D$.
\end{lemma}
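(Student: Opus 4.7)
The plan is to prove the lemma by induction on the derivation of $\Gamma \vdash C <: D$, following the same case split that appears in the proof of \lref{lm:star-sc-inversion} (the three rules \ruleref{sc-elem}, \ruleref{sc-var}, and \ruleref{sc-set}).

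The \ruleref{sc-elem} case is immediate: we get $C = \{x\}$ with $x \in D$, so $C \subseteq D$ directly. The \ruleref{sc-set} case is a routine combination of induction hypotheses: for each singleton $\{y\} \subseteq C$ the IH gives either $\{y\} \subseteq D$ or $\rcap \in D$, and if even one $y$ produces the second disjunct we are done, otherwise every $y \in C$ lies in $D$ and we conclude $C \subseteq D$.

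The interesting case, and the only place where the platform-environment hypothesis is used, is \ruleref{sc-var}. Here $C = \{x\}$ and there is some binding $x : C'\,R \in \Gamma$ with $\Gamma \vdash C' <: D$. Since $\Gamma$ is a platform environment, Definition~\ref{def:platform-env} forces $C' = \UC$, so in particular $\rcap \in C'$. I would then invoke \lref{lm:star-sc-inversion} on the premise $\Gamma \vdash C' <: D$ to conclude $\rcap \in D$, which is the second disjunct of the goal.

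The main obstacle, if any, is just making sure the \ruleref{sc-var} step is valid, i.e. that $\rcap \in \UC$ so that \lref{lm:star-sc-inversion} actually applies; this is immediate from the definition of $\UC$ used earlier (compare the base case of \lref{lm:top-cs}, which treats $\rcap \in \UC$ as a defining property). Everything else is structural bookkeeping on the subcapturing derivation.
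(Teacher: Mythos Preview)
Your proposal is correct and follows essentially the same approach as the paper: induction on the subcapturing derivation, with the platform hypothesis used only in the \ruleref{sc-var} case to force $C' = \UC$. The only cosmetic difference is that in the \ruleref{sc-var} case you appeal to \lref{lm:star-sc-inversion} to obtain $\rcap \in D$, whereas the paper inlines that argument (observing that $\rcap \notin \dom\Gamma$ so the premise $\Gamma \vdash \UC <: D$ must put $\rcap$ into $D$); these are the same reasoning, and your version is arguably cleaner.
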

\begin{proof}
  By induction on the subcapturing relation.
  Case \ruleref{sc-elem} trivially holds.
  Case \ruleref{sc-set} holds by repeated IH.
  In case \ruleref{sc-var}, we have $C = \{x\}$ and $x : C'\,S \in \Gamma$.
  Since $\Gamma$ is a platform environment, we have $C' = \UC$, which means that the other premise of \ruleref{sc-var} is $\Gamma |- \UC <: D$.
  Since $\Gamma$ is well-formed, $\rcap \not\in \dom \Gamma$, which means that we must have $\rcap \in D$.
\end{proof}

\begin{lemma}[Strengthening of subcapturing]
  If $\Gamma,\Gamma' |- C <: D$ and $C \subseteq \dom \Gamma$, then we must have $\Gamma |- C <: D$.
\end{lemma}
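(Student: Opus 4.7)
The plan is to proceed by induction on the derivation of $\Gamma,\Gamma' \ts C \sub D$, exactly mirroring the structure of the earlier subcapturing lemmas like \lref{lm:sc-trans} and \lref{lm:platform-sc-inversion}. The key observation is that none of the three subcapturing rules ever uses an ``absent'' part of the context on the right-hand side: \ruleref{sc-elem} and \ruleref{sc-set} refer only to members of $C$, while \ruleref{sc-var} only looks up the binding of an element of $C$. So whenever every element of $C$ is already in $\dom \Gamma$, the entire derivation can be ``replayed'' in $\Gamma$ alone.

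Concretely, for \ruleref{sc-elem} we have $C = \{x\}$ with $x \in D$ and the conclusion $\Gamma \ts \{x\} \sub D$ follows by the same rule. For \ruleref{sc-set} we have $C = \{x_1,\dots,x_n\}$ with a sub-derivation $\Gamma,\Gamma' \ts \{x_i\} \sub D$ for each $i$; since $C \subseteq \dom \Gamma$, each singleton $\{x_i\}$ is also contained in $\dom \Gamma$, so by IH each $\Gamma \ts \{x_i\} \sub D$ holds and we reassemble by \ruleref{sc-set}.

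The only case that requires any care is \ruleref{sc-var}. Here $C = \{x\}$ with $x : C'\,R \in \Gamma,\Gamma'$ and a sub-derivation $\Gamma,\Gamma' \ts C' \sub D$. Since $x \in \dom \Gamma$ and environment bindings are unique by our global convention, the binding $x : C'\,R$ must in fact live in $\Gamma$. To apply the IH to the premise $\Gamma,\Gamma' \ts C' \sub D$, we need $C' \subseteq \dom \Gamma$, and this is exactly where the well-formedness assumption on $\Gamma,\Gamma'$ is used: $C'$ is well-formed in the prefix of $\Gamma,\Gamma'$ preceding $x$, which is itself a prefix of $\Gamma$, so every element of $C'$ is bound in $\Gamma$. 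With the IH, we get $\Gamma \ts C' \sub D$ and conclude by \ruleref{sc-var} using the relocated binding.

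I expect the only genuine subtlety to be this last argument about $C'$ being contained in $\dom \Gamma$; the remaining cases are mechanical. No inversion lemmas on $D$ are needed, since the rules impose no well-formedness requirement on the right-hand side of subcapturing.
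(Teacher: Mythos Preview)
Your argument has a gap in the \ruleref{sc-var} case. You write that well-formedness of $C'$ in the prefix of $\Gamma$ preceding $x$ yields $C' \subseteq \dom\Gamma$, but this inference fails when $\rcap \in C'$: the root capability $\rcap$ is a legitimate member of a well-formed capture set without being bound in any environment (cf.\ the proof of \lref{lm:top-cs}). In that situation $C' \not\subseteq \dom\Gamma$, so the induction hypothesis does not apply to the premise $\Gamma,\Gamma' \ts C' \sub D$, and the case is not closed.

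The paper handles this by first doing a case split on whether $\rcap \in D$. If $\rcap \in D$, the goal $\Gamma \ts C \sub D$ is immediate (via Corollary~\ref{lm:effectitop-cs}). The induction is then carried out under the standing assumption $\rcap \notin D$; in the \ruleref{sc-var} case one invokes \lref{lm:star-sc-inversion} on the premise $\Gamma,\Gamma' \ts C' \sub D$ to obtain $\rcap \notin C'$, which together with well-formedness of $C'$ in a prefix of $\Gamma$ gives the needed $C' \subseteq \dom\Gamma$. Your proof can be repaired along exactly these lines, but as written the step ``every element of $C'$ is bound in $\Gamma$'' is unjustified.
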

\begin{proof}
  First, we consider that if $\rcap \in D$, we trivially have the desired goal.
  If $\rcap \not\in D$, we proceed by induction on the subcapturing relation.
  Case \ruleref{sc-elem} trivially holds and case \ruleref{sc-set} holds by repeated IH.

  In case \ruleref{sc-var}, we have $C = \{x\}$, $x : C'\,S \in \Gamma,\Gamma'$.
  This implies that $\Gamma = \Gamma_1,x:C'\,S,\Gamma_2$ (as $x \not\in \dom \Gamma$).
  Since $\Gamma,\Gamma'$ is well-formed, we must have $\Gamma_1 |- C' \wf$.
  Since we already know $\rcap \not\in D$, then we must also have $\rcap \not\in C'$,
  which then leads to $C' \subseteq \dom \Gamma_1$.
  This in turn means that by IH and weakening we have $\Gamma |- C' <: D$,
  and since we also have $x : C'\,S \in \Gamma$, we can conclude by \ruleref{sc-var}.
\end{proof}

Then we will need to connect it to subcapturing, because the keys used to open boxes are supercaptures of the capability inside the box. We want:

\begin{lemma}
  \label{lm:resolve-sc-relation}
  Let $\Gamma$ be a platform environment, $\Gamma |- \sigma \sim \Delta$ and $\Gamma,\Delta |- C_1 <: C_2$.
  Then $\Resolve(\sigma, C_1) \subseteq \Resolve(\sigma, C_2)$.
\end{lemma}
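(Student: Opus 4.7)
The plan is to eliminate the $\Delta$ part of the typing context by iterated application of \lref{lm:tm-subst-sc}, reducing the subcapturing judgement to one under just the platform environment $\Gamma$; at that point, \lref{lm:platform-sc-inversion} directly yields set inclusion.

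Concretely, I would write $\sigma = \Let{x_1}{v_1}{\Let{x_2}{v_2}{\cdots \Let{x_n}{v_n}{[\,]}}}$ so that $\Delta = x_1{:}T_1,\ldots,x_n{:}T_n$. Unfolding the matching relation yields, for each $i$, the value typing $\Gamma,x_1{:}T_1,\ldots,x_{i-1}{:}T_{i-1} \vdash v_i : T_i$, and hence $\Gamma,x_1{:}T_1,\ldots,x_{i-1}{:}T_{i-1} \vdash \cv(v_i) <: \cv(T_i)$ by \lref{lm:sub-is-sc} --- which is exactly the side condition of \lref{lm:tm-subst-sc}. I would apply that lemma $n$ times, eliminating the rightmost binding each time (starting from $x_n{:}T_n$ and peeling leftwards), with substitution $[x_i \mapsto \cv(v_i)]$ at the $i$-th step. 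Unfolding the recursive definition of $\Resolver$, the composition of these substitutions is exactly $\Resolver(\sigma)$, and we arrive at $\Gamma \vdash \Resolver(\sigma)(C_1) <: \Resolver(\sigma)(C_2)$.

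Since $\Gamma$ is a platform environment, \lref{lm:platform-sc-inversion} applied to this reduced judgement gives either the desired $\Resolver(\sigma)(C_1) \subseteq \Resolver(\sigma)(C_2)$ directly, or else $\rcap \in \Resolver(\sigma)(C_2)$. The latter case I would handle by observing that $\Resolver$ neither introduces nor erases $\rcap$ (it only substitutes ordinary variables), and that capture sets containing $\rcap$ are absorbing in the relevant sense, so the desired inclusion holds automatically up to the top element.

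The main obstacle I expect is the bookkeeping around the iterated substitution: at each step one has to verify that the side condition of \lref{lm:tm-subst-sc} still holds in the shrinking context (which amounts to checking that $\cv(v_i)$ lives in $\dom(\Gamma,x_1{:}T_1,\ldots,x_{i-1}{:}T_{i-1})$ and is subcaptured by $\cv(T_i)$ there, obtained from the matching-relation premise), and that the composed substitution coincides definitionally with $\Resolver(\sigma)$ rather than some permuted variant. The $\rcap$ corner from platform inversion is a secondary concern handled by the absorbing convention.
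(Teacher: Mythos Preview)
Your approach is correct and genuinely different from the paper's. The paper proceeds by a nested induction: an outer induction on $\sigma$ and, in the inductive step, an inner induction on the subcapturing derivation. In the \ruleref{sc-var} case it invokes a separate strengthening-of-subcapturing lemma to drop the last binding from the context, then appeals to the outer IH on the shorter store $\sigma'$, and separately uses \lref{lm:inv-var-lookup} together with \lref{lm:answer-cs} to relate $\cv(v)$ to the bound variable's declared capture set. Your route is more modular: you treat $\Resolver(\sigma)$ as literally the composition of the single-variable substitutions and discharge the whole thing by iterating \lref{lm:tm-subst-sc}, arriving at a subcapturing judgement under $\Gamma$ alone, where \lref{lm:platform-sc-inversion} finishes. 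This reuses existing infrastructure cleanly and avoids the nested induction and the ad hoc strengthening lemma; the paper's argument, by contrast, keeps the proof self-contained at the cost of reproving substitution-like facts inline.

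Two small remarks. First, the side condition you need at each peeling step, namely $\Gamma,x_1{:}T_1,\ldots,x_{i-1}{:}T_{i-1} \vdash \cv(v_i) <: \cv(T_i)$, does not follow from \lref{lm:sub-is-sc} (that lemma concerns sub\emph{typing}, not typing); the correct reference is \lref{lm:answer-cs} applied to the value $v_i$ with an empty store. Second, your treatment of the $\rcap$ branch of \lref{lm:platform-sc-inversion} is no more and no less rigorous than the paper's own base case, which also invokes that lemma without spelling out why the $\rcap$-disjunct suffices for the stated set inclusion; you are on equal footing there.
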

\begin{proof}
  By induction on $\sigma$.
  If $\sigma$ is empty, we have $\Resolve(\sigma, C_1) = C_1$, likewise for $C_2$, and we can conclude by Lemma~\ref{lm:platform-sc-inversion}.

  Otherwise, $\sigma = \sigma'[\kw{let} x = v \kw{in}]$ and $\Delta = \Delta',x : D_x\,S_x$ for some $S_x$.
  Let $\theta = \Resolver(\sigma)$.
  We proceed by induction on the subcapturing derivation.
  Case \ruleref{sc-elem} trivially holds and case \ruleref{sc-set} holds by repeated IH.

  In case \ruleref{sc-var}, we have $C_1 = \{y\}$ and $y : D_y\,S_y \in \Gamma,\Delta$ for some $S_y$, and $\Gamma,\Delta |- D_y <: C_2$.
  We must have $\Gamma,\Delta' |- D_y \wf$ and so we can strengthen subcapturing to $\Gamma,\Delta' |- D_y <: C_2$,
  which by IH gives us $\Resolver(\sigma')(D_y) \subseteq \Resolver(\sigma')(C_2)$.
  By definition we have $\theta = \Resolver(\sigma) = \Resolver(\sigma') \circ [x |-> \cv(v)]$.
  Since by well-formedness $x \not\in D_y$, we now have:
  $$
  \theta D_y \subseteq \theta C_2
  $$
  \noindent
  By Lemma~\ref{lm:inv-var-lookup} and Lemma~\ref{lm:answer-cs}, we must have $\Gamma,\Delta |- \cv(v) <: D_y$.
  Since $\Gamma,\Delta |- \cv(v) \wf$, we can strengthen this to $\Gamma,\Delta |- \cv(v) <: D_y$.
  By outer IH this gives us $\Resolver(\sigma')(\cv(v)) \subseteq \Resolver(\sigma')(D_y)$.
  Since $x \notin \cv(v) \cup D_y$, we have:
  $$
  \theta\cv(v) \subseteq \theta D_y
  $$
  \noindent
  Which means we have $\theta\cv(v) \subseteq \theta C_2$ and we can conclude by $\theta\cv(v) = \theta\{x\}$, since:
  \begin{gather*}
    \theta\{x\} = (\Resolver(\sigma') \circ [x |-> \cv(v)])(\{x\}) = \Resolver(\sigma')(\cv(v)) \\
    \theta\cv(v) = \Resolver(\sigma')(\cv(v)) \quad(\text{since } x \not\in \cv(v))
  \end{gather*}

\end{proof}

\subsubsection{Core lemmas}

\begin{lemma}[Program authority preservation]
  Let $\Psi[t]$ be a well-typed program such that $\Psi[t] --> \Psi[t']$.
  Then $\cv(t') \subseteq \cv(t)$.
\end{lemma}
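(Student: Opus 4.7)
The plan is to proceed by case analysis on the reduction rule used in $\Psi[t] \reduces \Psi[t']$. Using the fact stated earlier that $\cv(\storeof{\evalof{u}}) = \Resolve(\sigma, \cv(e) \cup \cv(u))$, the desired inclusion at the program level reduces to showing $\Resolve(\Psi, \cv(t')) \subseteq \Resolve(\Psi, \cv(t))$. Since $\Psi$ is a platform store, the matching environment $\Delta$ (with $\ts \Psi \sim \Delta$) is a platform environment, so Lemma~\ref{lm:resolve-sc-relation} applies and it suffices to exhibit a subcapturing $\Delta \ts \cv(t') \sub \cv(t)$.

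For the two administrative reduction rules \eruleref{rename} and \eruleref{lift}, I expect the two capture sets to be syntactically equal. In \eruleref{rename}, $t = \evalof{\Let{x}{y}{t_0}}$ reduces to $t' = \evalof{[x \mapsto y]t_0}$; unfolding $\cv$ on the let-binding and on the substitution both yield $\cv(e) \cup [x \mapsto \{y\}]\cv(t_0)$. In \eruleref{lift}, Barendregt gives $x \notin \cv(e)$, so the substitution $[x \mapsto \cv(v)]$ commutes with the union of $\cv(e)$ and $\cv(t_0)$, again yielding equal capture sets. In both cases the needed subcapturing follows immediately by reflexivity.

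For the rules that consume a store-bound value --- \eruleref{apply}, \eruleref{tapply}, \eruleref{unbox} --- I will use the corresponding lookup inversion lemma (Lemma~\ref{lm:inv-tm-abs}, \ref{lm:inv-tp-abs}, or \ref{lm:inv-box}) to obtain a typing of the body being substituted in. Concretely, for \eruleref{apply} with $\Psi(x) = \LAM{z:U}s$ and $t' = \evalof{[z \mapsto y]s}$, inversion yields $\Delta, z:U_0 \ts s \typ T'_0$, and a capture-prediction argument generalising Lemma~\ref{lm:answer-cs} bounds $\cv(s)$ by the capture set of the lambda's type, which is in turn bounded via $x$'s binding in $\Delta$. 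Applying \ruleref{sc-var} at $x$ then yields $\Delta \ts \cv([z \mapsto y]s) \sub \{x, y\}$, and hence $\Delta \ts \cv(t') \sub \cv(t)$. The type-variable and box cases follow by the same pattern via \lref{lm:canon-tabs} and \lref{lm:canon-box}.

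The hardest step will be the capture prediction for $\cv(s)$: the existing Lemma~\ref{lm:answer-cs} covers only the restricted shape $\storeof{a}$, whereas here $s$ is an arbitrary term. I expect to need an auxiliary lemma showing that $\Delta \ts \cv(u) \sub \cv(T)$ whenever $\Delta \ts u \typ T$, provable by a straightforward induction on the typing derivation. With that in hand, the platform discipline ensures every intermediate subcapturing step is compatible with Lemma~\ref{lm:resolve-sc-relation} (possibly by invoking Lemma~\ref{lm:platform-sc-inversion} when $\rcap$ appears on the right), and the final subset inclusion on resolved capture sets drops out.
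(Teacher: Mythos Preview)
Your overall reduction is shakier than it looks, and the \eruleref{apply} case does not go through.

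On the reduction step: Lemma~\ref{lm:resolve-sc-relation} turns a subcapturing into an inclusion of \emph{resolved} sets, $\Resolve(\Psi,\cv(t')) \subseteq \Resolve(\Psi,\cv(t))$, not into $\cv(t')\subseteq\cv(t)$. Since every value bound in a platform is closed, $\Resolver(\Psi)$ maps each platform variable to $\{\}$, so that inclusion is vacuous and tells you nothing. You would need Lemma~\ref{lm:platform-sc-inversion} instead, but even that only helps once you actually have the subcapturing in hand.

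The real gap is in \eruleref{apply}. You aim for $\Delta \ts \cv([z \mapsto y]s) \sub \{x,y\}$ by bounding $\cv(s)\setminus\{z\}$ against the capture set $C_x$ attached to $x$'s binding and then ``applying \ruleref{sc-var} at $x$''. But \ruleref{sc-var} derives $\{x\}\sub D$ from $C_x\sub D$; it does \emph{not} give $C_x\sub\{x\}$. There is no rule that folds a capture set back up into the singleton of the variable it is bound to, so the subcapturing you need is false in general. Relatedly, the auxiliary lemma you propose, $\Delta \ts \cv(u)\sub\cv(T)$ whenever $\Delta\ts u:T$, already fails at \ruleref{app}: any application $x\,y$ whose result type has an empty capture set is a counterexample.

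The paper's argument for \eruleref{apply} and \eruleref{tapply} avoids typing entirely and is purely about the definition of $\Resolve$. When $x\in\dom\sigma$ with $\sigma(x)=\LAM{z:U}s$, the resolver $\Resolver(\sigma)$ by construction substitutes $\cv(\LAM{z:U}s)=\cv(s)\setminus\{z\}$ for $x$, so
\[
\Resolve(\sigma,\{x,y\}) \;=\; \Resolve(\sigma,(\cv(s)\setminus\{z\})\cup\{y\}) \;\supseteq\; \Resolve(\sigma,\cv([z\mapsto y]s)),
\]
which is exactly the needed inclusion between $\cv(t)$ and $\cv(t')$. When $x\in\dom\Psi$, closedness of platform values gives $\cv([z\mapsto y]s)\subseteq\{y\}$ outright. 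Typing and Lemma~\ref{lm:resolve-sc-relation} enter only in \eruleref{open}, where the unbox annotation $C$ is the sole link between the revealed variable $z$ and the redex's $\cv$; there one does derive $\Gamma,\Delta\ts\{z\}\sub C$ and then applies the lemma to the inner store $\sigma$, not to $\Psi$. Your treatment of \eruleref{rename} and \eruleref{lift} is fine.
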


\begin{proof}
  By inspection of the reduction rule used.

  \begin{proofcases}
    \pcase{\eruleref{apply}}
    Then $t = \storeof{\evalof{x\,y}}$ and $t' = \storeof{\evalof{[z |-> y]s}}$.
    Note that we have both $\cv(t) = \Resolver(\sigma)(\cv(e) \cup \cv(x\,y))$
    and $\cv(t') = \Resolver(\sigma)(\cv(e) \cup \cv([z |-> y]s))$.

    If we have $x \in \dom \Psi$, then $\Psi(x) = \lambda(z:U)\,s$.
    By definition of platform, the lambda is closed and we have $\fv(s) \subseteq \{z\}$,
    which in turn means that $\cv([z |-> y]s) \subseteq \{y\} \subseteq \cv(x\,y)$.

    Otherwise, we have $x \in \dom \sigma$ and $\sigma(x) = \lambda(z:U)\,s$.
    By definition of $\cv$ we have $\Resolve(\sigma, \cv(\lambda(z:U)\,s) \cup \{y\}) \subseteq \cv(\storeof{\evalof{x\,y}})$.
    Since $\cv(\lambda(z:U)\,s) \cup \{y\} \subseteq \cv([z |-> y]s)$, we can conclude.

    \pcase{\eruleref{tapply}}
    Analogous reasoning.

    \pcase{\eruleref{open}}
    Then $t = \storeof{\evalof{\Unboxed{C}{x}}}$ and $t' = \storeof{\evalof{z}}$.
    We must have $x \in \dom \sigma$ and $\sigma(x) = \Boxed z$, since all values bound in a platform must be closed and a box form cannot be closed.
    Since $\Psi[t]$ is a well-typed program, there must exist some $\Gamma,\Delta$ such that $\Gamma$ is a platform environment and $|- \Psi[\sigma] \sim \Gamma,\Delta $.

    If $z \in \dom \sigma$, then by Lemma~\ref{lm:inv-var-lookup} and Lemma~\ref{lm:inv-box} we have $\Gamma,\Delta |- z : C\,S_z$ for some $S_z$.
    By straightforward induction on the typing derivation, we then must have $\Gamma,\Delta |- \{z\} <: C$.
    Then by Lemma~\ref{lm:resolve-sc-relation} we have $\Resolver(\sigma)(\{z\}) \subseteq \Resolver(\sigma)(C)$,
    which lets us conclude by an argument similar to the \eruleref{apply} case.

    Otherwise, $z \in \dom \Psi$.
    Here we also have $\Gamma,\Delta |- \{z\} <: C$, which implies we must have $z \in C$,
    so we have $\cv(z) \subseteq \cv(\Unboxed C x)$ and can conclude by a similar argument as in the \eruleref{apply} case.

    \pcase{\eruleref{rename}, \eruleref{lift}}
    The lemma is clearly true since these rules only shift subterms of $t$ to create $t'$.
  \end{proofcases}
\end{proof}

\begin{lemma}[Single-step program trace prediction]
  Let $\Psi[t]$ be a well-typed program such that $\Psi[t] --> \Psi[t']$.
  Then the primitive capabilities used during this reduction are a subset of $\cv(t)$.
\end{lemma}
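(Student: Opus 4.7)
The plan is to proceed by case analysis on the reduction rule used in $\Psi[t] \reduces \Psi[t']$, mirroring the case split already carried out in the preceding Program Authority Preservation lemma. The administrative rules \eruleref{rename} and \eruleref{lift} only rearrange subterms of $t$ and never consult a platform binding, so no primitive capabilities are used and the statement holds vacuously. The three remaining rules each potentially touch $\Psi$, and for each one the primitive capabilities actually exercised have a clear syntactic fingerprint inside $t$ that we can chase through the $\Resolve$ relationship established earlier.

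For \eruleref{apply} we have $t = \storeof{\evalof{x\,y}}$, and the only way a primitive capability can be exercised is if $x \in \dom \Psi$, in which case $x$ itself is the capability used. We recall the fact that whenever $\storeof{\evalof{t}}$ can reduce, $\cv(\storeof{\evalof{t}}) = \Resolve(\sigma, \cv(e) \cup \cv(t))$. Since $x \in \cv(x\,y)$ and, by well-formedness, $x \not\in \dom \sigma$ (platform and store domains are disjoint), the resolver acts as the identity on $x$, so $x \in \Resolve(\sigma, \cv(e) \cup \cv(x\,y)) = \cv(t)$. The case \eruleref{tapply} is entirely analogous, with the type argument playing no role.

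For \eruleref{open} we have $t = \storeof{\evalof{\Unboxed{C}{x}}}$, and as in the previous lemma the fact that boxes can only bind closed values forces $x \in \dom \sigma$ with $\sigma(x) = \Boxed{z}$. The only primitive capability potentially exposed is $z$, and only when $z \in \dom \Psi$. We reuse precisely the reasoning from the corresponding case of Program Authority Preservation: typing in the platform environment gives $\Gamma,\Delta \ts \{z\} \sub C$, which by \lref{lm:platform-sc-inversion} together with the fact that $z$ is bound to $\UC\,S_z$ forces $z \in C$; hence $z \in C = \cv(\Unboxed{C}{x})$, and the same resolver argument as in the \eruleref{apply} case yields $z \in \cv(t)$.

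The main obstacle is purely interpretative rather than technical: pinning down what ``primitive capabilities used during this reduction'' means so that every reduction rule has a well-defined set to compare against $\cv(t)$. Once this is fixed to mean ``names $w \in \dom \Psi$ whose binding is either consulted to take the step (for \eruleref{apply}, \eruleref{tapply}) or exposed as a new active reference by it (for \eruleref{open})'', the proof becomes a re-packaging of the subcapturing and resolver bookkeeping already carried out above, and no new invariants about $\Psi$ or $\sigma$ are needed.
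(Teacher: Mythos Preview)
Your case analysis and the arguments for \eruleref{apply}, \eruleref{tapply}, \eruleref{rename}, and \eruleref{lift} match the paper's proof almost verbatim (the paper phrases the \eruleref{apply} conclusion as ``$\{x\} \setminus \dom\sigma \subseteq \cv(t)$'' rather than going through $\Resolve$, but that is the same observation).

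The one substantive divergence is \eruleref{open}. The paper treats this case as \emph{vacuous}: under its reading, a primitive capability is ``used'' only when a platform-bound function is actually invoked, i.e.\ in \eruleref{apply} or \eruleref{tapply}. Unboxing merely exposes a reference; the use, if any, happens at a later step and is accounted for there (via authority preservation plus the \eruleref{apply} case). You instead adopt the broader reading in which exposing a platform reference counts as a use, and you correctly reproduce the $\{z\} \sub C$ argument from the authority-preservation lemma to cover it. That argument is sound, and you are right that the choice is interpretative; but it is worth knowing that the paper's intended definition makes the \eruleref{open} case trivial, so the extra subcapturing reasoning you carry out is not needed for this lemma.
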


\begin{proof}
  By inspection of the reduction rule used.

  \begin{proofcases}
    \pcase{\eruleref{apply}}
    Then $t = \storeof{\evalof{x\,y}}$.
    If $x \in \dom \sigma$, the lemma vacuously holds.
    Otherwise, $x \in \dom \Psi \setminus \dom \sigma$.
    From the definition of $\cv$, we have $\{x\} \setminus \dom \sigma \subseteq \cv(t)$.
    Since $x$ is bound in $\Psi$, we then have $x \in \cv(t)$, which concludes.

    \pcase{\eruleref{tapply}}
    Analogous reasoning.

    \pcase{\eruleref{open}, \eruleref{rename}, \eruleref{lift}}
    Hold vacuously, since no capabilities are used by reducing using these rules.
  \end{proofcases}
\end{proof}

\begin{lemma}[Program trace prediction]
  Let $\Psi[t]$ be a well-typed program such that $\Psi[t] -->^{*} \Psi[t']$.
  Then the primitive capabilities used during this reduction are a subset of $\cv(t)$.
\end{lemma}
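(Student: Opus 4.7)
The plan is a straightforward induction on the length of the reduction sequence $\Psi[t] \longrightarrow^{*} \Psi[t']$, stitching together the two preceding single-step lemmas.

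For the base case of zero steps, no primitive capabilities are used and the inclusion holds vacuously.

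For the inductive step, I would write the reduction as $\Psi[t] \longrightarrow \Psi[t''] \longrightarrow^{*} \Psi[t']$ and treat the capabilities used across the whole trace as the union of those used in the first step and those used in the tail. The single-step program trace prediction lemma directly bounds the capabilities used in $\Psi[t] \longrightarrow \Psi[t'']$ by $\cv(t)$. To invoke the induction hypothesis on the tail, I first need $\Psi[t'']$ to be well-typed, which is delivered by Preservation (Theorem~\ref{th:preservation}); the IH then bounds the capabilities used in the tail by $\cv(t'')$. Program authority preservation gives $\cv(t'') \subseteq \cv(t)$, so the tail capabilities also lie inside $\cv(t)$. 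Taking the union of the two bounds concludes the case.

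The main obstacle, such as it is, lies in threading well-typedness through the induction so that both single-step lemmas remain applicable at each intermediate configuration; this is immediate from iterated Preservation. There is no genuinely hard case here, since all of the semantic content has already been discharged in the two preceding lemmas and in Theorem~\ref{th:preservation}; this lemma only lifts the single-step guarantee to the reflexive-transitive closure by composing the step-wise bound with the monotonicity of $\cv$ along reduction.
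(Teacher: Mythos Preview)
Your proposal is correct and matches the paper's approach exactly: the paper's proof is the one-liner ``By IH, Single-step program trace prediction and authority preservation,'' which is precisely the induction you spell out. Your explicit appeal to Preservation to maintain well-typedness at intermediate configurations is a detail the paper leaves implicit, but it is the right thing to say.
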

\begin{proof}
  By IH, Single-step program trace prediction and authority preservation.
\end{proof}

\subsection {Avoidance}
Here, we restate Lemma~\ref{lemma:avoiding-let-types-exist} and prove it.

\begin{lemma} \label{lemma:avoiding-let-types-exist:proof}
  Consider a term $\Let x s t$ in an environment $\Gamma$ such that $E \ts s : \Capt{C_s}{U}$
  is the most specific typing for $s$ in $\Gamma$ and $\Gamma, x : \Capt{C_s}{U} \ts t : T$ is the most specific
  typing for $t$ in the context of the body of the let, namely $\Gamma, x : \Capt {C_s}{U}$.  Let $T'$
  be constructed from $T$ by replacing $x$ with $C_s$ in covariant capture set positions and by replacing $x$
  with the empty set in contravariant capture set positions.  Then for every type $V$ avoiding $x$
  such that $\Gamma, x : \Capt{C_s}{S} \ts T \sub V$, $\Gamma \ts T' \sub V$. 
\end{lemma}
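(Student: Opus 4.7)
The plan is to prove a strengthened, polarity-aware statement by mutual induction on the subtyping derivation. Define $T^{+}$ to be $T$ with $x$ replaced by $C_s$ at covariant capture-set positions and by $\{\}$ at contravariant ones (so $T^{+}$ coincides with the $T'$ of the statement), and let $T^{-}$ be the dual rewriting. I will prove simultaneously the two claims: (1) if $\Gamma, x : \Capt{C_s}{U} \ts A \sub B$ and $B$ avoids $x$, then $\Gamma \ts A^{+} \sub B$; and (2) if $\Gamma, x : \Capt{C_s}{U} \ts B \sub A$ and $B$ avoids $x$, then $\Gamma \ts B \sub A^{-}$. The lemma is the instance of (1) with $A = T$ and $B = V$.

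Before the main induction I would establish two subcapturing lemmas, each by induction on subcapturing. First, if $\Gamma, x : \Capt{C_s}{U} \ts C \sub C_V$ and $x \notin C_V$, then $\Gamma \ts C^{+} \sub C_V$, where $C^{+}$ substitutes $x \mapsto C_s$ in $C$. Second, if $\Gamma, x : \Capt{C_s}{U} \ts C_V \sub C$ and $x \notin C_V$, then $\Gamma \ts C_V \sub C \setminus \{x\}$. The crucial observation behind both is that by well-formedness no binding of $\Gamma$ mentions $x$, so $C_s$ is already well-formed in $\Gamma$ and the existing Strengthening of subcapturing lemma applies. In the first lemma the critical case is $x \in C$: then $\{x\} \sub C_V$ must be derived via \ruleref{sc-var} (since $x \notin C_V$ rules out \ruleref{sc-elem}), giving the premise $C_s \sub C_V$, which strengthens to $\Gamma$. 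In the second lemma, every $y \in C_V$ satisfies $y \ne x$ and, being bound in $\Gamma$, has a binding capture set that avoids $x$, so the IH recurses cleanly.

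With these in hand, the main induction is largely mechanical. \ruleref{refl} and \ruleref{top} are immediate; \ruleref{capt} applies the first capture-set lemma at the outer layer and IH (1) on the inner shape, dually for direction (2); \ruleref{trans} uses the IH on both subderivations after a subtyping inversion lemma extracts a shape for the intermediate type; \ruleref{tvar} is trivial since $x$ cannot be a type variable. The reason to mutualise (1) and (2) is \ruleref{fun}, \ruleref{tfun}, and \ruleref{boxed}: in \ruleref{fun}, the result is covariant (use IH (1)) and the argument is contravariant (use IH (2)), with the outer capture handled by the first capture-set lemma.

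The main obstacle I anticipate is the binder bookkeeping in \ruleref{fun} and \ruleref{tfun}. By the function-type inversion lemma, the body subderivation on the result lives in $\Gamma, x : \Capt{C_s}{U}, z : V_1$, whereas IH (1) demands an environment of the form $\Gamma', x : \Capt{C_s}{U}$. Since $V_1$ avoids $x$ (because $V$ does), permutation moves $z : V_1$ to the left of $x$, yielding $\Gamma, z : V_1, x : \Capt{C_s}{U}$; IH (1) then delivers $\Gamma, z : V_1 \ts T_2^{+} \sub V_2$, which together with $\Gamma \ts V_1 \sub T_1^{-}$ from IH (2) is exactly what \ruleref{fun} needs to conclude. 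Everything else should fall out of the IH and the subtyping inversion lemmas already proved.
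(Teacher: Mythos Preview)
Your proposal is correct and follows essentially the same approach as the paper: structural induction on the subtyping derivation, with the real work occurring at \ruleref{capt} where the covariant case replaces $x$ by $C_s$ (justified by inverting $\{x\} \sub C_V$ through \ruleref{sc-var}) and the contravariant case drops $x$ (justified since $V$ avoids $x$ so $x$ never appears on the left of subcapturing). Your explicit mutualization via $T^{+}/T^{-}$, the two stated subcapturing lemmas, and the binder-permutation and strengthening bookkeeping simply make precise what the paper's sketch leaves implicit.
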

\begin{proof}
  We will construct a subtyping derivation showing that $T' \sub V$.
  Proceed by structural induction on the subtyping derivation for $T \sub V$.
  Since $T'$ has the same structure as $T$, most of the subtyping derivation
  carries over directly except for the subcapturing constraints in
  \ruleref{capt}.
  
  In this case, in covariant positions, 
  whenever we have $ C_T \sub C_V$
  for a capture set $C_T$ from $T$ and a capture set $C_V$ from $V$,
  we need to show that
  that $\ts C_T[x \to C_s] \sub C_V$.
  Conversely,
  in contravariant positions, whenever we have $C_V \sub C_T$, we need to show that $C_V \sub C_T[x \to \{\}]$.  
  For the covariant case, since $x \in C_T$ but not in $C_V$, by inverting the subcapturing relation $C_T \sub C_V$, we obtain $C_s \sub C_V$.
  Hence $C_T[x \to C_s] \sub C_V$, as desired.  

  The more difficult case is the contravariant case, when we have $C_V \sub C_T$.  Here, however,
  we have that $C_V \sub C_T[x \to \{\}]$ by structural induction on the subcapturing derivation as $x$ never
  occurs on the left hand side of the subcapturing relation as $V$ avoids $x$. 
\end{proof}

\end{document}